\journalname{Statistics and Computing}
\spnewtheorem{prop}{Proposition}[section]{\bf}{\rm}
\spnewtheorem{ex}{Example}{\bf}{\rm}
\def\makeheadbox{{%
\hbox to0pt{\vbox{\baselineskip=10dd\hrule\hbox
to\hsize{\vrule\kern3pt\vbox{\kern3pt
\hbox{\bfseries pre-print}
\hbox{This is a pre-print version of this article.}
\kern3pt}\hfil\kern3pt\vrule}\hrule}%
\hss}}}
\newcommand{\x}{\textbf{x}}
\newcommand{\e}{\textbf{e}}
\newcommand{\y}{\textbf{y}}
\newcommand{\uu}{\textbf{u}}
\newcommand{\E}{\textbf{E}}
\newcommand{\X}{\textbf{X}}
\newcommand{\Y}{\textbf{Y}}
\newcommand{\ind}[1]{\boldsymbol{1}\left(#1\right)} 
\newcommand{\expect}[1]{E\left[#1\right]}             
\newcommand{\expectK}[2]{E_{#2}\left[#1\right]}       
\newcommand*{\dif}{\mathop{}\!\mathrm{d}}
\newcommand{\norm}[1]{\left\lVert #1 \right\rVert}
\DeclareMathOperator*{\argmax}{arg\,max}
\DeclareMathOperator*{\argmin}{arg\,min}
\newcommand{\eqnref}[1]{\eqref{#1}}
\newcommand{\propref}[1]{Proposition \ref{#1}}
\newcommand{\exref}[1]{Example \ref{#1}}
\newcommand{\algref}[1]{Algorithm \ref{#1}}
\newcommand{\tblref}[1]{Table \ref{#1}}
\newcommand{\figref}[1]{Figure \ref{#1}}
\newcommand{\secref}[1]{Section \ref{#1}}
\newcommand{\subsecref}[1]{subsection \ref{#1}}
\newcommand{\appref}[1]{Appendix \ref{#1}}
\begin{document}\sloppy

\title{Sequential Bayesian optimal experimental design for structural reliability analysis}

\author{Christian Agrell \and Kristina Rognlien Dahl}

\institute{
        Christian Agrell \at
            Department of Mathematics, University of Oslo, Norway\\
            \email{chrisagr@math.uio.no}\\           
            DNV GL Group Technology and Research\\
            \email{christian.agrell@dnvgl.com}           
        \and
        Kristina Rognlien Dahl \at
            Department of Mathematics, University of Oslo, Norway\\
            \email{kristrd@math.uio.no} 
}

\date{01.07.2020}

\maketitle
\begin{abstract}
    Structural reliability analysis is concerned with estimation of the probability
    of a critical event taking place, described by $P(g(\X) \leq 0)$ for some
    $n$-dimensional random variable $\X$ and some real-valued function $g$.
    In many applications the function $g$ is practically unknown, as function evaluation involves 
    time consuming numerical simulation or some other form of experiment that is expensive to perform. 
    The problem we address in this paper is how to optimally design experiments, 
    in a Bayesian decision theoretic fashion, when the goal is to estimate the probability $P(g(\X) \leq 0)$
    using a minimal amount of resources. 
    As opposed to existing methods that have been proposed for this purpose, we consider a general structural 
    reliability model given in hierarchical form. 
    We therefore introduce a general formulation of the experimental design problem, where we distinguish between 
    the uncertainty related to the random variable $\X$ and any additional epistemic uncertainty that we 
    want to reduce through experimentation. 
    The effectiveness of a design strategy is evaluated through a measure of residual uncertainty, 
    and efficient approximation of this quantity is crucial if we want to apply algorithms that search for an optimal strategy. 
    The method we propose is based on importance sampling combined with the unscented transform for epistemic 
    uncertainty propagation. We implement this for the myopic (one-step look ahead) alternative, 
    and demonstrate the effectiveness through a series of numerical experiments. 
    
    \keywords{Optimal experimental design \and Structural reliability \and Probability of failure \and
    Epistemic and aleatory uncertainty \and Unscented transform}
\end{abstract}

\section{Introduction}
In order to ensure sufficient reliability of engineered systems, such as 
buildings, ships, offshore structures, aircraft or technological products, 
uncertainties with respect to the system's capabilities 
and the system's environment must be accounted for. 
In probabilistic structural reliability analysis, this is achieved through a 
probabilistic model of the system and its environment. 
A primary objective with such a model is to 
estimate the probability that the system will fail (e.g. collapse, sink, crash or explode).
\footnote{This is rarely interpreted as a frequentist probability. 
As the model is not the real world, it is common to design models such that 
the failure probability can be interpreted as a conservative estimate, or as 
a consistent measure of robustness for comparison with other 'acceptable' systems.}  

A probabilistic structural reliability model is commonly defined through a \emph{performance function} 
(also called a \emph{limit-state function}) $g(\X)$ depending 
on some random variable $\X$. Here, $g(\X)< 0$ corresponds to system failure, and $g(\X) \geq 0$ corresponds to the system functioning. Typically, $\X$ contains the parameters describing a particular structure, such as the geometry, dimensions and material properties. 
These quantities may be random, but can be influenced by the designer of the structure. 
For example, the designer may choose to use a more expensive, but more durable material in order to improve the structural properties of the system. 
In addition, $\X$ contains the (random) parameters that characterize the systems environment,
such as wind speed, wave height etc., and parameters describing how well the model fits reality (model uncertainties).
Given $\X$ and the function $g(\cdot)$, the \emph{probability of failure} is defined as the probability $P(g(\X) < 0)$. Modern engineering requirements for safe design and operation of such systems are usually 
given as an upper bound on this probability \citep{madsen2006methods}. 

Hence, for many practical applications, the failure probability computation is an important task.
This is often challenging for complex systems, as a computationally feasible stochastic model of the 
complete system and its environment is not available. 
In our modelling framework, this comes in the form of additional \emph{epistemic} uncertainties, 
i.e. uncertainties due to limited data or knowledge that \emph{in principle} can be reduced by gathering more information.
These epistemic uncertainties usually come in one of the two forms: 
\medskip
\begin{enumerate}
    \item The function $g(\cdot)$ or the distribution of $\X$ depends on parameters that we do not know the value of.
    \item Evaluating $g(\x)$ at some single realization $\x$ of $\X$ is \emph{expensive} in terms of money and/or time. 
\end{enumerate} 
\medskip
The last part comes from the complex physical nature of failure mechanisms, 
where experiments are needed to evaluate the function $g(\x)$. 
This includes numerical computer simulations and physical experiments in a laboratory, 
which are both time consuming and expensive. 
Hence, due to the limited number of experiments that can be performed in practice, 
any method for estimating $P(g(\X) < 0)$ that relies on a large number of 
evaluations of $g(\cdot)$ is practically infeasible.
This problem is usually solved by replacing the performance function $g(\cdot)$ with 
a computationally cheap \emph{surrogate model} or \emph{emulator}
\footnote{The word 
    \emph{emulator} is often used for a surrogate model that can interpolate between 
    noiseless observations coming from a deterministic computer simulation.
}, 
constructed from a small set of experiments. 
When the surrogate model is a stochastic process (viewed as a distribution over functions), 
we can quantify the added epistemic uncertainty that comes from this simplification. 

We will assume that epistemic uncertainty is introduced to a structural reliability model, 
and that there is a way to reduce this uncertainty by performing experiments. 
The problem we address in this paper, is how to optimally estimate $P(g(\X) < 0)$ using as little resources as possible. 
In particular, we want to find an optimal strategy for the scenario where we can perform experiments sequentially,
i.e. where each experiment may depend on the preceding ones. 
The scenario where $g(\cdot)$ is replaced by a surrogate model created from a finite set of observations 
$\{ g(\x_{i}) \}_{i=1}^n$ has already been studied extensively 
\citep{Bect:2012:Sequential_design, Echard:2011:AKMCS, Bichon:2008:EGRA, Sun:2017:LIF, Jian:2017:two_acc, Perrin:2016:AL_multiple_fm, Schueremans:2005:splines_and_NN}.
The most common approach is to approximate $g(\cdot)$ using a Gaussian process, 
and make use of 
the convenient fact that a surrogate model given by the posterior predictive distribution
of the Gaussian process has a closed form solution.
However, structural reliability models are often hierarchical, and the reason why $g(\cdot)$ is expensive 
comes from one or more expensive \emph{sub-components}
\footnote{For instance, $g(\x)$ is often a function of a structures 
    capacity and the effect of loads acting on the structure, where each of which are determined from separate types of experiments.
}. 
An example is shown in \figref{fig:ex_hierarchical_model}, where $g(\x) = g(y_{1}(\x), y_{2}(\x))$.
Assume here that $\x \in \mathbb{R}^{m}$, then the index set of the Gaussian process approximation of $g(\x)$ is $m$-dimensional. 
Naturally, the number of experiments needed is highly dependent on $m$. 
If $g(\x)$ is expensive, then this must be because one (or more) of the functions, $y_{1}(\x)$, $y_{2}(\x)$ or $g(y_{1}, y_{2})$
is expensive. Very often, the effective domains\footnote{
    If for instance $y_{1}(\x) : \mathbb{R}^{m} \rightarrow \mathbb{R}$ depends only on $x_{1}, \dots, x_{n}$ for $n \leq m$, 
    the effective domain of $y_{1}$ is $n$-dimensional.} 
of these functions have dimensionality much smaller than $m$, 
so fitting a Gaussian process to observations of $g(\x)$ is not very efficient. 
There is also some practical inconvenience here, which is that some of the expensive sub-components (for instance load models)
may be applicable in different structural reliability models, so there is a potential for re-use if 
we create a surrogate model for, say $y_{1}(\x)$, instead of $g(\x)$.
\begin{figure}[H]
    \centering
        \begin{tikzpicture}
            \tikzstyle{roundnode} = [circle, minimum size=1.0cm, text centered, draw=black]
            \tikzstyle{arrow} = [thick,->,>=stealth]

            \node (x) [roundnode] {$\x$};
            \node (g) [roundnode, below of=x, yshift=-2cm] {$g$};
            \draw [arrow] (x) -- (g);

            \node (x2) [roundnode, right of=x, xshift=2.8cm] {$\x$};
            \node (y1) [roundnode, below of=x2, yshift=-0.5cm, xshift = -1cm] {$y_{1}$};
            \node (y2) [roundnode, below of=x2, yshift=-0.5cm, xshift = 1cm] {$y_{2}$};
            \node (g2) [roundnode, below of=y1, yshift=-0.5cm, xshift = 1cm] {$g$};

            \node (lbl1) [above of=x] {Single layer model};
            \node (lbl2) [above of=x2, yshift = 0.037cm] {Hierarchical model};

            \draw [arrow] (x2) -- (y1);
            \draw [arrow] (x2) -- (y2);
            \draw [arrow] (y1) -- (g2);
            \draw [arrow] (y2) -- (g2);
        \end{tikzpicture}
    \caption{Left: Single layer model. Right: Example of an hierarchical (2 layer) model where $g(\x) = g(y_{1}(\x), y_{2}(\x))$.}
    \label{fig:ex_hierarchical_model}
\end{figure}
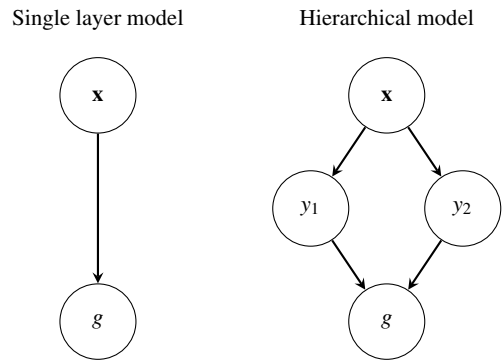
In this paper we will work with hierarchical models as the one illustrated in \figref{fig:ex_hierarchical_model}, 
where we assume that some of the intermediate variables are stochastic processes with epistemic (potentially reducible) 
uncertainty. Note that this also covers case where we just introduce additional epistemic variables into the model. 
Actually, in the approximate numerical solution we propose in this paper, these two problems become equivalent. 
Moreover, as Gaussianity generally is lost in the hierarchical setting, we will only make assumptions 
on existence of second order moments of the stochastic processes used as surrogates.
We will present a general formulation of the problem of finding an optimal strategy for performing 
experiments based on Bellman's principle of optimality,
and discuss some alternative routes for solving such problems. 
For the myopic (one step look-ahead) strategy, we propose an efficient numerical procedure, 
based on finite dimensional approximation of the stochastic processes and uncertainty 
propagation using the unscented transform.  

The structure of the remaining part of the paper is as follows: 
Through \secref{sec:problem_formulation} and \secref{sec:modelling_information} we develop the Bayesian optimal experimental design problem 
for a general structural reliability model. We introduce a framework for separation of aleatory and 
epistemic uncertainties using conditional expectations, from which we can express any type of experiment 
associated with a structural reliability problem.
For the purpose of estimating a failure probability, we consider three alternative optimization objectives, 
and in \secref{sec:modelling_information} we 
discuss how the experimental design problem may be tackled using dynamic programming 
and the myopic approximation. Optimization problems of this form will involve evaluation of a
\emph{measure of residual uncertainty}, and in \secref{sec:approx_H} we present an approach 
for approximating this quantity. 
We implement this in \secref{sec:myopic_num} to develop an efficient numerical procedure 
for myopic scenario, which we illustrate through a series of examples in \secref{sec:num_exp}. 
Finally, our concluding remarks are given in \secref{sec:concluding_remarks}, and 
some supporting material used throughout the paper is included in the Appendices. 

\section{Problem formulation}
\label{sec:problem_formulation}
Given a probabilistic surrogate of a structural reliability model, we are interested in 
how to optimally improve the model for failure probability estimation, given a fixed experimental budget. 
More generally, given a structural reliability model with epistemic uncertainty (e.g. as introduced when using a surrogate),
and a set of possible experiments than can be performed, we want to select the experiments in an optimal manner. 
The choice of experiment is called a \emph{decision}, $d \in \mathbb{D}$ where $\mathbb{D}$ is a space of feasible decisions. Note that this set may include different kinds of decisions, such as performing computer experiments, lab experiments or performing physical measurements in the field. 

In the following subsections we present a rigorous formulation of the Bayesian optimal experimental design problem for structural reliability analysis.
Here we will need a way to express uncertainty about the \emph{performance function} used in structural reliability models, 
and a way to model uncertainty about future outcomes of potential experiments that can be made. 
For this purpose we will define a model $(\xi, \delta)$, where 
\begin{itemize}
    \item $\xi$ is a stochastic representation of the \emph{performance function} $g(\x)$
    evaluated at some fixed input $\x$. 
    \item $\delta(d)$ is a predictive model of experimental outcomes given a decision $d$.  
    In other words, $\delta$ models the data generating process of potential experiments.
\end{itemize}
We will consistently write $\X$ as a random variable with values in $\mathbb{R}^{m}$, 
and let $\x$ be a deterministic realization. 
$\xi$ and $\delta$ are stochastic processes, indexed over inputs $\x$ and decisions $d$ respectively. 
In structural reliability analysis, we are interested in the random variable $g(\X)$,
and likewise we will consider $\xi(\X)$, but now where $\xi(\x)$ is also random for any fixed $\x$. 
As the purpose of performing experiments will be to provide information about $\xi$, note that
$\xi$ and $\delta$ are generally not independent.

A detailed description of how $(\xi, \delta)$ is constructed is provided in the following subsections. 

\subsection{Structural reliability analysis}
\label{sec:SRA}
Let $\mathbb{X} \subseteq \mathbb{R}^m$, and let $\X$ be a random variable on the probability space $(\Omega, \mathcal{F}, P)$ with values in $\mathbb{X}$
and $g : \mathbb{X} \rightarrow \mathbb{R}$ a measurable function. 
We call $g$ the \emph{performance function} or \emph{limit state}, with the associated failure set 
\begin{equation*}
    F_{g} = \{ \x \in \mathbb{X} \ | \ g(\x) \leq 0 \}.
\end{equation*}
In structural reliability analysis, we are interested in estimating the \emph{failure probability},
which we here denote $\bar{\alpha}$. It is defined as 
\begin{equation}
    \label{eq:pof_def}
    \bar{\alpha}(g) = P(F_{g}) = \expect{\ind{g(\X) \leq 0}},
\end{equation}
where $\expect{\cdot}$ denotes the expectation with respect to $P$ and $\ind{\cdot}$ is the indicator function. 

In most real-world cases it is difficult to derive an analytical expression for the failure probability. To overcome this, several approximation and simulation methods have been suggested, 
see e.g. \cite{madsen2006methods} or \cite{Huang:2017:SRA}. 
Two traditional methods are the first- and second-order reliability method (FORM/SORM), where 
the failure boundary is approximated at a specific point using a Taylor expansion up to the first and second order 
respectively. Different sampling procedures have also been developed, which often make use of intermediate results 
obtained from FORM/SORM. 
Other relevant techniques involve the construction of environmental contours and
the estimation of buffered failure probabilities as in \citep{DahlHuseby}.
In this paper, our focus is different from these methods in the sense that we are mainly interested in how to estimate the failure probability as well as possible, given a limited experimental budget. To do so, we need to separate between different kinds of uncertainty in our model.

\subsection{Separating epistemic and aleatory uncertainties}
\label{sec:al_ep}
Ideally, the uncertainty related to the random variable $g(\X)$ in \eqnref{eq:pof_def} is \emph{aleatory}, 
in the sense that that it relates to inherent variability of the physical 
phenomenon that is being modelled, but in reality we must also include \emph{epistemic}
uncertainty due to lack of information or knowledge. 
For instance, assume that $g(\x, \e)$ depends on the aleatory variable $\x$ and some fixed but unknown parameter $\e$. 
Assume further that $\X$ is the aleatory random variable representing variability in $\x$, 
$\E$ is the epistemic random variable representing our belief about $\e$, and that 
$\X$ and $\E$ are independent with laws $P_{x}$ and $P_{e}$. 
It is then relevant to view the failure probability as a random quantity with epistemic uncertainty, 
$\alpha(\E) = \int \ind{g(\x, \E) \leq 0} P_{x}(d\x)$. 
For engineering applications, one would then typically be interested in some specified upper 
percentile values of $\alpha(\E)$, i.e. ensuring that the epistemic uncertainty is under control.

In the following, we will assume that we have a performance function $\xi(\cdot)$ that depends on 
a strictly aleatory random variable $\X$, and some other random quantity with epistemic uncertainty.
We will need to formulate this with a bit of generality, in order to cover the different ways 
epistemic uncertainty can be introduced in a structural reliability model. 

As in \secref{sec:SRA} we will work with $(\Omega, \mathcal{F}, P)$ as the global probability space, 
capturing all forms of uncertainty. We then let $\mathcal{A}$ and $\mathcal{E}$ be two 
sub $\sigma$-algebras representing respectively aleatory and epistemic information, 
and we assume that $\X$ is $\mathcal{A}$-measurable. 
Furthermore, for any $\X \in \mathbb{X}$ we assume that $\xi(\X)$ is $\mathcal{E}$-measurable. 
That is, $\xi : \mathbb{X}\times\Omega \rightarrow \mathbb{R}$ is a stochastic process indexed by $\X \in \mathbb{X}$ (this is also called a random field), 
and $\xi(\X)$ is a real-valued random variable. 
We will write $\xi(\cdot)$ instead of $g(\cdot)$ whenever epistemic uncertainty has been introduced, 
as for instance in the canonical case where a deterministic performance function $g(\cdot)$ is approximated with a probabilistic surrogate $\xi(\cdot)$.

We can now define the failure probability with epistemic uncertainty as the $\mathcal{E}$-measurable random variable 
\begin{equation}
    \label{eq:pof_def_cond}
    \alpha(\xi) = \expect{\ind{\xi(\X) \leq 0} \ | \ \mathcal{E}}.
\end{equation}
Note that \eqnref{eq:pof_def_cond} coincides with \eqnref{eq:pof_def} in the case where the performance 
function is not affected by epistemic uncertainty, and in general as $\bar{\alpha}(\xi) = \expect{\alpha(\xi)}$ because
\begin{equation}
    \begin{array}{lll}
    \expect{\alpha(\xi)} &=& \expect{\expect{\ind{\xi(\X) \leq 0}} \ | \ \mathcal{E}} \\
    &=& \expect{\ind{\xi(\X) \leq 0}} \\
    &=& \bar{\alpha}(\xi),
    \end{array}
\end{equation}
\noindent where the second equality uses the double expectation property.

In the following we will just write $\alpha$ or $\bar{\alpha}$ without the dependency on $\xi$ when there is no risk of confusion. 
\begin{ex}
    Assume $\xi$ is a deterministic function of the aleatory random variable $\X$ and epistemic random variable $\E$, 
    both defined on $(\Omega, \mathcal{F}, P)$.
    Then $\mathcal{A} = \sigma(\X)$ and $\mathcal{E} = \sigma(\E)$, i.e., the $\sigma$-algebras generated by the random variables $\X$ and $\E$ respectively.
    \label{ex:two_rv}
\end{ex}
Note that the converse of \exref{ex:two_rv} also holds true, as we can always view 
$\xi$ as a deterministic function applied to two random variables $\X$ and $\E$. 
That is, where $\xi(\x, \e)$ is a deterministic function for $\x$ and $\e$ fixed, 
and we can write the stochastic process $\xi(\x, \omega)$ as $\xi(\x, \E)$. It is sometimes useful 
to think of $\xi$ in this way. In particular, the numerical approximation we propose later 
in this paper is based on obtaining a finite dimensional approximation of $\E$.
\begin{ex}
    Let $g$ be given as in the hierarchical model in \figref{fig:ex_hierarchical_model}, and 
    $\X$ a random variable defined on some measure space $(\Omega_{x}, \mathcal{F}_{x}, P_{x})$. Assume that $y_{1}$ and $y_{2}$ are 
    expensive to evaluate, so we replace them with surrogate models in the form of two stochastic processes 
    $\widetilde{y_{1}}$ and $\widetilde{y_{2}}$ defined on another measure space $(\Omega_{y}, \mathcal{F}_{y}, P_{y})$. Note that we assume that both $\widetilde{y_{1}}$ and $\widetilde{y_{2}}$ are defined on the same measure space.
    Then, the measure space for the experimental design problem is given by $(\Omega, \mathcal{F}, P) = (\Omega_{x} \times \Omega_{y}, \mathcal{F}_{x} \otimes \mathcal{F}_{y}, P_{x} \times P_{y})$, 
    $\mathcal{A} = \mathcal{F}_{x}$ and $\mathcal{E} = \mathcal{F}_{y}$ (up to isomorphism), 
    and we would write $\xi(\x) = g(\widetilde{y_{1}}(\x), \widetilde{y_{2}}(\x))$.
    \label{ex:stoch_proc}
\end{ex}

\subsection{Decisions, outcomes and experiments}
We are interested in the case where the epistemic uncertainty in $\alpha$ can be reduced
by running experiments. For instance, in \exref{ex:two_rv} the epistemic variable $\E$ could 
be a fixed but unknown parameter, and maybe additional measurements could be performed 
to reduce the uncertainty in $\E$. Or in \exref{ex:stoch_proc}, additional experiments could 
be performed to infer the values of $y_{1}$ or $y_{2}$ at some given input $\x'$, in 
order to reduce uncertainty in the surrogate models $\widetilde{y_{1}}$ and $\widetilde{y_{2}}$.

These are examples of possible \emph{decisions} we could make to reduce epistemic uncertainty. 
We will let $\mathbb{D}$ denote the set of all possible decisions, and $\mathbb{O}$ the 
set of all possible \emph{outcomes}. For any decision $d \in \mathbb{D}$, the corresponding 
outcome is uncertain a priori, and in order to evaluate the potential impact of a decision 
we will need to specify (possibly subjectively) a distribution representing the possible outcomes.
We will let $\delta(d)$ denote the random outcome of a decision $d \in \mathbb{D}$ with values in $\mathbb{O}$. 
For any realization $o \in \mathbb{O}$ of $\delta(d)$, we will refer to the pair $(d, o)$ as an \emph{experiment}. 

In our modelling framework, we will assume that $\xi(\x)$ as defined in \secref{sec:al_ep} is provided 
together with $(\Omega, \mathcal{F}, P)$ and the sub $\sigma$-algebras $\mathcal{A}$ and $\mathcal{E}$, 
and that a decision process $\delta(d)$ is given where $\delta(d)$ is $\mathcal{E}$-measurable for any $d \in \mathbb{D}$. 
Table \ref{table: notation} gives an overview of the notation we have introduced so far, in order to define the problem of optimal experimental design for structural reliability analysis.
\begin{ex}
    Continuing from \exref{ex:stoch_proc}, assume that noise perturbed observations of $y_{1}$ 
    can be made. Let $d(\x) = \{ \text{observe } y_{1}(\x) \}$, and define $\mathbb{D}$ as the union of such events for all $\x$. 
    If we assume that observations come with additive noise, $o(\x) = y_{1}(\x) + \epsilon(\x)$, for some 
    specified noise process $\epsilon$, then we can let $\delta(d(\x)) = \widetilde{y_{1}}(\x) + \epsilon(\x)$.
    In a similar fashion, $\mathbb{D}$ and $\delta(d)$ could be extended to include observations of $y_{2}$ as well.
    \label{ex:obs_proc}
\end{ex}
We will note that the noise-free alternative to \exref{ex:obs_proc}, i.e. the case where $\epsilon \equiv 0$, 
is a common scenario when dealing with deterministic computer simulations. 
Another related scenario that is also of relevance here, is that of muiltifidelity modelling \citep{Fernandez:2017:MFM_review}, 
in which case inaccurate estimates of $y_{1}(\x)$ could be available at the same time, but at a lower cost. 
\begin{table*}[ht]
    \footnotesize
    \centering
    \caption{Overview of the framework for the optimal experimental design problem for structural reliability analysis} 
    \begin{tabular*}{\textwidth}{ >{\centering\arraybackslash}p{0.1\textwidth} | p{0.5\textwidth} | p{0.4\textwidth} }
    \toprule
    Symbol & Description & Type \\
      \hline
      $\X$ & Parameters describing structure and environment & $\mathbb{R}^m$-valued random variable  \\
      $\x$ & Deterministic realization of $\X$ & values in $\mathbb{R}^m$ \\
      $g(\x)$ & Performance function of structure & real-valued \\
      $F_g$ & Failure set of $g(\cdot)$ & subset of $\mathbb{R}^m$ \\
      $\bar{\alpha}(g)$ & The failure probability, $P(F_g)$ & values in $[0,1]$ \\
      $\xi(\x)$ & Stochastic approximation of $g(\cdot)$ & real-valued stochastic process \\
      $\alpha(\xi)$ & The failure probability with epistemic uncertainty & values in $[0,1]$ \\
      \hline
      $d$ & Decision & contained in set of decisions $\mathbb{D}$ \\
      $o$ & Outcome of experiment & contained in set of outcomes $\mathbb{O}$ \\
      $(d,o)$ & Summary of an experiment & contained in $\mathbb{D} \times \mathbb{O}$\\
      $\delta(d)$ & Model of experiment outcomes & $\mathbb{O}$-valued stochastic process  \\
      \hline
      \E & Parameters for epistemic uncertainty, independent of $\X$ & random variable  \\
      $\mathcal{A}$ & Aleatory information & $\sigma$-algebra \\
      $\mathcal{E}$ & Epistemic information & $\sigma$-algebra \\
      $(\xi, \delta)$ & The model &  $(\mathbb{R}\times\mathbb{O})$-valued \\
      \bottomrule
    \end{tabular*}
    \label{table: notation}
\end{table*}

\subsection{Sequential model updating}
\label{sec:model_update}
Now, having defined a random variable $\X$ and the two processes $\{\xi(\x)\}_{\x \in \mathbb{X}}$ and $\{\delta(d)\}_{d \in \mathbb{D}}$, 
we want to perform a sequence of experiments, $(d_{0}, o_{0}), (d_{1}, o_{1}), \dots$, 
and update $\xi$ and $\delta$ accordingly. 

We let $I_{k} := \{ (d_{0}, o_{0}), \dots, (d_{k-1}, o_{k-1}) \}$
denote the information or history up to the $k$-th experiment, and define $\mathcal{E}_{k}$ as the $\sigma$-algebra 
generated by $\mathcal{E}$ and $I_{k}$. Hence, $\mathcal{E}_{k}$ is all the information regarding epistemic quantities 
that is available after $k$ experiments. We introduce the notation $P_{k}(\cdot)$ and $\expectK{\cdot}{k}$ to 
denote the conditional distribution $P(\cdot \ | \ \mathcal{E}_{k})$ and conditional expectation $\expect{\cdot \ | \ \mathcal{E}_{k}}$
given the updated information $\mathcal{E}_{k}$.
For convenience we define $I_{0} = \emptyset$, so that we can use the index $k=0$ with these definitions 
for the scenario before any experiment has been made. 
We will write $\xi_{k}$ and $\delta_{k}$ as the updated processes $\xi | I_{k}$ and $\delta | I_{k}$ 
corresponding to $P_{k}$. Per definition, 
\begin{equation*}
    (\xi_{k+1}, \delta_{k+1}) =  (\xi_{k}, \delta_{k}) \ | \ d_k, o_k
    = (\xi_{0}, \delta_{0}) \ | \ I_k, d_k, o_k.
\end{equation*}
In the following example, we show how this sequential update can be done via Bayes' theorem.
\begin{ex}
    Let $k \in \mathbb{N}$. Assume $(\xi, \delta)$ admits a joint probability density at any finite subset of $\mathbb{X}\times \mathbb{D}$  
    with respect to $P_{k}$, which we write $p_{k}(\xi, \delta)$ for short. 
    E.g. $p_{k}(\xi)$ means $$P_{k}\left( \left( \xi(\x^{(1)}), \dots, \xi(\x^{(n)}) \right) = \left( \xi^{(1)}, \dots, \xi^{(n)} \right) \right)$$
    for some $\x^{(1)}, \dots, \x^{(n)} \in \mathbb{X}$ and $\xi^{(1)}, \dots, \xi^{(n)} \in \mathbb{R}$. Then $p_{k}(\xi) = p_{0}(\xi_{k})$, $p_{k}(\delta) = p_{0}(\delta_{k})$, and the update of the probabilities is done by using Bayes' theorem:
    \begin{equation}
    \begin{split}
            p_{k+1}(\xi)  = p_{k}(\xi | d_k, o_k) = \frac{p_{k}(o_k | \xi, d_k)p_{k}(\xi)}{p_{k}(o_k | d_k)}, \\ 
            p_{k+1}(\delta) = p_{k}(\delta | d_k, o_k) = \frac{p_{k}(o_k | d_k, \delta) p_k(\delta)}{p_{k}(o_k | d_k)},
    \end{split}
    \end{equation}
    where $p_{k}(\cdot | \cdot)$ is the relevant density with respect to $P_{k}$.
\end{ex}
\begin{ex}
    For a specific problem there will typically be simpler ways of updating the model than the generic 
    formulation given in the previous example. 
    Continuing again from \exref{ex:stoch_proc} and \exref{ex:obs_proc}, assume $\delta(d) = \delta(\x, \widetilde{y_{1}}, \widetilde{y_{2}})$ corresponds to 
    observing $\widetilde{y_{1}}(\x) + \epsilon_{1}(\x)$ or $\widetilde{y_{2}}(\x) + \epsilon_{2}(\x)$. 
    Then $\widetilde{y_{1}}$ and $\widetilde{y_{2}}$ can be updated directly, and we 
    let $\xi | I_k = g(\widetilde{y_{1}} | I_k, \widetilde{y_{2}} | I_k)$ and $\delta | I_k = \delta(\x, \widetilde{y_{1}} | I_k, \widetilde{y_{2}} | I_k)$.
    
    In fact, if $\widetilde{y_{1}}$ and $\widetilde{y_{2}}$ and the noise terms $\epsilon_{1}$ and $\epsilon_{2}$ 
    are all Gaussian processes, then $\widetilde{y_{1}} | I_k$ and $\widetilde{y_{2}} | I_k$ are also Gaussian
    and closed form representations are available (see \appref{app:GP}).
    Note that in this case the model update could include updating the Gaussian process hyperparameters as well. 
\end{ex}

\subsection{Optimization objective}
\label{sec:opt_obj}
Following the formulation of \cite{Bect:2012:Sequential_design, Bect:2019:Supermartingale}, 
a strategy for uncertainty reduction starts with a \emph{measure of residual uncertainty} for the quantity of interest 
after $k$ experiments. This is a functional 
\begin{equation}
    H_{k} = \mathcal{H}(P_k)
\end{equation}
of the conditional distribution $P_k$. In this paper we will consider three specific alternatives for $H_k$. 

Assume $k$ experiments have been performed, resulting in the updated probabilistic model $(\xi_k, \delta_k)$.
The updated failure probability according to \eqnref{eq:pof_def_cond} can then be defined as
\begin{equation}
    \alpha_k = \alpha(\xi_k) = \expectK{\ind{\xi(\X) \leq 0}}{k}, \ \ \bar{\alpha}_k = \expect{\alpha_k}.
\end{equation}
As we are interested in reducing uncertainty in $\alpha$, a natural optimization objective is to minimize
$\text{Var}(\alpha_k) = \expect{(\alpha_k - \bar{\alpha}_k)^2}$. 
However, computation of $\text{Var}(\alpha_k)$ can be problematic in practice. Most of the proposed 
methods for design of experiments in (non-hierarchical) structural reliability models 
therefore make use of alternative \emph{heuristic} optimization objectives. That is, some 
alternative function $H_k(\cdot)$ that is easier to compute than $\text{Var}(\alpha_k)$, 
and where the design that minimizes $H_k(\cdot)$ hopefully also performs well with respect to $\text{Var}(\alpha_k)$.

\cite{Bect:2012:Sequential_design} present a few such criteria, some of which
will also be considered in this paper. Let 
\begin{equation}
    \label{eq:pk_lambda_k}
    \begin{split}
        &p_k(\X) = P_k(\xi(\X) \leq 0), \\ &\gamma_k(\X) = p_k(\X)(1 - p_k(\X)).
    \end{split}
\end{equation}
Observe that 
\begin{equation}
\begin{array}{lll}
\text{Var}(\ind{\xi_k(\x) \leq 0}) &=& \expect{(\ind{\xi_k(\x) \leq 0})^2} - \\
& & \expect{\ind{\xi_k(\x) \leq 0}}^2\\
&=& \expect{(\ind{\xi_k(\x) \leq 0})} - p_k(\x)^2 \\
&=& p_k(\x) - p_k(\x)^2 \\
&=& \gamma_k(\x),
\end{array}
\end{equation}
\noindent and also that $\gamma_k(\x) / 2$ is the probability that two i.i.d. samples from $\xi_k(\x)$ have the same sign. 
Hence, $\gamma_k$ provides a measure of how accurate $\xi_k(\x)$ is around the critical value $\xi_k = 0$.
We will introduce two measures of residual uncertainty based on taking the expectation of $\gamma_k$ with respect 
the distribution of $\X$, which we denote $P_{\mathbb{X}}$. 
In total, we will consider the following three alternatives for $H_k$:
\begin{equation}
    \begin{split}
        H_{1, k} &= \expectK{(\alpha - \bar{\alpha})^2}{k}, \\ 
        H_{2, k} &= \int_{\mathbb{X}} \gamma_k \dif P_{\mathbb{X}} = \expect{\gamma_k}, \\
        H_{3, k} &= \left(\int_{\mathbb{X}} \sqrt{\gamma_k} \dif P_{\mathbb{X}}\right)^2 = \expect{\sqrt{\gamma_k}}^2.
    \end{split}
\end{equation}
Here $H_{2, k}$ and $H_{3, k}$ can also be motivated by realizing that 
they serve as upper bounds on $H_{1, k}$. In fact, $H_{1, k} \leq H_{3, k} \leq H_{2, k}$
(see Proposition 3 in \citep{Bect:2012:Sequential_design}).

For optimal design of experiments we will consider loss functions given by the above measures of 
residual uncertainty, potentially in combination with an additional penalty term that represents the 
cost of performing a given experiment. 
In the Bayesian decision-theoretic framework, given such a loss function depending on a \emph{policy} for selecting experiments $\pi$, 
we can evaluate the policy by looking $n$-steps ahead. 
For instance, a relevant loss function for minimizing uncertainty in $\alpha$ after $n$ additional 
experiments, following after the current experiment $k$, could be given as $J_k(\pi) = \expectK{H_{1, k+n}}{k}$ 
where $\mathcal{E}_{k+n}$ corresponds to following the policy $\pi$. The additional notation introduced
with respect to the measure of residual uncertainty and sequential model updating is summarized in \tblref{table: notation_sequential}.

\begin{table*}[ht]
    \footnotesize
    \centering
    \caption{Overview of the framework for the optimal experimental design problem for 
    structural reliability analysis with sequential model updating}
    \begin{tabular*}{\textwidth}{ >{\centering\arraybackslash}p{0.15\textwidth} | p{0.35\textwidth} | p{0.5\textwidth} }
    \toprule
    Symbol & Description & Type \\
      \hline
      $I_k$ & Information up to $k$'th experiment & sequence of decisions and outcomes\\ 
      $\mathcal{E}_k$ & Information given $\xi$ and $I_k$ & $\sigma$-algebra \\
      $P_k$ & Conditional probability given $\xi_k$ & values in $[0,1]$ \\
      $\xi_k$ & Update of $\xi$ given $I_k$ & stochastic process indexed by $k$ \\
      $\delta_k$ & Update of $\delta$ given $I_k$ & stochastic process indexed by $k$ \\
      $H_k = \mathcal{H}(P_k)$ & Measure of residual uncertainty & functional from space of probability distributions to $\mathbb{R}$\\
      $\alpha_k$ & Updated epistemic failure probability & values in $[0,1]$ \\
      $\bar{\alpha}_k$ & Updated expected failure probability & values in $[0,1]$ \\
      \bottomrule 
    \end{tabular*}
    \label{table: notation_sequential}
\end{table*}

\section{Modelling information and experimental design}
\label{sec:modelling_information}
In this section, we introduce the experimental design framework and explain how the development of information is modelled in this context. 
In the following, let $k=0, 1, \ldots, K-1$ be the \emph{experiment index} which keeps track of the number of performed experiments.

\subsection{The dynamic programming formulation}
\label{sec:dynamic}
\cite{Huan:2016:SBOED_dyn_prog} introduce a general framework for sequential optimal experimental design: 
Let the \emph{state}\footnote{
    In \citep{Huan:2016:SBOED_dyn_prog} the \emph{state} is written as $s_k = (s_k^{(b)}, s_k^{(p)})$, 
    where $s_k^{(b)}$ denotes the \emph{uncertainty state} and  
    $s_k^{(p)}$ denotes the \emph{physical state} that describes any additional deterministic decision-relevant variables.
    Herein we will not write $s_k$ specifically in this form.
} 
\emph{of the system} after experiment $k-1$ be denoted by $s_k$. 
The input (decided by the experimental designers) to experiment $k$ is denoted by $d_k$. 
We want to determine a \emph{policy} 
\begin{equation*}
    \pi := (\pi_0, \pi_1, \ldots, \pi_{K-1})    
\end{equation*}
where $d_k=\pi_k(s_k)$. That is, given the current state of the system, 
the policy is a function which tells the experimental designer the input to the next experiment.

From each experiment, we get \emph{observations} $o_k$. These observations may include measurement 
noise and modelling errors. Associated to each experiment, we have a \emph{stage reward} $R_k(s_k, o_k, d_k)$. 
The stage reward reflects the cost of doing the experiment (measured in e.g. money or time) plus any additional 
benefits or penalties of doing the experiment (measured in the same unit). 
Furthermore, we have a \emph{terminal reward} $R_K(s_K)$ only depending on the final state of the system. 

In order to model the development of the system of experiments, we have the \emph{system dynamics}:
\begin{equation*}
    s_{k+1} = \mathcal{F}(s_k, d_k, o_k)    
\end{equation*}
where $\mathcal{F}(\cdot)$ is some function specifying the transition from a current state to a new state based on the performed experiment.

The optimal experimental design problem can then be formulated as follows:
\begin{equation}
\label{eq:dyn_prog}
    \begin{split}
        &\textit{Maximize} \\ 
        &E_{o_0, \ldots, o_{K-1}} \left[ \sum_{k=0}^{K-1}  R_k(s_k,o_k,\pi_k(s_k)) + R_K(s_K) \right] \\
        &\textit{such that} \\ 
        &s_{k+1} = \mathcal{F}(s_k, d_k, o_k),
    \end{split}
\end{equation}

and the maximization is done over all policies $\pi$ that do not look into the future. 
That is, when deciding policy $\pi_k$, only what is know up to experiment $k-1$ can be used. 
Another way of saying this is that the policy $\pi$ should be \emph{adapted} to the filtration 
generated by the processes $\{s_k\}, \{o_k\}$ and $\{d_k\}$. Note that the optimization problem is over the whole 
experimental design period and is based on that there is an initial number $K$ of experiments that are to be performed. 
The experimental design policy is chosen in order to maximize the total reward of all of the experiments, 
as opposed to simply doing what is optimal in the next experiment (without taking the future experiments into account). 

To adapt this framework to the experimental design problem for structural reliability analysis, we write
\begin{equation}
    \begin{array}{lll}
        s_k = (\xi_k, \delta_k, I_k), \\[\medskipamount]
        d_k = \pi_k(s_k), \\[\medskipamount]
        o_k = \delta_k (d_k),
    \end{array}
\end{equation}
and where the dynamics $s_{k+1} = \mathcal{F}(s_k, d_k, o_k)$ is given by updating $\xi_k$, $\delta_k$ and $I_k$
with respect to the experiment $(d_k, o_k)$ as described in \subsecref{sec:model_update}.

\begin{remark}
\label{remark: backwards}
    Note that the expectation in \eqref{eq:dyn_prog} is with respect to future outcomes $o_0, \ldots, o_{K-1}$
    which a priori are uncertain, and where each outcome $o_k$ depends on the previous outcomes $o_0, \dots, o_{k-1}$.
    An equivalent formulation can be given in terms of conditional expectations.  
    Let each reward be defined by backwards induction:
    \begin{equation*}
        R_k = \max_{d} \expectK{R_{k+1} | d_{k}=d}{k}, \ \  k = K-1, \dots, 0,
    \end{equation*}
    where $R_{K} = R_{K}(s_k)$ only depends on the final state of the system. 
    Then, the policy defined by selecting for each $k$ the decision
    \begin{equation*}
        \begin{split}
        d^{*}_{k} 
        &= \argmax_{d \in \mathbb{D}} \expectK{R_{k+1} | d_k = d }{k} \\
        &= \argmax_{d \in \mathbb{D}} \expectK{\max E_{k+1} \cdots E_{K}R_K | d_k = d }{k}    
        \end{split}
    \end{equation*}
    is optimal. This corresponds with the formulation used by \cite{Bect:2012:Sequential_design}.
\end{remark}

Problem \eqref{eq:dyn_prog} is a dynamic programming problem. 
Though theoretically optimal, such problems are known for suffering form the so-called \emph{curse of dimensionality}. 
That is, the number of possible sequences of design and observation realizations grow exponentially with the dimension of the state space. 
According to \cite{Defourny:2011:Multistage_stoch_prog}, the curse of dimensionality implies that dynamic programming can only be 
solved numerically for state spaces embedded in $\mathbb{R}^d$ with $d \leq 10$. 
Therefore, such problems can often only be solved approximately via \emph{approximate dynamic programming}, see \cite{Huan:2016:SBOED_dyn_prog}. Note also that this type of formulation is based on a \emph{Markovianity} assumption, i.e., that there is no memory in the dynamics of the system. This assumption is necessary in order to perform the simplification to only having dependency on the current state of the system in Remark \ref{remark: backwards}. If the system is not Markovian, in the sense that the decision at any time depends not only on the current state of the system, but also on some of the previous states of the system, we cannot solve the experimental design problem by backwards induction. The reason for this is that the Bellman equation, which backwards induction is based on, does not hold in this case. In such cases, the experimental design problem can for instance be solved via the maximum principle, see e.g. \cite{Dahletal} for an example of systems with memory in continuous time.

\begin{remark}
    An alternative solution method to dynamic programming for problem \eqref{eq:dyn_prog} 
    is to use a scenario tree based approach, see \cite{Defourny:2011:Multistage_stoch_prog}. 
    Scenario tree based approaches are not sensitive to curse of dimensionality based on the state space, 
    but based on the number of experiments. Hence, a scenario based approach can be attempted whenever there are few experiments 
    (less than or equal $10$), but potentially a large dimensional state space (greater than $10$). 
    If the number of experiments is large (greater than $10$), but the state space dimension is small (less than or equal $10$), 
    dynamic programming is a viable solution method. If both the state space dimension and the number of experiments is large, 
    one can try approximate dynamic programming (see \cite{Huan:2016:SBOED_dyn_prog}) or a myopic formulation as an 
    alternative to the dynamic programming one. In Section \ref{sec:myopic}, we consider such a myopic formulation.
\end{remark}

Note that problem \eqref{eq:dyn_prog} is maximization problem of a reward, but can trivially 
be transformed to a minimization problem with some loss function $L_{k} = -R_{k}$ instead. 
For the application considered in this paper, we are interested in minimization problems
associated with the residual uncertainty described in \secref{sec:opt_obj}. 

\begin{ex}
    Let $\lambda(d_k)$ denote the cost of decision $d_k$. A relevant set of loss functions 
    could then be: $L_{k}(s_k, d_k,$ $o_k)$ $= 0$ for $k < K$ and $L_K = H_K \cdot \sum_{k < K} \lambda(d_k)$,
    where $H_K = H_{1, k}, H_{2, k}$ or $H_{3, k}$ as described in \secref{sec:opt_obj}. 
    Or, letting $L_{k}(s_k, d_k, o_k) = \eta^k \lambda(d_k)H_{k}$ for $k < K$ where $\eta$
    is some discount factor, $\eta \in (0, 1)$, would produce a similar but more greedy policy.
    Another relevant alternative is to define $L_K = \sum_{k < k^*} \lambda(d_k)$ as the sum of costs 
    up to the iteration $k^*$ where some target level, $H_{k} < H^*$ for $k > k^*$, has been reached.  
\end{ex}

\subsection{The myopic formulation}
\label{sec:myopic}

As mentioned in Section \ref{sec:dynamic}, the dynamic programming formulation of the optimal experimental design problem for structural reliability analysis suffers from the curse of dimensionality. An approximation to the dynamic programming formulation which mends this problem, is the \emph{myopic formulation}. This corresponds to truncating the dynamic programming sum in \eqref{eq:dyn_prog} and only looking at one time-step ahead at the time. Due to the truncation, the myopic formulation is not theoretically optimal, but it is computationally feasible even for large systems since it does not suffer from the curse of dimensionality.

In this section, we define the 
the \emph{myopic optimal decision} $d \in \mathbb{D}$ at step $k$ as the minimizer 
of the following function 
\begin{equation}
    \label{eq:acquisition}
    J_{i, k}(d) = \lambda(d)\expectK{H_{i, k+1}}{k, d} \textit{ for } i = 1, 2, \textit{or } 3.
\end{equation}
Here $H_{i, k}$ are the measures of residual uncertainty defined in \secref{sec:opt_obj}, and 
$E_{k, d}$ represents the conditional expectation with respect to $\mathcal{E}_{k}$ with $d_k = d$.
Hence, $\expectK{H_{i, k+1}}{k, d}$ represents how desirable decision $d$ is for reducing the
expected remaining uncertainty in $\alpha$ at experiment $k+1$, if the next experiment is performed with input $d$.
We let $\lambda(d)$ be a deterministic function representing the cost associated with decision $d$, 
and we will refer to a function $J_{i, k}(d)$ as the \emph{acquisition} function for myopic design. 
Other ways of introducing additional rewards or penalties associated with an experiment are of course also possible.
In fact, there is no particular reason why we write \eqnref{eq:acquisition} as a \emph{product} of cost 
and the measure of residual uncertainty, besides emphasizing that $J_{i, k}(d)$ should be
a function of these two terms. 

Note that this is greedy strategy. 
At each time step, we choose the input for the next experiment which is optimal given that we 
only look one step ahead. This is in contrast to the experimental design model in Section \ref{sec:dynamic} 
which chooses the optimal policy based on the whole experimental design phase. 
The greedy strategy is not theoretically optimal, as it essentially corresponds to truncating the sum in the dynamic 
programming formulation \eqref{eq:dyn_prog}. However, due to this truncation, 
the greedy approach does not suffer from the curse of dimensionality. 
Hence, it is more tractable from a computational point of view.

\section{Approximating the measure of residual uncertainty}
\label{sec:approx_H}
Assume $k$ experiments have been performed, resulting in the updated probabilistic model $(\xi_k, \delta_k)$.
A simple method for estimating the measures of residual uncertainty described in \secref{sec:opt_obj}, 
is by a double-loop Monte Carlo simulation: Let $N_1, N_2 \in \mathbb{N}$ and let $h^{(k)}_{i, j} = \ind{\xi_{k, j}(\x_i) \leq 0}$, 
where $\x_1, \dots, \x_{N_1}$ are $N_1$ i.i.d. samples of $\X$ and $\xi_{k, 1}(\x_i), \dots \xi_{k, N_2}(\x_i)$ are $N_2$ i.i.d. 
performance functions sampled from $\xi_k$ and evaluated at each $\x_i$. 
Then $H_{1, k}$ can be obtained as the sample variance of the $N_{2}$ samples of the form $\hat{\alpha}_{k, j} = \frac{1}{N_1}\sum_i h^{(k)}_{i, j}$.
Similarly, $H_{2, k}$ and $H_{3, k}$ can be estimated from $\hat{p}_k(\x_i) = \frac{1}{N_2}\sum_j h^{(k)}_{i, j}$.

This approach is problematic for several reasons. First of all, $\hat{\alpha}_{k, j}$ is an unbiased estimator 
of the failure probability $\alpha_{k, j} = \alpha(\xi_{k, j})$ corresponding to the deterministic performance function $\xi_{k, j}$.  
When $\alpha_{k, j}$ is small, the variance of this estimator is $\text{var}(\hat{\alpha}_{k, j}) = \alpha_{k, j}(1-\alpha_{k, j}) / N_1 \approx \alpha_{k, j} / N_1$. 
If we want to achieve an accuracy, of say $\sqrt{\text{var}(\hat{\alpha}_{k, j})} < 0.1 \alpha_{k, j}$, and 
$\alpha_{k, j} = 10^{-m}$, then the number of samples required would be approximately 
$N_1 = 10^{m + 2}$. The failure probabilities considered in structural reliability analysis can 
typically be in the range from $10^{-6}$ to $10^{-2}$. 

When $N_1$ is large, it can also be a practical challenge to obtain the samples $\xi_{k, j}(\x_1), \dots, \xi_{k, j}(\x_{N_1})$
simultaneously for a fixed $j$. Moreover, the total number of samples needed to evaluate the measures of residual uncertainty $H_{i, k}$ 
is $N_{1} N_{2}$, and we are interested in optimization over $H_{i, k}$ that will require multiple simulations of this kind. 

In this section we present a procedure for efficient approximation of the measures of residual uncertainty. 
We will start by introducing a finite dimensional approximation of $\xi_k(\x)$, given as a deterministic function 
$\hat{\xi}_k(\x, \E)$ depending on $\x$ and a finite dimensional $\mathcal{E}_k$-measurable random variable $\E$.
Then, in \secref{sec:UT_epistemic} we consider how the mean and variance, $\expect{f(\E)}$ and $\text{var}(f(\E))$, 
can be approximated for any $\mathcal{E}_k$-measurable function $f(\e)$ using the unscented transform.
In \secref{sec:sampling} and \secref{sec:IS_pruning} we present an importance sampling scheme for the case where $f(\e)$ is defined in terms of an expectation over $\X$.
Finally, in \secref{sec:UT_MCIS_approx} we consider the case where $f(\e) = \alpha(\hat{\xi}_k(\X, \e))$, 
which provides the approximations $\hat{\alpha}_k = f(\E)$ and $\hat{H}_{1, k} = \text{var}(f(\E))$, 
and where approximations of $H_{2, k}$ and $H_{3, k}$ are obtained in a similar manner. 

In summary, this kind of approximation which we will refer to as UT-MCIS from now on, makes use of the unscented transform (UT)
for epistemic uncertainty propagation and Monte Carlo simulation with importance sampling (MCIS)
for aleatory uncertainty propagation. The motivation behind this specific setup is that 
a technique such as MCIS is needed to obtain low variance estimates of $\alpha(\hat{\xi}_k(\X, \e))$, 
which will typically be a small number. The sampling scheme we propose is also designed 
to be efficient in the case where subsequent estimates corresponding to perturbations of $\alpha(\hat{\xi}_k(\X, \e))$ 
are needed, which is relevant for estimation of e.g. $\alpha(\hat{\xi}_{k+1}(\X, \e))$ or $\alpha(\hat{\xi}_k(\X, \e'))$ 
for some $\e' \neq \e$ if $\alpha(\hat{\xi}_k(\X, \e))$ has already been estimated. 
As for epistemic uncertainty propagation, when $\alpha(\hat{\xi}_k(\x, \E))$ is viewed as an 
$\mathcal{E}_k$-measurable random variable, 
the UT alternative which is both simpler and more efficient seems like a viable alternative, 
in particular for the purpose of optimization with respect to future decisions. 

\subsection{The finite-dimensional approximation of $\xi_k$}
\label{sec:finite_dim_approx}
In our framework, we have defined $\xi_k$ as a $\mathcal{E}_k$-measurable stochastic process 
indexed by $\x \in \mathbb{X}$ (often called a random field), and we view $\xi_k$ as a distribution over some (generally infinite-dimensional) space of functions.
The special case where $\xi_k = \xi_k(\x, \E)$ for some finite dimensional $\mathcal{E}_k$-measurable random variable $\E$ can be very useful for simulation.
That is, if samples $\e_j$ of $\E$ can be generated efficiently, then random functions $\xi_{k, j}(\x) = \xi_k(\x, \e_j)$ 
can be sampled as well. As long as $\xi_k$ is square integrable, such a representation 
of $\xi_k$ is always available from the Karhunen-Lo\'eve transform:
\begin{equation*}
    \xi_k(\x) - \expect{\xi_k(\x)} = \sum_{i = 1}^{\infty} E_i \phi_{i}(\x),
\end{equation*}
where the functions $\phi_{i}$ are deterministic and $E_i$ are uncorrelated random variables with zero mean. 
The canonical ordering of the terms $E_i \phi_{i}(\x)$ also provides a suitable method for approximating $\xi_k(\x)$, 
by truncating the sum at some finite $i = M$, and we could then let $\E = (E_1, \dots, E_M)$ (see for instance \cite{phd:Wang}). 

But obtaining the Karhunen-Lo\'eve transform can also be challenging. Because of this, we present 
an extremely simple approximation, that just relies on computation of the first two moments of $\xi_k$. 
We let $\E$ be a $1$-dimensional random variable with $\expect{\E} = 0$ and $\expect{\E^2} = 1$, and define
\begin{equation}
    \label{eq:finite_dim_xi}
    \hat{\xi_k}(\x) = \expect{\xi_k(\x)} + \E \sqrt{\text{var}(\xi_k(\x))}.
\end{equation}
This is indeed a very crude approximation, as essentially we assume that the values of $\xi_k$ at any set 
of inputs $\x$ are fully correlated. 
But for probabilistic surrogates used in structural reliability models 
this is actually not that unreasonable, and as it turns out, for the examples we consider in 
\secref{sec:num_exp} it seems sufficient. 
\begin{figure}[H]
    \centering
    \includegraphics[width=1.0\linewidth, trim={0cm 0cm 0cm 0cm},clip]{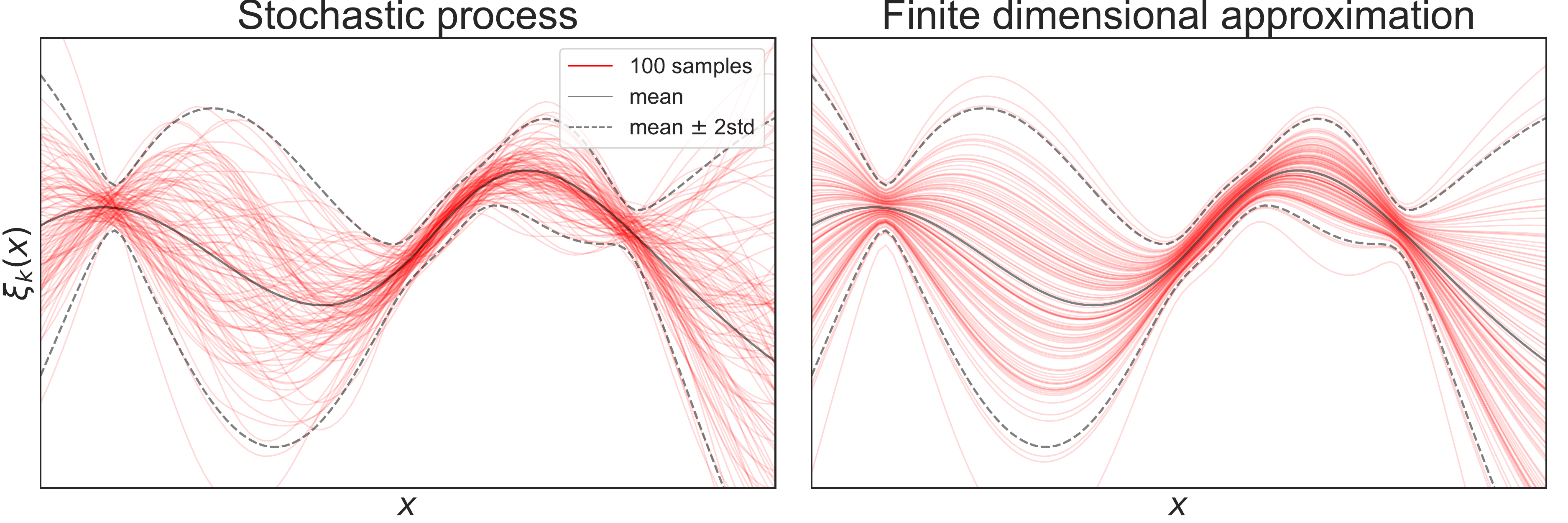}
    \caption{
        Illustration of the finite dimensional approximation \eqnref{eq:finite_dim_xi}.
    }
    \label{fig:example_finite_dim}
\end{figure}

\begin{remark}
    Note that to update the approximate model $\hat{\xi_k}(\x)$ in \eqnref{eq:finite_dim_xi} given some new 
    experiment $(d_k, o_k)$, we only need to update the mean and variance functions.
    This is in line with the numerically efficient Bayes linear approach \citep{Goldstein:2007:Bayes_linear_statistics},
    where random variables are specified only through the first two moments, and where the Bayesian updating given some
    experiment corresponds to computation of an adjusted mean and covariance. 
    An application of the Bayes linear theory to sequential optimal design of experiments 
    can be found in \citep{Jones:2018:Bayes_linear_design}.
    
    We note also that in the case where Gaussian processes are used as surrogate models, 
    the classical and linear Bayesian approaches are computationally equivalent. 
    Moreover, in the following section we will introduce the unscented transform for approximation of the
    updated/adjusted moments, and as a consequence the complete prior probability specification of $\E$
    becomes less relevant. 
\end{remark}

In the case where we are dealing with a hierarchical model, it might not be 
convenient to compute $\expect{\xi_k(\x)}$ and $\text{var}(\xi_k(\x))$. 
If $\xi_k(\x) = g(\Y_k(\x))$ where $\Y_k(\x)$ is a stochastic process with 
values in $\mathbb{R}^n$ for any $\x \in \mathbb{X}$, we would instead approximate $\Y_k$ with
\begin{equation}
    \hat{\Y}_k = \expect{\Y_k} + L \E, 
\end{equation}
where $\E$ is $n$-dimensional with $\expect{\E} = 0$, $E[\E \E^T] = I$, and 
the matrix $L$ satisfies $L L^T = (\Y_k-\expect{\Y_k})(\Y_k-\expect{\Y_k})^T$.
The approximation of $\xi_k$ is then obtained as $\hat{\xi_k}(\x) = g(\hat{\Y}_k(\x))$.
The same goes for the scenario with more than two layers in the hierarchy, 
for instance $\xi_k(\x) = g(\textbf{Z}_k(\Y_k(\x)))$, 
where we would approximate both $\textbf{Z}_k(\y)$ and $\Y_k(\x)$. In any case, we end up 
with finite dimensional random variable $\E$, and we can define the approximation $\hat{\xi_k}(\x, \E)$.

\subsection{The unscented transform for epistemic uncertainty propagation}
\label{sec:UT_epistemic}
The unscented transform (UT) is a very efficient method for approximating the mean and covariance of a random variable after 
nonlinear transformation. UT is commonly applied in the context of Kalman filtering, 
and it is based on the general idea that \emph{it is easier to approximate a probability distribution than an arbitrary nonlinear transformation} \citep{phd:Uhlmann, Julier:2004:UT}. 
Intuitively, given any finite-dimensional random variable $\E$ we may define a set of \textit{weighted sigma-points} $\{ (v_{i}, \e_{i}) \}$, 
such that if $\{ (v_{i}, \e_{i}) \}$ was considered as a discrete probability distribution, then its mean and covariance would coincide with $\E$.
For any nonlinear transformation $\Y = f(\E)$, if $\E$ was discrete we could compute the mean and covariance of $\Y$ exactly. The UT approximation 
is the result of such computation, where we make use of a small set of weighted points 
$\{ (v_{i}, \e_{i}) \}$. 

Specifically, let $\E$ be a finite dimensional random variable with mean $\bm{\mu}$ and covariance matrix $\bm{\Sigma}$.
A set of sigma-points for $\E$ is a set of weighted samples $\{ (v_1, \e_1),$ $\dots, (v_n, \e_n) \}$ such that 
\begin{equation}
    \label{eq:UT_sigma_points}
    \begin{split}
        &\bm{\mu} = \sum_{i=1}^{n} v_i \e_i, \\ 
        &\bm{\Sigma} = \sum_{i=1}^{n} v_i (\e_i - \bm{\mu})(\e_i - \bm{\mu})^T .
    \end{split}
\end{equation}
If $\y = f(\e)$ is any (generally nonlinear) transformation, the  
UT approximation of the mean and covariance of $\Y = f(\E)$ are then obtained as 
\begin{equation}
    \label{eq:UT_moments}
    \begin{split}
        &\widehat{E}[\Y] = \sum_{i=1}^{n} v_i \y_i, \\ 
        &\widehat{\text{Cov}}[\Y] = \sum_{i=1}^{n} v_i (\y_i - \widehat{E}[\Y])(\y_i - \widehat{E}[\Y])^T,
    \end{split}
\end{equation}
where $\y_i = f(\e_i)$.

Naturally, the selection of appropriate sigma-points is essential for UT to be successful. 
It is important to note that, although we may view the sigma-points as \emph{weighted samples}, 
$v_i$ and $\e_i$ are fixed or given by some deterministic procedure.
Moreover, the definition of sigma-points given in \eqnref{eq:UT_sigma_points} does not 
require that the weights are nonnegative and sum to one. 
Although this conflicts with the intuition of approximating $\E$ with a discrete random variable, 
the unscented transform still makes sense as a procedure for approximating statistics after nonlinear transformation. 

Since the introduction of UT to Kalman filters in the 1990's, many different alternatives for sigma-point selection have
been proposed \citep{Menegaz:2015:Systematization_UT}.
These are mostly focus on applications where $\E$ follows a multivariate Gaussian distribution, 
but we do not see this as a restriction since we will assume that $\E$ can be represented as
a transformation $\E = \mathcal{T}^{-1}(\bm{U})$ of a multivariate Gaussian variable $\bm{U}$.
For the applications considered in this paper, we will let $\{ (v_{i}, \uu_{i}) \}$ denote 
a set of sigma-points that are appropriate for the multivariate standard normal $\bm{U} \sim \mathcal{N}(0, I)$ 
where $\text{dim}(\bm{U}) = \text{dim}(\E)$. If $\mathcal{T}$ is the corresponding isoprobabilistic 
transformation, i.e. $\mathcal{T}(\E) \sim \mathcal{N}(0, I)$ (see \appref{sec:app:local_SRA}), we will use  $\{ (v_{i}, \mathcal{T}^{-1}(\uu_{i})) \}$ as a 
set of sigma-points for $\E$. Equivalently, we could also view this as taking the UT approximation of 
$\bm{U}$ under a different transformation given by $f \circ \mathcal{T}$.
For the numerical examples we present in this paper, we have made use of the 
the method developed by \cite{phd:Merwe}, which produces 
a set of $n = 2\cdot \text{dim}(\E) + 1$ points $\e_i$ with corresponding weights\footnote{
    Other alternatives for sigma-point selection could also be applied, potentially with better performance. 
    The method by \cite{phd:Merwe} depends on a set of parameters, 
    and it could also be relevant to refine or \emph{learn} the appropriate parameter values as in \citep{Turner:2010:learn_sigma_points}.
    However, in our current implementation we have only considered the fixed set of sigma-points 
    given in \appref{app:sigma_points}.
}.
Determining sigma-points with this procedure is quite straightforward, and the details are 
given in \appref{app:sigma_points}. We note again that for any structural reliability model, 
as long as we do not change dimensionality of $\E$, determining the sigma-points is a one-time computation, 
and any subsequent UT approximation of $\Y = f(\E)$, for some nonlinear transformation $f(\cdot)$, 
is computationally very efficient. 

\begin{remark}
    Note that it is not necessary that the sigma points used in the approximation of the mean and covariance in \eqnref{eq:UT_moments} are the same. In fact, the method presented in \appref{app:sigma_points} makes use of two different sets of weights for these approximations. 
    As this is not of any relevance for the remaining part of this paper, we will keep writing
    $\{ v_i, \e_i \}$ as a single set of sigma-points to simplify the notation. 
\end{remark}

\subsection{Generating samples in $\mathbb{X}$}
\label{sec:sampling}
In order to estimate the measures of residual uncertainty, we will need a set of samples of $\X$. 
We will generate a finite set of $3$-tuples $\{ (\x_i, w_i, \hat{\eta}_i) \}$, where $\{ (\x_i, w_i) \}$ are weighted samples in $\mathbb{X}$ 
suitable for obtaining importance sampling estimates of failure probabilities, and $\hat{\eta}_i$ is a number describing 
how influential a given sample $(\x_i, w_i)$ is expected to be in such an estimate. 
In other words, $\{ \x_i \}$ should be constructed to 'cover the relevant regions in $\mathbb{X}$', 
and for estimation we will only make use of a subset of $\{ (\x_i, w_i) \}$. 
The relevant subset will be determined from the \emph{measure of insignificance} $|\hat{\eta}_i|$, 
where we will only consider samples $(\x_i, w_i)$ where $|\hat{\eta}_i|$ is below some threshold.
We start by describing how the weighted samples $\{ (\x_i, w_i) \}$ are generated.\\

\noindent \textbf{Importance sampling}\\
\noindent The general idea behind importance sampling is that if we select some random variable $\bm{Q} \geq 0$ 
with law $P_{\bm{Q}}$, such that $E_{P_{\X}}[Q] = 1$ and $\bm{Q} \neq 0$ $P_{\X}$-almost surely, then 
\begin{equation}
    \label{eq:IS}
    E_{P_{\X}}[f(\X)] = E_{P_{\bm{Q}}}[ f(\X) / \bm{Q} ],
\end{equation}
for any $\mathcal{A}$-measurable function $f(\x)$. 
This is often useful for estimation, for instance when sampling from $P_{\X}$ is difficult, 
and in the case where we can find a $\bm{Q}$ such that estimates with respect to the right 
hand side of \eqnref{eq:IS} are better (have lower variance) than estimating $E_{P_{\X}}[f(\X)]$ directly.

In the case where $\X$ admits a probability density $p_{\X}$, we can let $q_{\X}$ be any 
density function such that $q_{\X}(\x) > 0$ whenever $p_{\X}(\x) > 0$. 
Let $\x_1, \dots, x_N$ be i.i.d. samples generated according to $q_{\X}$, 
and define $w_i = p_{\X}(\x_i) / q_{\X}(\x_i)$. The importance sampling estimate of 
$E_{P_{\X}}[f(\X)]$ with respect to the proposal density $q_{\X}$ is then obtained as 
\begin{equation}
    \label{eq:MCIS}
    \begin{split}
        E_{P_{\X}}[f(\X)] &= E_{P_{\bm{Q}}}\left[ f(\X)\frac{p_{\X}(\X)}{q_{\X}(\X)} \right]\\
        &\approx \frac{1}{N} \sum_{i = 1}^{N} f(\x_i) w_i .
    \end{split}
\end{equation}
We now assume that the stochastic limit state can be written as $\xi_k(\x, \E)$ for some 
finite-dimensional random variable $\E$, and for any deterministic performance function $\xi_k(\x, \e)$ 
we will write $\alpha_k(\e) = \alpha(\xi_k(\X, \e))$ as the corresponding failure probability.
An importance sampling estimate of $\alpha_k(\e)$ is then given by \eqnref{eq:MCIS} 
with $f(\x) = \ind{\xi_k(\x, \e) \leq 0}$, that is
\begin{equation}
    \label{eq:MCIS_alpha_k}
    \hat{\alpha}_k (\e)= \frac{1}{N}\sum_{i = 1}^{N} \ind{\xi_k(\x_i, \e) \leq 0} w_i .
\end{equation}
In order to obtain a good estimate of $\alpha_k(\e)$,
we would like the proposal distribution $q_{\X}$
to produce samples such that there is an even balance between the samples where $\xi_k(\x, \e) \leq 0$ 
and $\xi_k(\x, \e) > 0$, where at the same time $p_{\X}$ is as large as possible. 
One way to achieve this is to generate samples in the vicinity of points on the 
surface $\xi_k(\x, \e) = 0$ with (locally) maximal density. 
A point with this property is called a \emph{design point}\footnote{
    The most common definition of a design point is that it is the point on 
    the limit state surface with maximal density after transformation to the standard normal space. 
    See \appref{sec:app:local_SRA}
    } 
or \emph{most probable failure point} in the structural reliability literature. 
We will let $q_{\X}$ represent a mixture of distributions, centered around different design points that
are appropriate for different values of $\e$. The full details are given in \appref{app:sampling_dist_q},
where we also describe a simpler alternative than can be used in the case where design point searching
is difficult or not appropriate.\\

\noindent \textbf{The measure of insignificance} $| \eta_i |$\\
\noindent Assume $\{ (\x_i, w_i) \}$ is a set of samples capable of providing 
a satisfactory estimate of $\alpha_k(\e)$, and we now want to estimate $\alpha_k(\e')$ for some 
new value $\e'$. If we know that the sign of $\xi_k(\x_i, \e)$ and $\xi_k(\x_i, \e')$ will coincide 
for many of the samples $\x_i$, then the estimate of $\alpha_k(\e')$ can be obtained more efficiently by
not computing all the terms in the sum \eqnref{eq:MCIS_alpha_k}. 
This is typically the case when $\e$ and $\e'$ are both sampled from $\E$. 
It is also true in the case where we want to estimate $\alpha_{k+1}(\e')$ given some new experiment $(d_k, o_k)$, 
if we assume that updating with respect to $(d_k, o_k)$ has local effect (i.e. there are always 
regions in $\mathbb{X}$ where $\xi_{k+1}(\x) \approx \xi_k(\x)$), or if the experiment is 
carried out to reduce the uncertainty in the level set $\xi_k = 0$ (which is what we intend to do).

In other words, we consider some perturbation of the performance function $\xi_k(\x, \e)$, 
and we are interested in identifying the samples $\x_i$ where $\ind{\xi_k(\x_i, \e) \leq 0}$ 
does not change under the perturbation. For this purpose we define the function
\begin{equation}
    \label{eq:eta_def}
    \eta(\x, \xi) = \expect{\xi(\x)} / \sqrt{\text{var}(\xi(\x))},
\end{equation}
and let $\eta_i = \eta(\x_i, \xi_k)$ be defined with respect to the relevant process $\xi_k$.
Here $\eta_i$ describes how uncertain $\xi_k(\x_i)$ is around the critical value $\xi_k = 0$, 
in the sense that if $|\eta_i|$ is small (close to zero) then $\xi_k(\x_i) > 0$ and $\xi_k(\x_i) \leq 0$ may both be probable outcomes. 
Conversely, if $|\eta_i|$ is large then either $P(\xi_k(\x_i) \leq 0) \approx 0$ or $P(\xi_k(\x_i) \leq 0) \approx 1$, 
and the input $\x_i$ is \emph{insignificant} as it is unnecessary to keep track of changes in $\ind{\xi_k(\x_i) \leq 0}$. 
We will use $\eta_i$ to \emph{prune} the sample set $\{ (\x_i, w_i) \}$, by only considering the samples 
where $|\eta_i|$ is below a given threshold $\tau$. 
Although this is an intuitive idea, we may also justify the definition of $\eta$ and 
selection of a threshold $\tau$ more formally by making use of the following proposition.

\begin{prop}
    \label{prop:chebyshev}
    Given any process $\xi(\x)$, let $\eta(\x) = \eta(\x, \xi)$ be defined as in \eqref{eq:eta_def} 
    and let $\tau > \sqrt{2}$. Assume $\xi^{(1)}$ and $\xi^{(2)}$ are two i.i.d. random samples from $\xi(\x)$.
    Then, 
    \begin{equation}
        \label{eq:chebyshev}
        \begin{split}
            &P\left(\ind{\xi^{(1)} \leq 0} \neq \ind{\xi^{(2)} \leq 0} \;\middle|\; |\eta| \geq \tau \right) \\ &\leq \frac{2}{\tau^2} \left(1 - \frac{1}{\tau^2}\right).
        \end{split}
    \end{equation}
\end{prop}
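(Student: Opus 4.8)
The plan is to reduce the statement to a pointwise estimate at a fixed input and then invoke Chebyshev's inequality. First I would fix an $\x$ on the conditioning event $\{|\eta| \geq \tau\}$ and abbreviate $\mu = \expect{\xi(\x)}$, $\sigma^2 = \text{var}(\xi(\x))$, so that by \eqref{eq:eta_def} we have $\eta(\x) = \mu/\sigma$ and hence $|\mu| \geq \tau \sigma$. Writing $p = P(\xi(\x) \leq 0)$, and using that $\xi^{(1)}, \xi^{(2)}$ are i.i.d. copies of $\xi(\x)$, a direct expansion into the two disagreeing cases gives
\[
    P\bigl(\ind{\xi^{(1)} \leq 0} \neq \ind{\xi^{(2)} \leq 0}\bigr) = 2p(1-p),
\]
so it suffices to bound $p(1-p)$ uniformly over the conditioning event.

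Next I would bound $\min(p, 1-p)$ by Chebyshev's inequality. Assuming without loss of generality that $\mu > 0$ (the case $\mu < 0$ is symmetric, and $\mu = 0$ is excluded since $|\eta| \geq \tau > 0$ forces $\mu \neq 0$), the inclusion $\{\xi(\x) \leq 0\} \subseteq \{|\xi(\x) - \mu| \geq \mu\}$ together with Chebyshev yields $p \leq \sigma^2/\mu^2 = 1/\eta(\x)^2 \leq 1/\tau^2$. In the symmetric case one obtains the same bound for $1-p$, so in either case $m := \min(p, 1-p) \leq 1/\tau^2$.

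The final step uses monotonicity. Because $\tau > \sqrt{2}$ we have $1/\tau^2 < 1/2$, so $m$ lies in $[0, 1/2]$, where $t \mapsto t(1-t)$ is increasing; since $p(1-p) = m(1-m)$ regardless of which of $p$, $1-p$ is the smaller, this gives $p(1-p) \leq \tfrac{1}{\tau^2}\bigl(1 - \tfrac{1}{\tau^2}\bigr)$. Multiplying by $2$ produces the asserted bound pointwise at every $\x$ with $|\eta(\x)| \geq \tau$. Since this bound is uniform over the conditioning set and does not depend on $\x$, it transfers to the conditional probability: conditioning first on $\X$ via the tower property and then averaging over $\{|\eta| \geq \tau\}$ preserves the inequality.

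I expect no heavy computation here; the only points to handle carefully are treating the two sign cases of $\mu$ symmetrically and justifying $p(1-p) = m(1-m)$, and — crucially — recognising that the hypothesis $\tau > \sqrt{2}$ is exactly what keeps the Chebyshev bound strictly below $1/2$, which is what legitimises the monotonicity step. The conditional-probability interpretation (reading $|\eta| = |\eta(\X)|$ as a random variable through $\X$, consistent with the pruning application) is the one mild subtlety worth stating explicitly.
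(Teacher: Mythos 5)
Your proof is correct and follows essentially the same route as the paper: reduce to the pointwise identity $P(\text{disagree}) = 2p(1-p)$, bound $\min(p,1-p)$ by $1/\tau^2$ via a Chebyshev-type inequality, and conclude by monotonicity of $t \mapsto t(1-t)$ on $[0,1/2]$, which is exactly where $\tau > \sqrt{2}$ enters. The only cosmetic difference is that the paper invokes the one-sided (Cantelli) inequality, $p \leq \sigma^2/(\sigma^2 + \mu^2) \leq 1/\tau^2$, whereas you use the two-sided Chebyshev bound $p \leq \sigma^2/\mu^2 \leq 1/\tau^2$; both land on the same endpoint, and your explicit handling of the two sign cases via $m = \min(p,1-p)$ and of the conditioning event is if anything slightly cleaner than the paper's $|\eta| = \tau$ plus $\varepsilon$-perturbation argument.
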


\begin{proof}
    Let $p = P(\xi(\x) \leq 0)$ and $\gamma(p) = p(1-p)$ for short (note also that this is \eqnref{eq:pk_lambda_k} for $\xi = \xi_k$), 
    and observe that $P\left(\ind{\xi^{(1)} \leq 0} \neq \ind{\xi^{(2)} \leq 0}\right) = 2\gamma(p)$.
    Assume first that $\eta > 0$. Then $E[\xi] > 0$ and by Chebyshev's one-sided inequality we 
    get 
    \begin{equation*}
        \eta = \tau \Rightarrow p \leq \frac{\text{var}(\xi(\x))}{(\text{var}(\xi(\x)) + E[\xi(\x)]^2)} \leq \frac{1}{\tau^2},
    \end{equation*}
    and as $\tau > \sqrt{2}$ we also get $p \leq 1/2$. Since $\gamma(p)$ is increasing for $p \in [0, 1/2]$, 
    we must have $\gamma(p) \leq \gamma(1 / \tau^2)$.

    Conversely, if $-\tau = \eta < 0$ then $p \geq 1 - 1/ \tau^2 \geq 1/2$, and as $\gamma(p)$ is decreasing for $p \in [1/2, 1]$
    we have that $\gamma(p) \leq \gamma(1 - 1/ \tau^2) = \gamma(1/ \tau^2)$. 
    Hence, combining both cases we get $|\eta| = \tau \Rightarrow \gamma(p) \leq \gamma(1/ \tau^2)$, 
    and \eqnref{eq:chebyshev} is proved by observing that $\gamma(1 / (\tau + \varepsilon)^2) \leq \gamma(1/ \tau^2)$
    for any $\varepsilon > 0$.
    \qed
\end{proof}

\noindent Although \propref{prop:chebyshev} holds in general, tighter (and probably more realistic) bounds 
can be obtained by making assumptions on the form of $\xi(\x)$. 
For instance, in the case where $\xi(\x)$ is Gaussian we obtain 
\begin{equation}
\begin{split}
    &P\left(\ind{\xi^{(1)} \leq 0} \neq \ind{\xi^{(2)} \leq 0} \;\middle|\; |\eta| \geq \tau \right) \\
    &\leq 2\Phi(\tau)\Phi(-\tau),
\end{split}
\end{equation}
where $\Phi(\cdot)$ is the standard normal CDF.

We will make use of $\hat{\eta}_i$ obtained as the UT approximation of $\eta_i$.
That is, $\hat{\eta}_i$ is in general obtained from the finite-dimensional approximation described in \secref{sec:finite_dim_approx},
combined with the UT approximation \eqnref{eq:UT_moments} with $\Y = \hat{\xi}_k(\x, \E)$.

\subsection{Importance sampling estimates with pruning}
\label{sec:IS_pruning}
Let $\{ (\x_i, w_i, \hat{\eta}_i) \ | \ i \in \mathcal{I} \}$, $\mathcal{I} = \{ 1, \dots, N_0 \}$ be a set of samples generated as described in \secref{sec:sampling}. 
Given some fixed threshold $\tau > 0$, we define the subset of \emph{pruned samples} as the ones corresponding to the 
index set $\mathcal{I}_{\tau} = \{ i \in \mathcal{I} \ | \  \hat{\eta}_i < \tau \}$, and define $\bar{\mathcal{I}}_{\tau} = \mathcal{I} \setminus \mathcal{I}_{\tau}$. 
If $f(\x)$ is some $\mathcal{A}$-measurable function where we know a priori the value of $f_i = f(\x_i)$ for all $i \in \bar{\mathcal{I}}_{\tau}$, 
then we can immediately compute
\begin{equation}
    \label{eq:h_bar}
    \bar{h} = \frac{1}{N_0} \sum_{i \in \bar{\mathcal{I}}_{\tau}} f_i w_i,
\end{equation}
and the importance sampling estimate of the expectation of $f(\X)$ becomes
\begin{equation}
    \widehat{E}[f(\X)] = \bar{h} + \frac{1}{N_0} \sum_{i \in \mathcal{I}_{\tau}} f(\x_i)w_i.
\end{equation}
If we let
\begin{equation}
    \label{eq:s_h_bar}
    s_{\bar{h}} = \frac{1}{N_{0}} \sum_{i \in \bar{\mathcal{I}}_{\tau}} \left( f_i w_i - \widehat{E}[f(\X)] \right)^2,
\end{equation}
then an unbiased estimate of the sample variance is given as 
\begin{equation}
\begin{split}
    &\widehat{\text{var}}(\widehat{E}[f(\X)]) = \frac{s_{\bar{h}}}{N_0 - 1} \\
    &+ \frac{1}{N_{0}(N_{0}-1)} \sum_{i \in \mathcal{I}_{\tau}} \left( f(\x_i)w_i - \widehat{E}[f(\X)] \right)^2,
\end{split}
\end{equation}
which shows the general idea with this \emph{pruning}, namely that low variance estimates of $E[f(\X)]$ can be 
obtained with a small number of evaluations $f(\x_i)$, assuming that the subset $\mathcal{I}_{\tau}$ is small compared to $\mathcal{I}$ 
(and that the assumed values $f_i$ are correct).

One drawback with this procedure is that we do not have control over the number of pruned samples, 
which still might be very large. In order to set an upper bound on the number of evaluations $f(\x_i)$, 
we let $\mathcal{I}_{\tau}^n \subseteq \mathcal{I}_{\tau}$ contain the first $n$ elements of $\mathcal{I}_{\tau}$ 
(or some other subset, as long as the elements of $\{\x_i \ | \ i \in  \mathcal{I}_{\tau}^n\}$ remain independent).
An importance sampling estimate of $E[f(\X)]$ using only samples from $\mathcal{I}_{\tau}^n$ is given as 
\begin{equation}
    \label{eq:split_MC}
    \widehat{E}[f(\X)] = \bar{h} + \bar{r}, \ \ \bar{r} = \frac{N_{\tau}}{nN_{0}} \sum_{i \in \mathcal{I}_{\tau}^{n}} f(\x_i)w_i,
\end{equation}
where $N_{\tau} = |\mathcal{I}_{\tau}|$, and we may estimate the sample variance as
\begin{equation}
\label{eq:split_MC_var}
\begin{split}
    &\widehat{\text{var}}(\widehat{E}[f(\X)]) = \frac{1}{N_{0}-1}(s_{\bar{h}} - \bar{h}^2) \\
    &+ \frac{N_{\tau}}{nN_{0} - N_{\tau}} \left( -\bar{r}^2 + \frac{N_{\tau}}{nN_{0}}\sum_{i \in \mathcal{I}_{\tau}^n} \left( f(\x_i)w_i \right)^2 \right).
\end{split}
\end{equation}
Obtaining consistency results is easy under the ideal assumption that $n(N_0 - N_{\tau}) / N_{\tau}$ is an integer, 
and the formulas in \eqref{eq:split_MC}-\eqref{eq:split_MC_var} comes as a consequence 
of the following result.
\begin{prop}
    Assume $n(N_0 - N_{\tau}) / N_{\tau} \in \mathbb{N}$. Then \eqnref{eq:split_MC} is an unbiased estimate 
    of $E[f(\X)]$ and \eqnref{eq:split_MC_var} is an unbiased estimate of the sample variance. 
\end{prop}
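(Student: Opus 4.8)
The plan is to read \eqref{eq:split_MC} as a two-block (stratified) importance-sampling average and to verify both claims by direct first- and second-moment computations. Write $Y_i := f(\x_i)w_i$, so that the idealised, fully evaluated importance-sampling estimator is $\hat\mu = \tfrac1{N_0}\sum_{i=1}^{N_0} Y_i$, which is unbiased for $E[f(\X)]$ with sample variance $\hat s^2 := \tfrac1{N_0(N_0-1)}\sum_{i=1}^{N_0}(Y_i-\hat\mu)^2$ by the importance-sampling identity \eqref{eq:MCIS}; these are the quantities the statement calls the mean and the sample variance. Split the indices into the fully evaluated block $\bar{\mathcal{I}}_\tau$ (of size $N_0-N_\tau$, on which the $f_i$ are assumed known and correct, so that $\bar h$ in \eqref{eq:h_bar} equals $\tfrac1{N_0}\sum_{i\in\bar{\mathcal{I}}_\tau}Y_i$) and the pruned block $\mathcal{I}_\tau$, of which only the $n$ independent samples in $\mathcal{I}_\tau^n$ are evaluated. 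The one algebraic fact underlying everything is that, on setting $M:=nN_0/N_\tau$, the estimator \eqref{eq:split_MC} is an average over an \emph{effective} sample of size $M$, the evaluated pruned terms each carrying weight $\tfrac1M=\tfrac{N_\tau}{nN_0}$ and the fully evaluated block carrying the complementary mass; the hypothesis $n(N_0-N_\tau)/N_\tau\in\mathbb{N}$ is exactly the statement that $M$ and $m:=M-n=n(N_0-N_\tau)/N_\tau$ are positive integers, and this is where the assumption is used.

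For the mean the argument is immediate and needs no integrality. By linearity $\expect{\widehat{E}[f(\X)]}=\expect{\bar h}+\expect{\bar r}$, and since the samples are i.i.d.\ with $\expect{Y_i}=E[f(\X)]=:\mu$, one has $\expect{\bar h}=\tfrac{N_0-N_\tau}{N_0}\mu$ and $\expect{\bar r}=\tfrac{N_\tau}{nN_0}\cdot n\mu=\tfrac{N_\tau}{N_0}\mu$, whose sum is $\mu$. Equivalently, conditioning on the realised full sample and sampling the subset $\mathcal{I}_\tau^n$ gives $\expect{\widehat{E}[f(\X)]\mid \text{sample}}=\hat\mu$, so unbiasedness for $E[f(\X)]$ follows by the tower property.

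The substantial part is showing \eqref{eq:split_MC_var} is unbiased for the sample variance. I would condition on the realised full sample so that the only remaining randomness is the choice of $\mathcal{I}_\tau^n$, and show the conditional expectation of \eqref{eq:split_MC_var} equals $\hat s^2$; the tower property and unbiasedness of $\hat s^2$ then give the result. The computation proceeds by taking the expectation of the two summands separately, the first handling the fully evaluated block through $s_{\bar h}$ and $\bar h$, the second handling the subsampled block by replacing $\sum_{i\in\mathcal{I}_\tau^n}Y_i$, its square, and $\sum_{i\in\mathcal{I}_\tau^n}Y_i^2$ by their expectations over the subset. The decisive structural observation is that the coefficient $\tfrac{N_\tau}{nN_0-N_\tau}$ in the second summand equals $\tfrac1{M-1}$, i.e.\ it is precisely the Bessel correction $1/(M-1)$ for an effective sample of size $M$; it is here that integrality enters, since it makes $1/(M-1)$ an exact unbiasing constant rather than an approximation.

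I expect the main obstacle to be this second-moment bookkeeping: one must carry the cross-terms coupling the fully evaluated and subsampled blocks and verify that, after applying the two correction factors $\tfrac1{N_0-1}$ and $\tfrac{N_\tau}{nN_0-N_\tau}=\tfrac1{M-1}$, all $\mu^2$-type and cross contributions cancel and only the term reconstructing $\hat s^2$ survives. Organising the calculation conditionally on the full sample keeps the combinatorial factors of the finite effective size $M$ manifestly integer-valued, which is what turns the conditional identity into an exact rather than merely asymptotic one; this matches the informal remark preceding the statement that such consistency results are \emph{easy} precisely under the ideal integrality assumption.
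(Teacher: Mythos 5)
Your plan for the second claim aims at the wrong target, and this is the genuine gap. The ``sample variance'' that \eqnref{eq:split_MC_var} is asserted to estimate unbiasedly is $\text{var}(\widehat{E}[f(\X)])$ for the \emph{pruned} estimator \eqnref{eq:split_MC} itself --- that is what the notation $\widehat{\text{var}}(\widehat{E}[f(\X)])$ means, and it is what the paper's proof computes: the estimator is decomposed as $\hat{\mu}_1+\hat{\mu}_2$, where $\hat{\mu}_2$ is an i.i.d.\ average over only $M=nN_0/N_\tau$ terms, and \eqnref{eq:split_MC_var} is the sum of the two standard within-block variance estimators for block sizes $N_0$ and $M$. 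You instead identify the target as $\hat s^2$, the sample variance of the fully evaluated average $\hat\mu=\frac{1}{N_0}\sum_i Y_i$, and propose to show that the conditional expectation of \eqnref{eq:split_MC_var} given the full sample equals $\hat s^2$. Had that identity held, it would make \eqnref{eq:split_MC_var} unbiased for $\text{var}(\hat\mu)$, which is strictly \emph{smaller} than the quantity of interest: by the law of total variance, $\text{var}(\widehat{E}[f(\X)])=\text{var}(\hat\mu)+\expect{\text{var}(\bar r\mid \text{full sample})}$, and the subsampling term is positive. Moreover the identity is false: conditionally on the sample, the second summand of \eqnref{eq:split_MC_var} has expectation carrying the prefactor $\frac{1}{M-1}$, whereas the pruned block's contribution to $\hat s^2$ carries $\frac{1}{N_0-1}$, and typically $M\ll N_0$. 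Your own observation that $\frac{N_\tau}{nN_0-N_\tau}=\frac{1}{M-1}$ is a Bessel factor for effective size $M$ should have signalled this: such a factor estimates the variance of a mean of $M$ terms, not of $N_0$ terms, so the tower-property route for the variance cannot close.

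There is also a flaw in your first argument for the mean, although your ``equivalently'' tower-property version repairs it. Writing $\expect{\bar h}=\frac{N_0-N_\tau}{N_0}\mu$ from $\expect{Y_i}=\mu$ treats $\bar{\mathcal{I}}_\tau$ and $\mathcal{I}_\tau$ as an exogenous partition of i.i.d.\ samples; but membership is decided by $\hat\eta_i$, a function of $\x_i$ itself, so conditionally on $i\in\bar{\mathcal{I}}_\tau$ the law of $\x_i$ is $q$ restricted to $\{\eta\geq\tau\}$ and $E_q[f(\X)w(\X)\mid \eta(\X)\geq\tau]\neq\mu$ in general. This endogeneity is exactly what the paper's device avoids: the pruning indicator is moved \emph{inside} the integrand, so that $\hat\mu_1$ is the i.i.d.\ average of $\ind{\eta(\X)\geq\tau}f(\X)w(\X)$ over all $N_0$ indices and $\hat\mu_2$ is the i.i.d.\ average of $\ind{\eta(\X)<\tau}f(\X)w(\X)$ over the enlarged index set $\mathcal{I}^n=\mathcal{I}_\tau^n\cup\bar{\mathcal{I}}_\tau^n$ of exactly $M$ elements --- the integrality hypothesis is what makes that set constructible, which is a somewhat different role than the one you assign it. Unbiasedness of the mean is then pure linearity, since the two integrands sum to $f(\X)w(\X)$, and the variance claim follows by summing the two block variances; any corrected version of your argument would have to either adopt this indicator embedding or condition on the pruning pattern alone and account explicitly for the subsampling variance of $\bar r$.
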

\begin{proof}
    Let $\bar{\mathcal{I}}_{\tau}^n$ be a set of $n(N_0 - N_{\tau}) / N_{\tau}$ elements selected uniformly random from $\bar{\mathcal{I}}_{\tau}$
    and define $\mathcal{I}^n = \mathcal{I}_{\tau}^n \cup \bar{\mathcal{I}}_{\tau}^n$.
    Then $\{\x_i \ | \ i \in  \mathcal{I}^n\}$ is a set of size $|\mathcal{I}^n| = n N_0 /  N_{\tau}$, 
    containing i.i.d. samples from the proposal distribution with density $q(\x)$.
    To show consistency we replace each sample $\x_i$ with i.i.d. random variables $\X_i$ distributed 
    according to $q$. We then define $\hat{\mu} = \hat{\mu}_1 + \hat{\mu}_2$ where  
    \begin{equation*}
    \begin{split}
        \hat{\mu}_1 &= \frac{1}{|\mathcal{I}|} \sum_{i \in \mathcal{I}} \ind{\eta(\X_i \geq \tau)}f(\X_i)w(\X_i), \\
        \hat{\mu}_2 &= \frac{1}{|\mathcal{I}^n|} \sum_{i \in \mathcal{I}^n} \ind{\eta(\X_i < \tau)}f(\X_i)w(\X_i), 
    \end{split}
    \end{equation*}
    and where $w(\x) = p(\x) / q(\x)$, and we can observe that $\hat{\mu} = \widehat{E}[f(\X)]$ when $\X_i = \x_i$.

    To show that $\widehat{E}[f(\X)]$ is unbiased it is enough to observe that 
    $E_q[\hat{\mu}] = E_q[\ind{\eta(\X \geq \tau)}f(\X)w(\X)] + E_q[\ind{\eta(\X < \tau)}f(\X)w(\X)]
    = E_q[f(\X)w(\X)] = E[f(\X)]$.

    As for the variance, we first observe that $\text{var}(\hat{\mu}) = \text{var}(\hat{\mu}_1) + \text{var}(\hat{\mu}_2)$
    where\\ $\text{var}(\hat{\mu}_1) = \text{var}(\ind{\eta(\X \geq \tau)}f(\X)w(\X)) / |\mathcal{I}|$ and  
    $\text{var}(\hat{\mu}_2) = \text{var}(\ind{\eta(\X < \tau)}f(\X)w(\X)) / |\mathcal{I}^n|$.
    Replacing $\text{var}(\hat{\mu}_1)$ and $\text{var}(\hat{\mu}_2)$ with unbiased sample variances 
    using the samples $\X_i = \x_i$ we obtain 
    \begin{equation*}
        \begin{split}
            &\widehat{\text{var}}(\hat{\mu}_1) =\\
            &\frac{1}{|\mathcal{I}|(|\mathcal{I}| - 1)} \sum_{i \in \mathcal{I}} \left(  \ind{\eta(\x_i \geq \tau)}f(\x_i)w(\x_i) - \bar{h} \right)^2 \\
            &= \frac{1}{|\mathcal{I}| - 1} \left(-\bar{h}^2 + \frac{1}{|\mathcal{I}|}\sum_{i \in \mathcal{I}} \left(  \ind{\eta(\x_i \geq \tau)}f(\x_i)w(\x_i) \right)^2 \right) \\ 
            &= \frac{1}{|\mathcal{I}| - 1}(-\bar{h}^2 + s_{\bar{h}}),
        \end{split}
    \end{equation*}
    and similarly 
    \begin{equation*}
        \widehat{\text{var}}(\hat{\mu}_2) = \frac{1}{|\mathcal{I}^n| - 1} \left(-\bar{r}^2 + \frac{1}{|\mathcal{I}^n|}\sum_{i \in \mathcal{I}_{\tau}^n} \left( f(\x_i)w(\x_i) \right)^2 \right),
    \end{equation*}
    where we have used that $\bar{h}$ and $\bar{r}$ are unbiased estimates of $E_q[\hat{\mu}_1]$ and $E_q[\hat{\mu}_2]$ respectively.
    The expression in \eqnref{eq:split_MC_var} is then obtained as $\widehat{\text{var}}(\hat{\mu}_1) + \widehat{\text{var}}(\hat{\mu}_2)$ using that $|\mathcal{I}| = N_0$ and $|\mathcal{I}^n| = n N_0 /  N_{\tau}$.
    \qed
\end{proof}

\subsection{The UT-MCIS approximation of $H_{1, k}$, $H_{2, k}$ and $H_{3, k}$}
\label{sec:UT_MCIS_approx}
Using the tools introduced in the preceding subsections, we now present how 
the measures of residual uncertainty, $H_{1, k}$, $H_{2, k}$ and $H_{3, k}$, can be approximated 
using Monte Carlo simulation with importance sampling (MCIS) 
combined with the unscented transform (UT) for epistemic uncertainty propagation.  

We first let $\hat{\xi_k}(\x, \E)$ be the finite-dimensional approximation introduced in \secref{sec:finite_dim_approx}, 
with the corresponding failure probability $\hat{\alpha}_{k}(\E) = \alpha(\hat{\xi_k}(\x, \E))$.
We then let $\{ (\x_i, w_i, \hat{\eta}_i) \ | \ i \in \mathcal{I} \}$, $\mathcal{I} = \{ 1, \dots, N_0 \}$ be a set of samples generated as described in \secref{sec:sampling}, 
where $\hat{\eta}_i$ is obtained using the UT approximation of $\hat{\xi_k}(\x_i, \E)$.
We will make use of importance sampling estimates as introduced in \secref{sec:IS_pruning}, 
where $\mathcal{I}_{\tau} = \{ i \in \mathcal{I} \ | \  \hat{\eta}_i < \tau \}$, 
and estimation is based on a small subset $\{ (\x_i, w_i, \hat{\eta}_i) \ | \ i \in \mathcal{I}_{\tau}^n \}$
where $\mathcal{I}_{\tau}^n \subset \mathcal{I}_{\tau}$ and $| \mathcal{I}_{\tau}^n | = n < N_{\tau} = |\mathcal{I}_{\tau}|$. \\

\noindent \textbf{Approximating} $\bm{H_{1, k}}$\\
\noindent Let $f_i = \ind{\hat{\eta}_i \leq 0}$ for $i \in \bar{\mathcal{I}}_{\tau}$ 
and compute $\bar{h}_{1}$ as in \eqnref{eq:h_bar}. 
We will let $\{ (v_{j}, \e_{j}) \ | \ j = 1, \dots, M \}$ denote the set of sigma-points as introduced in \secref{sec:UT_epistemic}.

For any fixed $\e_j$, the corresponding importance sampling estimate of the failure probability $\hat{\alpha}_k(\e_j)$ is obtained as 
\begin{equation}
    \label{eq:H1k_approx_1}
    \hat{\alpha}_k^j = \bar{h}_{1} + \frac{N_{\tau}}{nN_{0}} \sum_{i \in \mathcal{I}_{\tau}^{n}}
    \ind{\hat{\xi_k}(\x_i, \e_j) \leq 0} w_i,
\end{equation}
and we let $\hat{H}_{1, k}$ be given by the UT approximation 
\begin{equation}
    \label{eq:H1k_approx_2}
    \begin{split}
        &\widehat{E}[\hat{\alpha}_k] = \sum_{j = 1}^{M} v_j \hat{\alpha}_k^j, \\ 
        &\hat{H}_{1, k} = \widehat{\text{var}}[\hat{\alpha}_k] = \sum_{j = 1}^{M} v_j (\hat{\alpha}_k^j - \widehat{E}[\hat{\alpha}_k])^2.
    \end{split}
\end{equation}

\noindent \textbf{Approximating} $\bm{H_{2, k}}$ \textbf{and} $\bm{H_{3, k}}$\\
\noindent Both $H_{2, k}$ and $H_{3, k}$ are defined through the function $\gamma_k(\x)$, 
which represents the uncertainty in the sign of $\xi_k(\x)$. We will approximate $\gamma_k(\x_i)$ 
with the following function 
\begin{equation}
    \label{eq:gamma_approx}
    \hat{\gamma}_{k}^i = \Phi(\hat{\eta}_i)\Phi(-\hat{\eta}_i),
\end{equation}
where $\Phi(\cdot)$ is the standard normal CDF. There are two ways of interpreting this approximation. 
First of all, $\hat{\gamma}_{k, i}$ corresponds to the case where $\hat{\xi_k}(\x_i, \E)$ is Gaussian, 
which may or may not be an appropriate assumption. 
Alternatively, we can think of $\gamma_k(\x)$ as a measure of uncertainty in $\ind{\xi_k(\x) \leq 0}$, 
and any $\gamma(\x) \propto - |\eta(\x)| = - |E[\xi_k(\x)]| / \sqrt{ \text{var}(\xi_k(\x))} $ is reasonable.
In this scenario it is natural to consider $\gamma = s(\eta)s(-\eta)$ for some sigmoid function $s(\cdot)$,  
and the function $\Phi(\cdot)$ in \eqnref{eq:gamma_approx} is one such alternative. 

For a single approximation of $H_{2, k}$ or $H_{3, k}$ it is really not necessary to 
split the importance sampling estimate as in \eqnref{eq:h_bar}-\eqnref{eq:split_MC}, 
but we will present it in this form as it will be convenient when we consider strategies for optimization. 
Given $\hat{\gamma}_{k}^i$ as in \eqnref{eq:gamma_approx}, we approximate $H_{2, k}$ and $H_{3, k}$ by 
\begin{equation}
    \begin{split}
        &\hat{H}_{2, k} = \bar{h}_{2} + \frac{N_{\tau}}{nN_{0}} \sum_{i \in \mathcal{I}_{\tau}^{n}} \hat{\gamma}_{k}^i w_i, \\ 
        &\hat{H}_{3, k} = \left( \bar{h}_{3} + \frac{N_{\tau}}{nN_{0}} \sum_{i \in \mathcal{I}_{\tau}^{n}} \sqrt{\hat{\gamma}_{k}^i} w_i \right)^2,
    \end{split}
\end{equation}
where we let $\bar{h}_{2} = \bar{h}_{3} = 0$. Alternatively, if the intention is to use $H_{2, k}$ and $H_{3, k}$ 
as upper bounds on $H_{1, k}$, we could let $\bar{h}_{2} = \frac{1}{N_0} \Phi(\tau)\Phi(-\tau) \sum w_i$, 
$\bar{h}_{2} = \frac{1}{N_0} \sqrt{\Phi(\tau)\Phi(-\tau)}$ $\sum w_i$ where the sums are over $i \in \bar{\mathcal{I}}_{\tau}$.

\section{Numerical procedure for myopic optimization}
\label{sec:myopic_num}
In the myopic scenario, the optimal decision $d_k$ at each time step $k$ is found by solving 
the following optimization problem 
\begin{equation}
    \label{eq:myopic_optimal}
    d_k = \argmin_{d \in \mathbb{D}} J_{i, k}(d) \textit{ for } i = 1, 2, \textit{or } 3,
\end{equation}
where $J_{i, k}(d)$ is the relevant acquisition function as defined in \eqnref{eq:acquisition}.
We propose a procedure where we make use of a UT-MCIS approximation of $J_{i, k}(d)$ to find 
an approximate solution to \eqnref{eq:myopic_optimal}. 
This will build on the approximation of $H_{i, k}$ introduced in \secref{sec:approx_H}, 
but where we now also make use of the predictive model $\delta$ to approximate 
expectations with respect to future values of $H_{i, k+1}$.

In \secref{sec:myopic_pobmodel} and \secref{sec:myopic_acq} we present how the UT-MCIS 
approximation of $J_{i, k}(d)$ is obtained, and in \secref{sec:myopic_conv} we propose
a criterion for determining when the sequence of experiments should be stopped. 
The final algorithm is summarized in \secref{sec:myopic_alg}

\subsection{The probabilistic model $(\hat{\xi}_{k}, \hat{\delta}_{k})$}
\label{sec:myopic_pobmodel}
Starting with some probabilistic model $(\xi_k, \delta_k)$, recall that $\xi_k$ represents uncertainty about 
the \emph{performance} of the system under consideration, and $\delta_k$ represents 
uncertainty with respect to \emph{outcomes} of certain \emph{decisions}.
We have already discusses how to obtain a finite-dimensional approximation of $\xi_k$, 
and likewise, this will also be needed for $\delta_k$.

Assuming $\delta_k$ is square integrable, we will make use of the same type of finite-dimensional 
approximation as the one introduced for $\xi_k$ in \secref{sec:finite_dim_approx}.
In this way, we end up with two finite-dimensional $\mathcal{E}_k$-measurable random variables $\E^{\xi}$ and 
$\E^{\delta}$, which in turn determine the 
approximations $\hat{\xi}_k(\x, \E^{\xi})$ and $\hat{\delta}_k(d, \E^{\delta})$, 
where both $\hat{\xi}_k(\x, \e)$ and $\hat{\delta}_k(d, \e)$ are deterministic functions for $\e$ fixed. 
Here $\E^{\xi}$ and $\E^{\delta}$ are generally not independent. 

\begin{remark}
    Note that if $\delta(d)$ is a function 
    of some of the uncertain sub-components of $\xi$, then we might already have a finite-dimensional approximation of 
    $\delta$ available. 
    
    Consider for instance the model in \exref{ex:obs_proc} and the discussion 
    in the end of \secref{sec:finite_dim_approx}. In this case, $\hat{\xi}$ is obtained as a function 
    of the finite-dimensional approximation $\hat{y}_1(\x, \E)$ of a sub-component $\widetilde{y_{1}}(\x)$, 
    and $\delta(d)$ is given as $\delta(d(\x)) = \widetilde{y_{1}}(\x) + \epsilon(\x)$. 
    Hence, all we need is to find a finite-dimensional representation of the noise $\epsilon(\x)$.
    But observational noise such as $\epsilon(\x)$ is often described as a function of $\x$ and 
    some $1$-dimensional random variable, in which case no additional approximation will be needed. 
\end{remark}

We will let $(\hat{\xi}_{k}, \hat{\delta}_{k})$ denote the finite-dimensional approximation of 
$(\xi_k, \delta_k)$ corresponding to a finite-dimensional random variable $\E = (\E^{\xi}, \E^{\delta})$, 
and where $(\hat{\xi}_{0}, \hat{\delta}_{0})$ is the initial model that is used as input 
for determining the first decision $d_1$.

\begin{remark}
    In the canonical case where a surrogate $\tilde{y}(\x)$ is used to represent 
    some unknown function $y(\x)$, an initial set of experiments is often performed to 
    establish $\tilde{y}(\x)$ before any sequential strategy is started. 
    For instance, in the case where evaluation of $y(\x)$ means running deterministic 
    computer code, it is normal to set up a space-filling initial design using 
    e.g. Latin Hypercube Sampling. 
    
    When $\tilde{y}(\x)$ is a Gaussian process model as described in \appref{app:GP}, 
    specific mean and covariance functions may also be selected based on knowledge or assumptions 
    about the phenomenon that is being modelled by $y(\x)$. 
    For estimation of failure probabilities it is also convenient to make use of conservative prior mean values. 
    That is, prior to any experiment $\tilde{y}(\x)$ will correspond to a value associated with poor structural performance (small $\xi$), 
    such that $\alpha(\xi)$ will be biased towards higher failure probabilities in the absence of experimental evidence. 
    This reasonable from a safety perspective, and also numerically as larger failure probabilities are easier to estimate. 
\end{remark}

\subsection{Acquisition function approximation}
\label{sec:myopic_acq}
To find an approximate solution to the optimization problem \eqnref{eq:myopic_optimal},
we will replace the acquisition function $J_{i, k}(d)$ with an approximation $\hat{J}_{i, k}(d)$.
Recall that $J_{i, k}(d)$ as defined in \eqnref{eq:acquisition} is a function of $\expectK{H_{i, k+1}}{k, d}$, where 
$E_{k, d}$ is the conditional expectation with respect to $\mathcal{E}_{k}$ with $d_k = d$.
In \secref{sec:approx_H} we introduced an approximation $H_{i, k}$, and we will make use of the same 
idea to approximate $\expectK{H_{i, k+1}}{k, d}$. 

Assume $k$ experiments have been performed, giving rise to the model $(\xi_{k}, \delta_{k})$ and the approximation $(\hat{\xi}_{k}, \hat{\delta}_{k})$.
If we consider the $k$-th decision $d_k = d$, then $H_{i, k+1}$ is a priori a $\delta_k(d)$-measurable random variable. 
That is, $H_{i, k+1}$ is a function of $\delta_k(d)$, 
and we are interested in the expectation $\expectK{H_{i, k+1}}{k, d} = \expect{H_{i, k+1}(\delta_k(d))}$. 
To approximate this quantity, we can make use of $(\hat{\xi}_{k}, \hat{\delta}_{k})$ in the place of $(\xi_k, \delta_k)$, 
in which case $H_{i, k+1}$ becomes a function of $\E$ and we can approximate its expectation using UT.

The approximate acquisition functions are then given as 
\begin{equation}
    \label{eq:approx_acq}
    \hat{J}_{i, k}(d) = \lambda(d) \widehat{E}_{{k, d}}[\hat{H}_{i, k+1}],
\end{equation}
where $\widehat{E}_{{k, d}}[\hat{H}_{i, k+1}]$ is obtained as follows:\\

\noindent \textbf{Generating samples of} $\bm{\hat{\xi}_{k+1}}$\\
\noindent Let $\{ (v_{j}^{\xi}, \e_{j}^{\xi}) \ | \ j = 1, \dots, M^{\xi} \}$ and $\{ (v_{m}^{\delta}, \e_{m}^{\delta}) \ | \ m = 1, \dots, M^{\delta} \}$ denote sigma-points as introduced in \secref{sec:UT_epistemic}
for $\E^{\xi}$ and $\E^{\delta}$ respectively. 
We then let $\{ (\x_i, w_i, $ $\hat{\eta}_i) \ | \ i \in \mathcal{I} \}$, $\mathcal{I} = \{ 1, \dots, N_0 \}$ be a set of samples generated as described in \secref{sec:sampling}, 
where $\hat{\eta}_i$ is obtained using the UT approximation of $\hat{\xi_k}(\x_i, \E^{\xi})$.
As for the approximation of $H_{i, k}$ discussed in \secref{sec:UT_MCIS_approx}, 
we let $\mathcal{I}_{\tau} = \{ (\x_i, w_i, \hat{\eta}_i) \ | \hat{\eta}_i < \tau \}$
and define the subset $\mathcal{I}_{\tau}^n \subseteq \mathcal{I}_{\tau}$ of size $n$.

The approximations of $\expectK{H_{i, k+1}}{k, d}$ for $i = 1, 2$ and $3$ will all be based on 
samples of $\hat{\xi}_{k+1}$ of the form 
\begin{equation}
    \label{eq:xi_k1_samples}
    \hat{\xi}^{m, i, j}_{k+1} = \hat{\xi}_{k+1}(\x, \e^{\xi}_j, d, \e^{\delta}_m),
\end{equation}
where $\hat{\xi}_{k+1}(\x, \e^{\xi}_j, d, \e^{\delta}_m)$ is the finite-dimensional approximation 
of $\xi_k | d_k = d, o_k = \hat{\delta}(\e^{\delta}_m)$ evaluated at $(\x, \e^{\xi}_j)$.
The scalar $\hat{\xi}^{m, i, j}_{k+1}$ is computed for all $j = 1, \dots, M^{\xi}$, $m = 1, \dots, M^{\delta}$
and $i = \in \mathcal{I}_{\tau}^n$.
As in \secref{sec:UT_MCIS_approx} we set $\bar{h}_2 = \bar{h}_3 = 0$ and compute $\bar{h}_{1}$ as in \eqnref{eq:h_bar} with $f_i = \ind{\hat{\eta}_i \leq 0}$ for $i \notin \mathcal{I}_{\tau}$.\\

\noindent \textbf{The UT-MCIS approximation of} $\bm{\expectK{H_{1, k+1}}{k, d}}$\\
\noindent The approximation $\widehat{E}_{{k, d}}[\hat{H}_{1, k+1}]$ is just a weighted sum 
of the terms in \eqnref{eq:xi_k1_samples}, but for clarity we present it in the following three steps
\begin{align}
&\text{MCIS of } \alpha(\hat{\xi}_{k+1}):\nonumber \\ & \hat{\alpha}^{m, j}_{k+1} = 
\bar{h}_{1} + \frac{N_{\tau}}{nN_{0}} \sum_{i \in \mathcal{I}_{\tau}^{n}}
    \ind{\hat{\xi}^{m, i, j}_{k+1} \leq 0} w_i, \\
&\text{UT of } H_{1, k+1}:\nonumber \\ &\hat{H}_{1, k+1}^{m} = \sum_{j=1}^{M^{\xi}} v_{j}^{\xi}
(\hat{\alpha}^{m, j}_{k+1})^2 - \left( \sum_{j=1}^{M^{\xi}} v_{j}^{\xi}
\hat{\alpha}^{m, j}_{k+1} \right)^2, \\
&\text{UT of } \expectK{H_{1, k+1}}{k, d}:\nonumber \\ &\widehat{E}_{{k, d}}[\hat{H}_{1, k+1}] = 
\sum_{m=1}^{M^{\delta}}  v_{m}^{\delta} \hat{H}_{1, k+1}^{m}. 
\label{eq:UT-MCIS-H1}
\end{align}\\

\noindent \textbf{The UT-MCIS approximation of} $\bm{\expectK{H_{2, k+1}}{k, d}}$ \textbf{and} $\bm{\expectK{H_{3, k+1}}{k, d}}$\\
\noindent The weighted sums that gives the approximations of $\expectK{H_{2, k+1}}{k, d}$ and 
$\expectK{H_{3, k+1}}{k, d}$ can be obtained as follows
\begin{align}
    &\text{UT of } E[\hat{\xi}_{k+1}(\x_i)]: \; \; \; \hat{\mu}^{i, m}_{k+1} = \sum_{j=1}^{M^{\xi}} v_{j}^{\xi} \hat{\xi}^{m, i, j}_{k+1}, \\
    &\text{UT of } \text{var}[\hat{\xi}_{k+1}(\x_i)]:\nonumber\\ &(\hat{\sigma}^{i, m}_{k+1} )^2 = \sum_{j=1}^{M^{\xi}} v_{j}^{\xi} (\hat{\xi}^{m, i, j}_{k+1} - \hat{\mu}^{i, m}_{k+1})^2, \\
    &\text{Using } \Phi \text{ to approximate } \hat{\gamma}_{k+1}(\xi_i):\nonumber\\ & \hat{\gamma}_{k+1}^{i, m} = 
    \Phi(\hat{\eta}_{k+1}^{i, m})\Phi(-\hat{\eta}_{k+1}^{i, m}), \ \  \hat{\eta}_{k+1}^{i, m} = \hat{\mu}^{i, m}_{k+1} / \hat{\sigma}^{i, m}_{k+1} \\[2ex]
    &\text{MCIS of } H_{2, k+1}:\nonumber\\ & \hat{H}_{2, k+1}^{m} =
    \bar{h}_{2} + \frac{N_{\tau}}{nN_{0}} \sum_{i \in \mathcal{I}_{\tau}^{n}} \hat{\gamma}_{k+1}^{i, m} w_i, \\
    &\text{MCIS of } H_{3, k+1}:\nonumber\\ & \sqrt{\hat{H}_{3, k+1}^{m}} =
    \bar{h}_{3} + \frac{N_{\tau}}{nN_{0}} \sum_{i \in \mathcal{I}_{\tau}^{n}} \sqrt{\hat{\gamma}_{k+1}^{i, m}} w_i, \\
\end{align}
and where $\widehat{E}_{{k, d}}[\hat{H}_{2, k+1}]$ and $\widehat{E}_{{k, d}}[\hat{H}_{3, k+1}]$ are 
obtained with the same formula as for $\widehat{E}_{{k, d}}[\hat{H}_{1, k+1}]$ in \eqref{eq:UT-MCIS-H1}.

\begin{remark}
    The number of model updates and function evaluations needed to generate the set $\{\hat{\xi}^{m, i, j}_{k+1}\}$ 
    are $M^{\delta}$ and $n M^{\xi} M^{\delta}$. We can view this as a discretization of the system dynamics, 
    where there are only $ M^{\delta}$ possible future scenarios corresponding to the decision $d_k = d$, 
    which are given by the model updates $\xi_k \rightarrow \xi_{k+1}(\e^{\delta}_m) = \xi_k | d_k = d, o_k = \hat{\delta}_k(\e^{\delta}_m)$. The samples in \eqnref{eq:xi_k1_samples} are the ones needed for 
    approximating the measure of residual uncertainty corresponding to $\xi_{k+1}(\e^{\delta}_m)$ for each 
    $m = 1, \dots, M^{\delta}$.
    
    Moreover, although the approximations $\widehat{E}_{{k, d}}[\hat{H}_{i, k+1}]$ are presented as 
    weighted sums of the $n M^{\xi} M^{\delta}$ terms $\hat{\xi}^{m, i, j}_{k+1}$, this can also 
    be obtained from a sequence of nested loops for a more memory efficient implementation. 
    See for instance the schematic illustration in \figref{fig:schematic_approx} below. 

    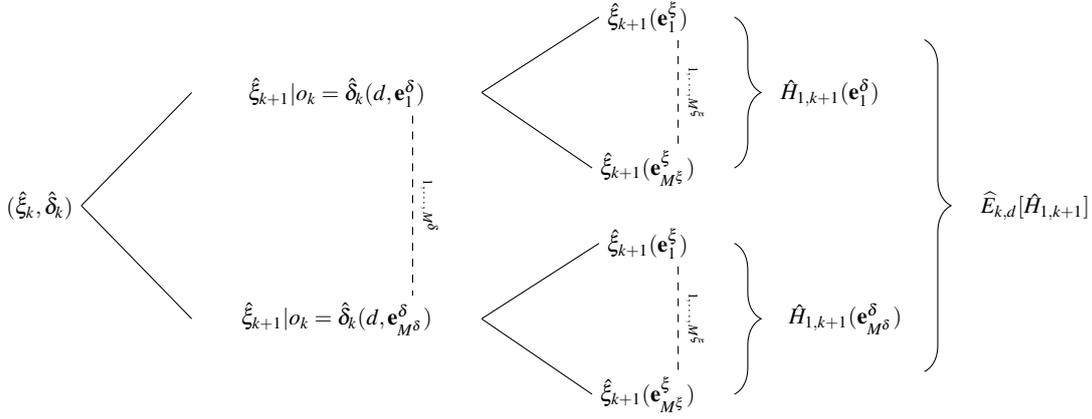
\begin{figure*}[ht]
        \centering
            \begin{tikzpicture}
                \tikzstyle{txtnode} = [rectangle, anchor=west, minimum width=3.8cm, inner sep=0];
                
                \node at (0, -0.5) (mk) {$(\hat{\xi}_k, \hat{\delta}_k)$ };
                \node[txtnode] at (2, 1) (mk1_1) {$\hat{\xi}_{k+1} | o_k = \hat{\delta}_k(d, \e_1^{\delta})$ };
                \node[txtnode] at (2, -2) (mk1_2) {$\hat{\xi}_{k+1} | o_k = \hat{\delta}_k(d, \e_{M^{\delta}}^{\delta})$ };
                \draw (mk.east) -- (mk1_1.west);
                \draw (mk.east) -- (mk1_2.west);
                
                \node[right of = mk1_1, yshift = -0.2cm] (tmp1) {};
                \node[right of = mk1_2, yshift = 0.2cm] (tmp2) {};
                \draw (tmp1) -- (tmp2) [dashed] node [midway, above, sloped] () {\tiny{$1, \dots, M^{\delta}$}};
                
                \node at (8, 2) (mk2_1) {$\hat{\xi}_{k+1}(\e^{\xi}_1)$ };
                \node at (8, 0) (mk2_2) {$\hat{\xi}_{k+1}(\e^{\xi}_{M^{\xi}})$ };
                \draw (mk1_1.east) -- (mk2_1.west);
                \draw (mk1_1.east) -- (mk2_2.west);
                
                \node at (8, -1) (mk2_3) {$\hat{\xi}_{k+1}(\e^{\xi}_1)$ };
                \node at (8, -3) (mk2_4) {$\hat{\xi}_{k+1}(\e^{\xi}_{M^{\xi}})$ };
                \draw (mk1_2.east) -- (mk2_3.west);
                \draw (mk1_2.east) -- (mk2_4.west);
                
                \node[right of = mk2_3, xshift = -0.6cm, yshift = -0.2cm] (tmp1) {};
                \node[right of = mk2_4, xshift = -0.6cm, yshift = 0.2cm] (tmp2) {};
                \draw (tmp1) -- (tmp2) [dashed] node [midway, above, sloped] () {\tiny{$1, \dots, M^{\xi}$}};
                \draw [decorate,decoration={brace,amplitude=10pt,mirror,raise=4pt},yshift=0pt] 
                (9,-3) -- (9,-1) node [black,midway,xshift=1.6cm] (Hd1) {$\hat{H}_{1, k+1}(\e_{M^{\delta}}^{\delta})$};
                
                \node[right of = mk2_1, xshift = -0.6cm, yshift = -0.2cm] (tmp1) {};
                \node[right of = mk2_2, xshift = -0.6cm, yshift = 0.2cm] (tmp2) {};
                \draw (tmp1) -- (tmp2) [dashed] node [midway, above, sloped] () {\tiny{$1, \dots, M^{\xi}$}};
                \draw [decorate,decoration={brace,amplitude=10pt,mirror,raise=4pt},yshift=0pt] 
                (9,0) -- (9,2) node [black,midway,xshift=1.4cm] (Hd1) {$\hat{H}_{1, k+1}(\e_1^{\delta})$};
                
                \draw [decorate,decoration={brace,amplitude=10pt,mirror,raise=4pt},yshift=0pt] 
                (11.5,-2.7) -- (11.5,1.7) node [black,midway,xshift=1.6cm] {$\widehat{E}_{{k, d}}[\hat{H}_{1, k+1}]$};
                
            \end{tikzpicture}
        \caption{Illustration of how $\widehat{E}_{{k, d}}[\hat{H}_{1, k+1}]$ is obtained using UT 
        over epistemic uncertainties. Here $\hat{H}_{1, k+1}(\e_m^{\delta})$ for $m = 1, \dots, M^{\xi}$ is obtained from the MCIS estimates of $\alpha(\hat{\xi}_{k+1}(\e^{\xi}_j))$.}
        \label{fig:schematic_approx}
    \end{figure*}
\end{remark}

\subsection{Stopping criterion}
\label{sec:myopic_conv}
For design strategies that make use of heuristic acquisition functions, it can be challenging to 
determine an appropriate stopping criterion. Here, we have considered the approximation $\hat{H}_{1, k}$ which 
has a natural interpretation. Hence, even if we make use of a criteria such as $\hat{H}_{2, k}$ or $\hat{H}_{3, k}$
to determine the next optimal decision, it makes sense to use $\hat{H}_{1, k}$ as an indicator of when 
the potential uncertainty reduction from future experiments is diminishing. 

We will let $\widehat{E}[\hat{\alpha}_k]$ and $\hat{H}_{1, k}$ be given as in \eqnref{eq:H1k_approx_2}, and define
\begin{equation}
    \label{eq:pof_cov}
    \hat{V}_k = \frac{\sqrt{\hat{H}_{1, k}}}{\widehat{E}[\hat{\alpha}_k]}.
\end{equation}
Then $\hat{V}_k$ is the UT-MCIS approximation of the \emph{coefficient of variation} of the failure probability $\alpha_k$ with respect to epistemic uncertainty. We will let $\hat{V}_k \leq V_{\text{max}}$ for some 
threshold $V_{\text{max}}$ serve as a criterion for stopping the myopic iteration procedure, in the case 
where a predefined maximum number of iterations $K_{\text{max}}$ has not already been reached. 

\begin{remark}
    The coefficient of variation is often used as a numerical criterion for convergence in Monte Carlo simulation.
    In structural reliability analysis, a coefficient of variation below $0.05$ is often used as an acceptable 
    level for failure probability estimation. 
    
    Note also that the criterion $\hat{V}_k \leq V_{\text{max}}$ for arbitrary $V_{\text{max}} \geq 0$ implicitly
    assumes that the epistemic uncertainty can be reduced to zero in the limit. If this is not the case, one might instead
    consider stopping when $\hat{V}_k$ is no longer decreasing. A different stopping criterion is also considered 
    in \secref{sec:example_4}.
\end{remark}

\subsection{Algorithm}
\label{sec:myopic_alg}

The complete procedure for myopic optimization is summarized in \algref{alg:myopic}.
Note that for simplicity the number of MCIS samples $N_0$ and $n$ are specified as input, 
but one may also consider deciding these using \eqnref{eq:split_MC} and \eqnref{eq:split_MC_var}. 
Using a standard technique in Monte Carlo simulation, one could keep increasing $N_0$ and $n$ 
until the coefficient of variation ($\text{std} / \text{mean}$) of the relevant estimator is sufficiently small.

\begin{algorithm}
    \DontPrintSemicolon
    \SetKwInOut{Input}{input}
    
    \Input{Model and sigma-points: $(\hat{\xi}_0, \hat{\delta}_0)$ and $\{ (v^{\xi}_{j}, \e^{\xi}_{j}) \}$, $\{ (v^{\delta}_{j}, \e^{\delta}_{j}) \}$. \newline 
    Number of samples for UT-MCIS and threshold: $N_0, n \in \mathbb{N}$ and $\tau > 0$, \newline 
    Max number of iterations and convergence criteria: $K_{\text{max}}$ and $V_{\text{max}}$.}

    \For{$k=0$ \KwTo $K_{\text{max}} - 1$}{
        (1) Generate samples $\{ (\x_i, w_i, \hat{\eta}_i) \}$ as described in \secref{sec:sampling} and compute 
        $\bar{h}_1 = \frac{1}{N_0} \sum_{|\hat{\eta}_i| \geq \tau} \ind{\hat{\eta}_i \leq 0} w_i.$\;
        (2) Compute $\hat{V}_k$ as in \eqref{eq:pof_cov} \;
        \eIf{$\hat{V}_k > V_{\text{max}}$}{
            (3) Compute the set $\{ \hat{\xi}^{m, i, j}_{k+1} \}$ as in \eqnref{eq:xi_k1_samples} and 
            define the function $\hat{J}_{i, k}(d)$ as in \eqnref{eq:approx_acq} \newline
            (for i = 1, 2, or 3 depending on the acquisition function of choice)\;
            (4) Find the optimal decision: $d_k = \argmin_{d \in \mathbb{D}} \hat{J}_{i, k}(d)$ \;
            (5) Make decision $d_k$ and obtain $(d_k, o_k)$\;
            (6) Update model $(\hat{\xi}_{k+1}, \hat{\delta}_{k+1}) = (\hat{\xi}_{k}, \hat{\delta}_{k}) | (d_k, o_k)$ \;
        }{
            Break. Convergence has been reached before $K_{\text{max}}$ iterations.
        }
    }
\caption{Myopic optimization}
\label{alg:myopic}
\end{algorithm}

\section{Numerical experiments}
\label{sec:num_exp}
Here we present a few numerical experiments using the algorithm for myopic 
optimal design presented in \secref{sec:myopic_alg}. 
Four experiments are presented, each with it's own objective: 
\begin{itemize}
    \item[1)] \secref{sec:example_1}: A toy example in 1d for conceptual illustration of the sequential design procedure.
    \item[2)] \secref{sec:example_2}: A hierarchical model with multiple 'expensive' sub-components.
    \item[3)] \secref{sec:example_3}: A non-hierarchical benchmark problem for comparison against 
    alternative strategies.
    \item[4)] \secref{sec:example_4} - A model that is more in resemblance of a realistic 
    application in structural reliability analysis, where we introduce different \emph{types} of decisions
    by considering both probabilistic function approximation and Bayesian inference of model parameters 
    through measurements with noise. 
\end{itemize}
All numerical experiments have been performed using \algref{alg:myopic} with the parameters 
$\tau = 3$, $N_0 = 10^4$, $n = 10^3$ and $V_{\text{max}} = 0.05$.
The probabilistic surrogate models used in the examples are all Gaussian process (GP) models 
with Mat\'ern $5/2$ covariance. 
A short summary of the relevant Gaussian process theory is given in \appref{app:GP}.

\subsection{Example 1: Illustrative 1d example}
\label{sec:example_1}
To illustrate the myopic procedure, we present a simple 1d example similar to the one 
given in \citep{Bect:2012:Sequential_design}, where we aim to emulate the limit state function 
\begin{equation}
\begin{split}
    g(x) &= 1 - \bigl( (0.4x - 0.3)^2 + \exp(-11.534|x|^{1.95})\\
    &+ \exp(-5(x-0.8)^2) \bigr).
\end{split}
\end{equation}
We assume that $g(x)$ can be evaluated at any $x \in \mathbb{R}$ without error, but that 
function evaluations are expensive. We will let $\xi(x)$ be the probabilistic surrogate in the 
form of a Gaussian process, where we use a prior mean $\mu(x) = -0.5$ together with a Mat\'ern $5/2$ covariance
function with fixed kernel variance $\sigma_c^2 = 0.1$ and length scale $l = 0.5$. 

\begin{figure*}[ht]
    \centering
    \includegraphics[width=0.75\linewidth, trim={0cm 0cm 0cm 0cm},clip]{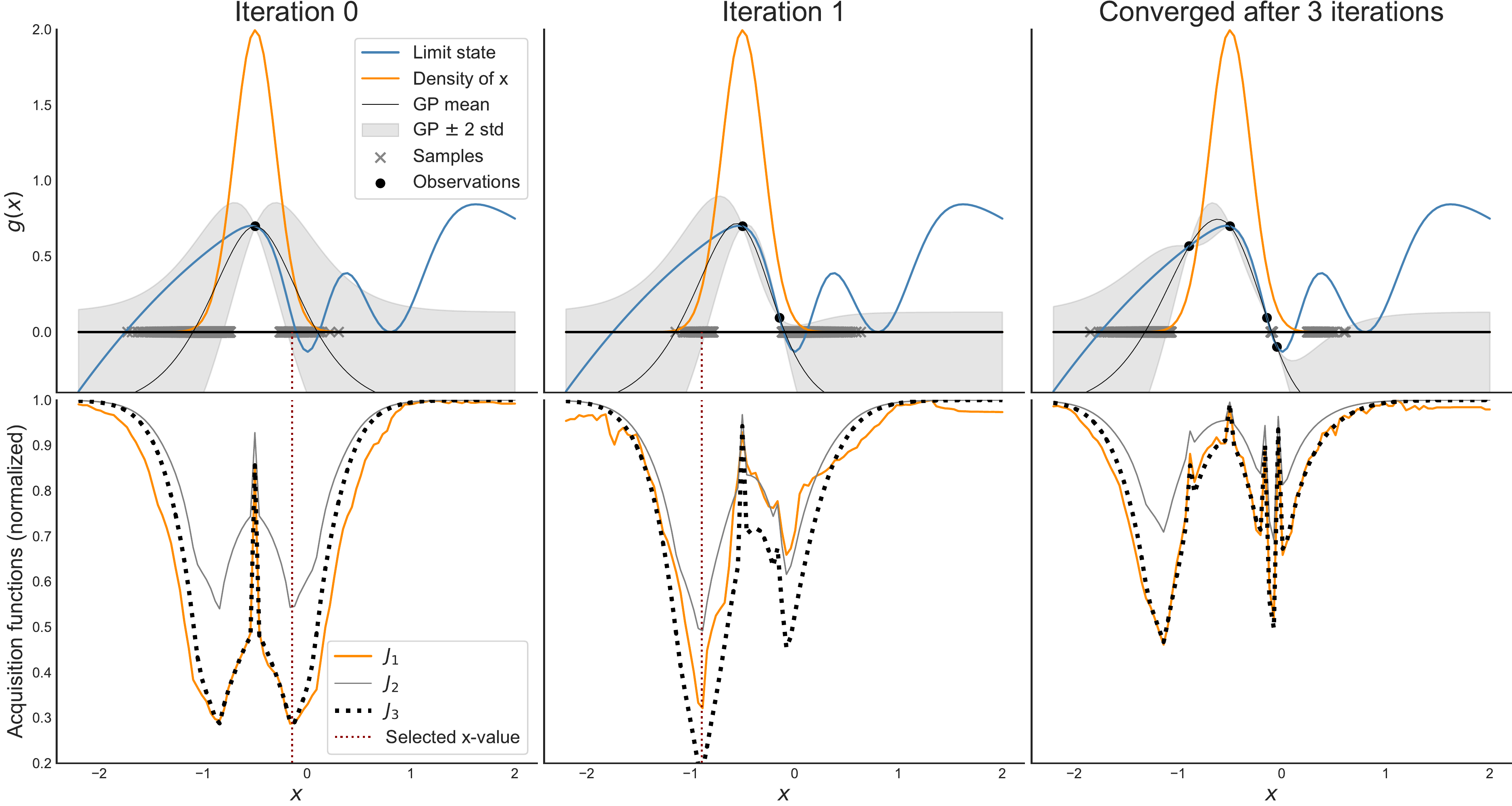}
    \caption{
        (Example 1) The top row shows the true limit state function $g(x)$, the probability density of $X$, 
        and the mean $\pm \ 2$ standard deviations of the GP $\xi_k$ for $k = 0, 1$ and $3$. The samples indicated 
        with {\color{gray}$\times$} on the $x$-axis are used in the importance sampling estimates of $J_{1, k}, J_{2, k}$ and $J_{3, k}$
        that are shown in the bottom row.
    }
    \label{fig:example_1_1}
\end{figure*}

We assume that $X$ follows Normal distribution with mean $\mu_X = -0.5$ and standard deviation $\sigma_X = 0.2$, 
and our goal is to estimate $\alpha(g) = P(g(X) \leq 0)$ using only a small number of 
evaluations of $g(\cdot)$. The set of decisions is therefore $\mathbb{D} = \cup_x \{ \text{evaluate } g(x) \}$ with 
respective outcomes $o(x) = g(x)$, and a predictive model for outcomes given as $\delta(x) = \xi(x)$.

Using a large number of samples of $g(X)$ we estimate $\alpha(g) \approx 0.0234$, 
and we will consider this as the 'true' failure probability for comparison.

We initiate $\xi$ by evaluating $g(x)$ at $x = \mu_X$. For subsequent 
function evaluations, we minimize the expected variance in the failure probability. I.e. 
we minimize the acquisition function $J_{1, k}$ given in \eqnref{eq:acquisition} with $\lambda \equiv 1$.
For comparison we also evaluate $J_{2, k}$ and $J_{3, k}$, and in this example it seems that 
all three acquisition functions would perform equally well. 
\figref{fig:example_1_1} shows $\xi_k$ and the corresponding three acquisition functions 
for the first few experiments, and \figref{fig:example_1_2} shows how $\alpha(\xi_k)$ evolves 
before converging after $k = 3$ iterations. 

\begin{figure}
    \centering
    \includegraphics[width=1.0\linewidth, trim={0.4cm 2.5cm 0.4cm 3.5cm},clip]{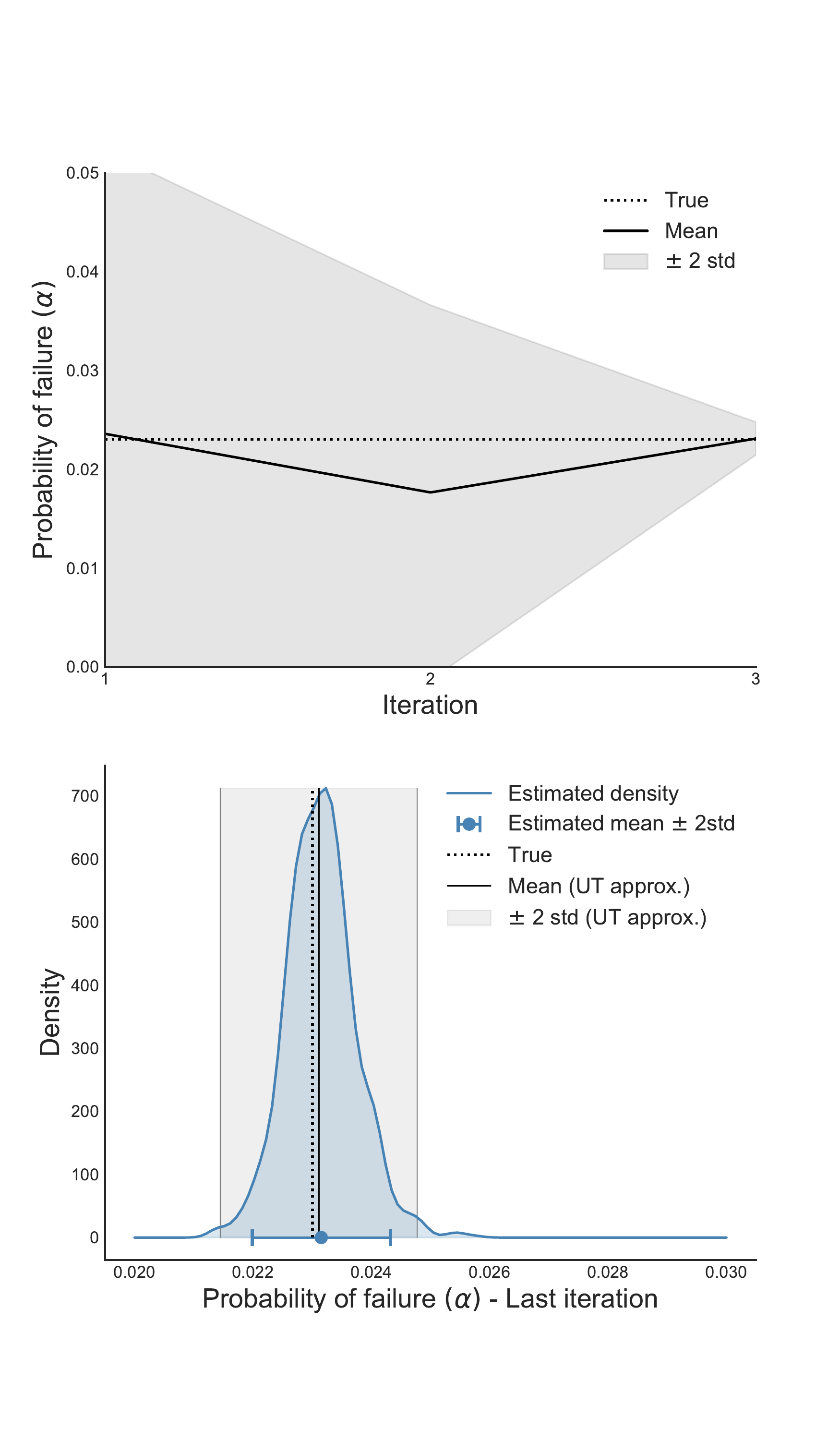}
    \caption{
        (Example 1) Top: Mean $\pm \ 2$ standard deviations of $\alpha_k$ after $k$ iterations, as computed 
        using the approximation described in \secref{sec:approx_H}.
        Bottom: The distribution of $\alpha_k$ at the final iteration $k = 3$, 
        estimated from a double-loop Monte Carlo (i.e. by sampling from $\alpha(\xi_k)$ without any approximation).
    }
    \label{fig:example_1_2}
\end{figure}

\subsection{Example 2: A 3 layer hierarchical model with 7D input}
\label{sec:example_2}
In this example we consider the structural reliability benchmark problem given as 
problem RP38 in \citep{Rozsas:2019:SRA_repo}. 
Here, $\x = (x_1, \dots, x_7) \in \mathbb{X} = \mathbb{R}^7$, 
and the limit state function $g(\x)$ can be written in terms of intermediate variables 
as follows:
\begin{equation}
    \begin{split}    
        &y_1(\x) = \frac{x_1 x_2^3}{2c_4 x_3^3}, \ \ 
        y_2(\x) = \frac{x_4^2}{c_2},\\
        &y_3(\x) = -4 x_5 x_6 x_7^2 + x_4 (x_6 + 4x_5 + 2x_6 x_7), \\
        &y_4(\x) = x_4 x_5 (x_4 + x_6 + 2x_6 x_7),
    \end{split}
\end{equation}
\begin{equation}
    z_1(\y) = \frac{c_4 y_1 y_2}{c_3},
\end{equation}
\begin{equation}
    g(\y, z_1) = 1 - \frac{c_2 c_3 z_1 + c_4 y_1 y_3}{c_1 y_4},
\end{equation}
where $c_1, c_2, c_3$ and $c_4$ are constants: 
\begin{equation}
\begin{split}
    &c_1 = 15.59 \cdot 10^4, \ \ 
    c_2 = 6 \cdot 10^4,\ \
    c_3 = 2 \cdot 10^5, \\ 
    &c_4 = 1 \cdot 10^6. 
\end{split}
\end{equation}

\figref{fig:3l_7d_problem} shows a graphical representation of how $g(\x)$ depends on the intermediate 
variables $z_1, y_1, y_3$ and $y_4$. We will assume that the functions $y_2(\x)$ and $z_1(\y)$ will require 
probabilistic surrogates, where $y_2(\x)$ and $z_1(\y)$ can be evaluated without error for any input $\x$ and $\y$.
We will also assume that there is no difference in the cost associated with evaluating $y_2$ or $z_1$, 
and our goal is to estimate the failure probability $\alpha(g)$ while keeping the total number of function evaluations of 
$y_2(\x)$ and $z_1(\y)$ as small as possible. Note that the effective domain of $y_2$ is $1$-dimensional 
and the effective domain of $z_1$ is $2$-dimensional. Hence, using surrogates for $y_2$ and $z_1$ 
should be much more efficient than building a single surrogate for $g$ using samples $g(\x_i)$. 
\begin{figure}[H]
    \centering
        \begin{tikzpicture}
            \tikzstyle{roundnode} = [circle, minimum size=0.8cm, text centered, draw=black]
            \tikzstyle{roundnode_fill} = [circle, minimum size=0.8cm, text centered, draw=black, fill = black!15]
            \tikzstyle{arrow} = [thick,->,>=stealth]

            \node (x) [roundnode] {$\x$};
            \node (y2) [roundnode_fill, below of=x, yshift=-0.5cm, xshift=-0.5cm] {$y_2$};
            \node (y1) [roundnode, left of=y2, xshift=-0.5cm] {$y_1$};
            \node (y3) [roundnode, right of=y2, xshift=0.5cm] {$y_3$};
            \node (y4) [roundnode, right of=y3, xshift=0.5cm] {$y_4$};
            \node (z1) [roundnode_fill, below of=y2, yshift=-0.5cm] {$z_1$};
            \node (g) [roundnode, below of=z1, yshift=-0.5cm] {$g$};

            \draw [arrow] (x) -- (y1);
            \draw [arrow] (x) -- (y2);
            \draw [arrow] (x) -- (y3);
            \draw [arrow] (x) -- (y4);
            \draw [arrow] (y1) -- (z1);
            \draw [arrow] (y2) -- (z1);
            \draw [arrow] (y1) -- (g);
            \draw [arrow] (z1) -- (g);
            \draw [arrow] (y3) -- (g);
            \draw [arrow] (y4) -- (g);
            
        \end{tikzpicture}
    \caption{(Example 2) Hierarchical representation of $g(\x)$. We assume that the intermediate variables $y_2(\x)$ and $z_1(\y)$ are expensive to evaluate.}
    \label{fig:3l_7d_problem}
\end{figure}
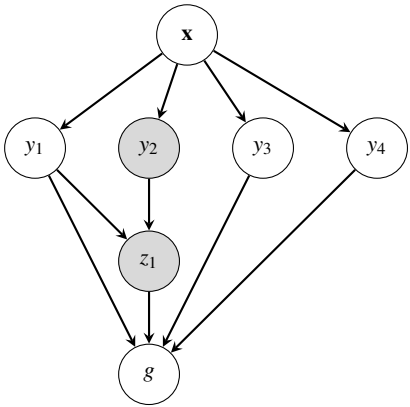

\begin{figure*}[ht]
    \centering
    \includegraphics[width=0.75\linewidth, trim={0cm 0cm 0cm 0cm},clip]{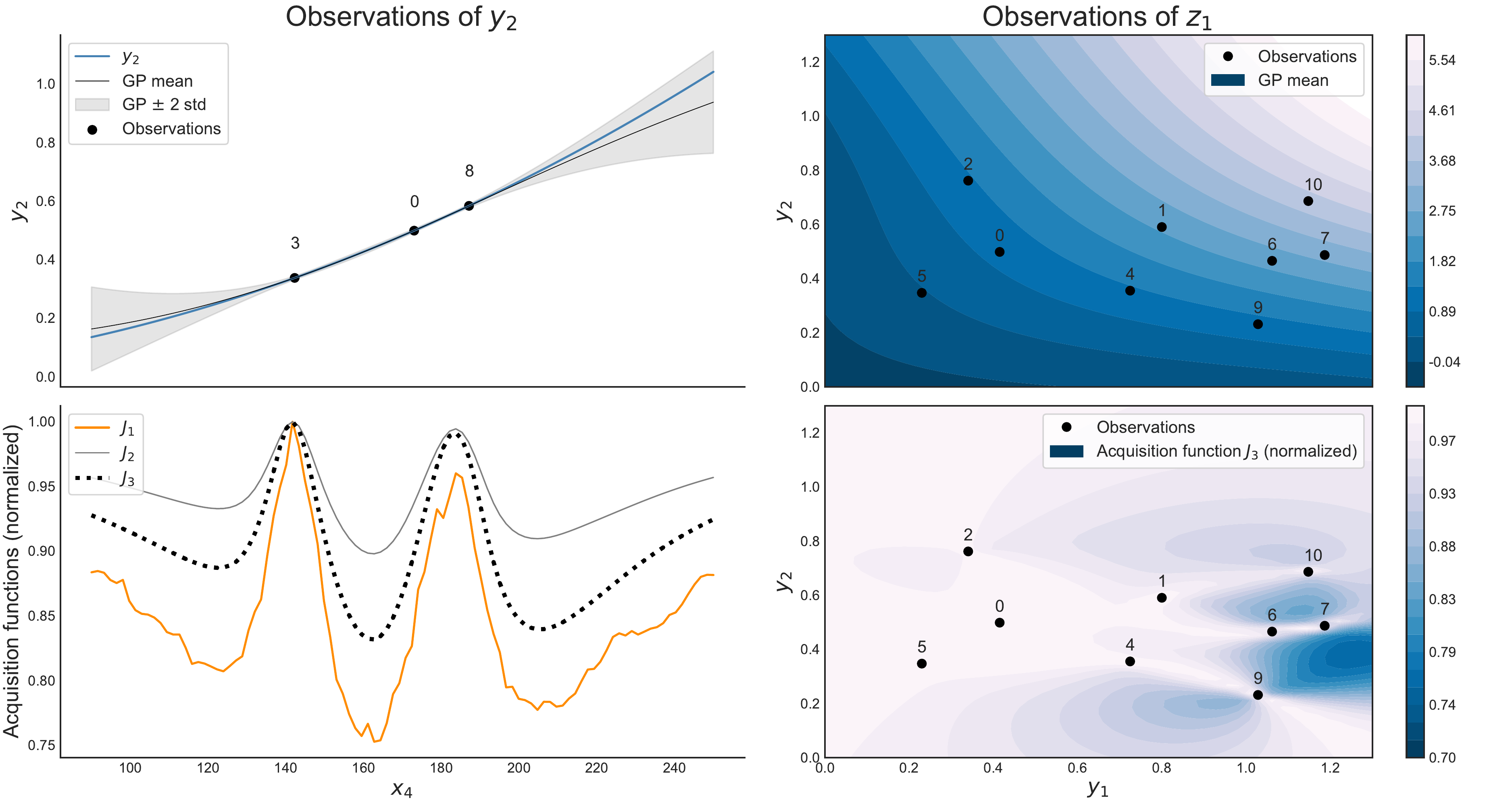}
    \caption{
        (Example 2) The top row shows the GP models $\tilde{y}_2$ and $\tilde{z}_1$ with respect to $P_k$ for $k = 10$. 
        The number above each observations is the iteration index $k$,
        and convergence was obtained after $2$ evaluations of $y_2$ and $8$ evaluations of $z_1$.
        The final acquisition functions are shown in the bottom row.
    }
    \label{fig:example_2_1}
\end{figure*}

As for the random variable $\X = (X_1, \dots, X_7)$, we assume that all $X_i$'s are independent and normally distributed, 
$X_i \sim \mathcal{N}(\mu_i, \sigma_i)$, with means 
$\mu_1 = 350$, 
$\mu_2 = 50.8$,
$\mu_3 = 3.81$,
$\mu_4 = 173$,
$\mu_5 = 9.38$,
$\mu_6 = 33.1$,
$\mu_7 = 0.036$,
and standard deviation $\sigma_i = 0.1\mu_i$. The 'true' failure probability we aim to estimate 
is $\alpha(g) \approx 8.1 \cdot 10^{-3}$.

Assuming $y_2$ and $z_1$ are expensive to evaluate, we introduce two Mat\'ern $5/2$ GP surrogates, 
$\tilde{y}_2$ and $\tilde{z}_1$. The initial kernel parameters are $(\sigma_c^2 = 0.03, l = 20)$ and 
$(\sigma_c^2 = 2, l = [0.5, 0.5])$ for $\tilde{y}_2$ and $\tilde{z}_1$ respectively. These parameters may
be updated by maximum likelihood estimation, but not until a few observations (resp. 2 and 5) have been made.
We know that large values of $y_2$ or $z_1$ will result in poor structural performance (small $g(\x)$), 
so we initiate the GP models with conservative prior means of $\mu(\x) = 1$ for $\tilde{y}_2$ and $\mu(\y) = 5$ for $\tilde{z}_1$.
Both models are initially updated with one observation each, $\tilde{y}_2(\mu_4) = y_2^0$
and $\tilde{z}_1(y_1^0, y_2^0) = z_1^0$ for $y_1^0 = y_1(\mu_1, \mu_2, \mu_3)$, $y_2^0 = y_2(\mu_4)$ and $z_1^0 = z_1(y_1^0, y_2^0)$.

In this example, we would then define $\xi(\x) = g(y_1, \tilde{z}_1, \tilde{y}_2, y_3, y_4)$. 
With respect to $\tilde{z}_1$, there is a set of possible decision for uncertainty reduction, 
namely $\mathbb{D} = \cup_{y_1, y_2} \{ \text{evaluate } z_1(y_1, y_2) \}$, 
with a corresponding set of observations $\mathbb{O} = \cup_{y_1, y_2} \{ z_1(y_1, y_2) \}$, 
and a predictive model $\delta(y_1, y_2) = \tilde{z}_1(y_1, y_2)$. Similarly, we obtain a set 
of decisions, outcomes and a predictive model for $\tilde{y}_2$, and we can update 
$\mathbb{D}, \mathbb{O}$ and $\delta(d)$ accordingly.

Convergence was reached at iteration $k = 10$, after $2$ additional evaluations of $y_2$ and $8$ additional evaluations of $z_1$.
\figref{fig:example_2_1} shows the updated surrogate models, $\tilde{y}_2 | I_k$ and $\tilde{z}_1 | I_k$ for $k = 10$, 
and \figref{fig:example_2_2} shows how $\alpha(\xi_k)$ evolves with each iteration. 
At each iteration, the next experiment was decided by minimizing the acquisition function $J_{3, k}$ 
with respect to updating each of the two surrogate models.  

\begin{figure}[H]
    \centering
    \includegraphics[width=1.0\linewidth, trim={0.4cm 2.5cm 0.4cm 3.5cm},clip]{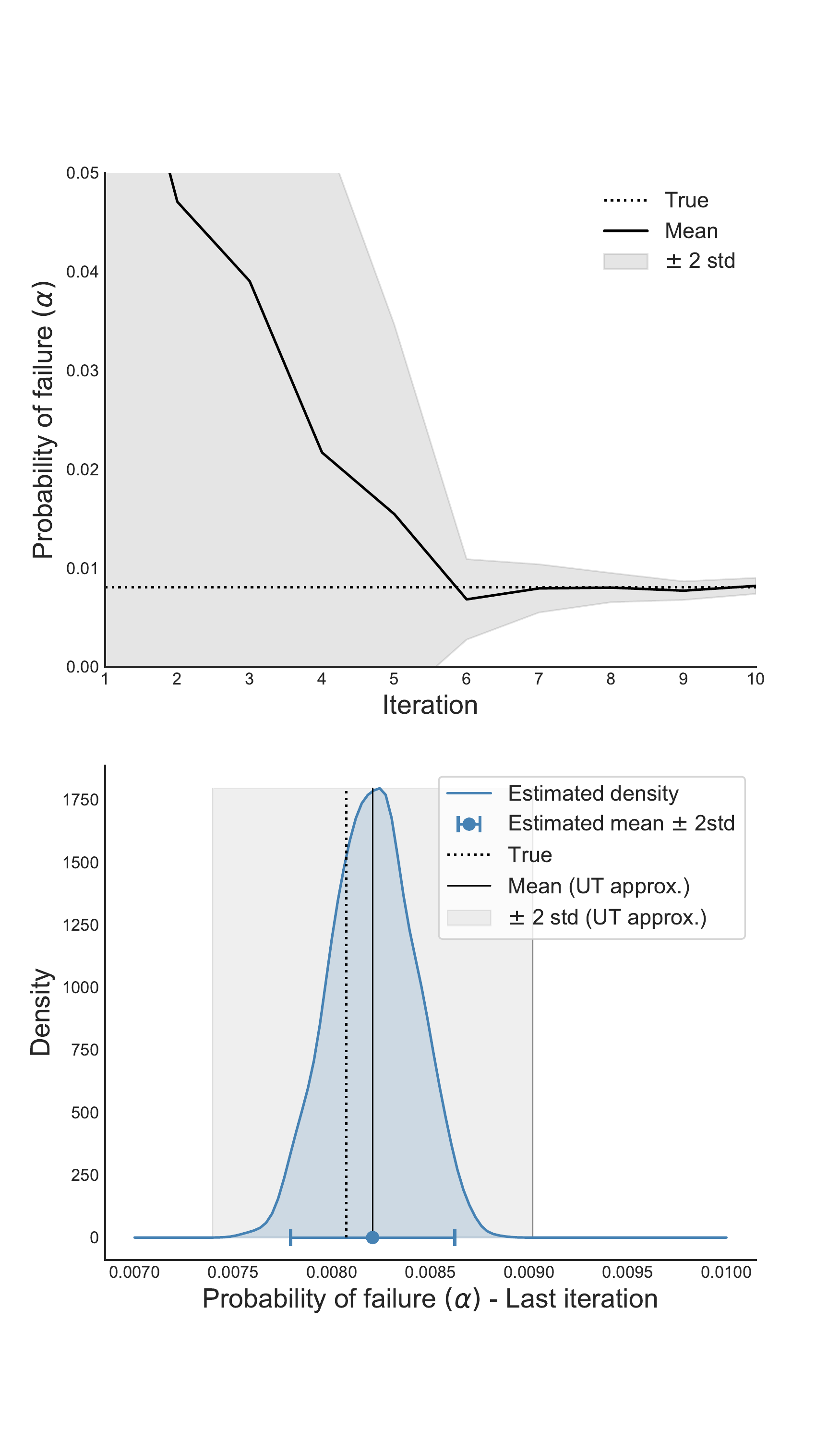}
    \caption{
        (Example 2) Top: Mean $\pm \ 2$ standard deviations of $\alpha_k$ after $k$ iterations, as computed 
        using the approximation described in \secref{sec:approx_H}.
        Bottom: The distribution of $\alpha_k$ at the final iteration $k = 10$, 
        estimated from a double-loop Monte Carlo (i.e. by sampling from $\alpha(\xi_k)$ without any approximation).
    }
    \label{fig:example_2_2}
\end{figure}

\subsection{Example 3: The 4 branch system}
\label{sec:example_3}
Here we consider the 'four branch system', a classical 2D benchmark problem given by the limit state
\begin{equation}
    \label{eq:four_branch}
    g(\x) = \min 
    \left\{
    \begin{array}{l}
      3 + 0.1(x_1 - x_2)^2 - (x_1+ x_2) \/ \sqrt{2}; \\
      3 + 0.1(x_1 - x_2)^2 + (x_1+ x_2) \/ \sqrt{2}; \\
      (x_1 - x_2) + 6 \/ \sqrt{2}; \\
      (x_2 - x_1) + 6 \/ \sqrt{2}
    \end{array}
  \right\},
\end{equation}
and where $x_1$ and $x_2$ are independent standard normal variables. 
In this example we will not write \eqnref{eq:four_branch} as an hierarchical model, 
in order to compare our method with other alternatives that are tailored to to non-hierarchical setting. 
We therefore let $\xi(\x)$ be a Gaussian process surrogate of $g(\x)$, constructed from observations $(\x_i, g(\x_i))$.
For the initial 'conservative' Gaussian process we select a prior mean of $-1$, a Mat\'ern $5/2$ kernel with parameters 
of $(\sigma_c = 1, l = 3)$, and condition on the initial observation $(\bm{0}, g(\bm{0}))$.

According to \cite{Huang:2017:SRA}, the method called AK-MCS developed by \cite{Echard:2011:AKMCS}
is considered a typical and mature approach, and should therefore be a suitable candidate for comparison. 
In addition, \cite{Echard:2011:AKMCS} also provide the results from using a number of other alternatives proposed in
\cite{Schueremans:2005:splines_and_NN}. \tblref{tbl:4branch_res} gives a summary of the results from \cite{Echard:2011:AKMCS},
together with the those obtained using the approach presented in this paper. 
\begin{table}
    \footnotesize
    \centering
    \caption{(Example 3) Table 2 from \cite{Echard:2011:AKMCS}, where we have appended the method 
    from this paper (UT-MCIS) in the bottom row. The reported failure probabilities ($\widehat{p}_f$)
    are the estimated mean $\pm$ 2 standard deviations of $\alpha(\xi_k)$ for $k = 35$ 
    (stopped at $\hat{V}_k \leq 0.1$), $k = 48$ (stopped at $\hat{V}_k \leq 0.05$), and
    $k = 65$ (stopped at $\hat{V}_k \leq 0.025$).
    } 
    \begin{tabular}{llll}
        \toprule
        Method &  $N_{\text{call}}$ & $\widehat{p}_f \times 10^{3}$  \\
        \midrule
        Monte Carlo  & $10^6$ & $4.416$ \\
        AK-MCS+U     & 126    & $4.416$    \\
        AK-MCS+EFF   & 124    & $4.412$   \\
        \\
        Directional Sampling (DS)  & 52   & $4.5$   \\
        DS + Response Surface      & 1745 & $5.0$   \\
        DS + Spline                & 145  & $2.4$   \\
        DS + Neural Network        & 165  & $4.1$   \\
        \\
        Importance Sampling (IS)  & 1469 & $4.9$   \\
        DS + Response Surface     & 1375 & $4.5$   \\
        IS + Spline               & 428  & $4.5$   \\
        IS + Neural Network       & 52   & $5.7$   \\
        \midrule
        UT-MCIS  ($V_{max} = 2.5 \%$)  & 65   & $(4.347- 4.444)$   \\
        UT-MCIS  ($V_{max} = 5 \%$)    & 48   & $(4.288- 4.470)$  \\
        UT-MCIS  ($V_{max} = 10 \%$)   & 35   & $(4.163- 4.547)$    \\
        \bottomrule
    \end{tabular}      
    \label{tbl:4branch_res}
\end{table}
Our results in \tblref{tbl:4branch_res} are obtained using \algref{alg:myopic} with three different stopping criteria, 
$V_{max} = 0.1$, $V_{max} = 0.05$ and $V_{max} = 0.025$. 
Instead of point estimates we provide prediction intervals, which in this example contain the 
'true' failure probability obtained with Monte Carlo in each scenario.
From a practical perspective, even the estimates obtained using only $35$ evaluations ($V_{max} = 0.1$) of
\eqnref{eq:four_branch} seems acceptable. 
If we were to use the mean + 2 standard deviations as a conservative estimate, 
the relative error with respect to the 'true' failure probability is still less than 3 \%.
After an additional $30$ iterations, this number drops to 0.65 \%. 
Hence, our approach performs well with respect to the alternatives considered in \citep{Echard:2011:AKMCS, Schueremans:2005:splines_and_NN}. It should also be noted that the Directional Sampling alternative in \tblref{tbl:4branch_res}
is a method that is especially suitable for the specific 'radial' type of limit state surfaces as considered here, 
and a this level of performance is not expected in general. 

Optimization was performed using the approximate acquisition function $\hat{J}_{3, k}$, and \figref{fig:example_3_1} shows
how the sequence of observations are located with respect to the failure set $g = 0$. The resulting sequence of failure
probabilities after each iteration is illustrated in \figref{fig:example_3_2}.

\begin{figure}
    \centering
    \includegraphics[width=1.0\linewidth, trim={2cm 2cm 2cm 2cm},clip]{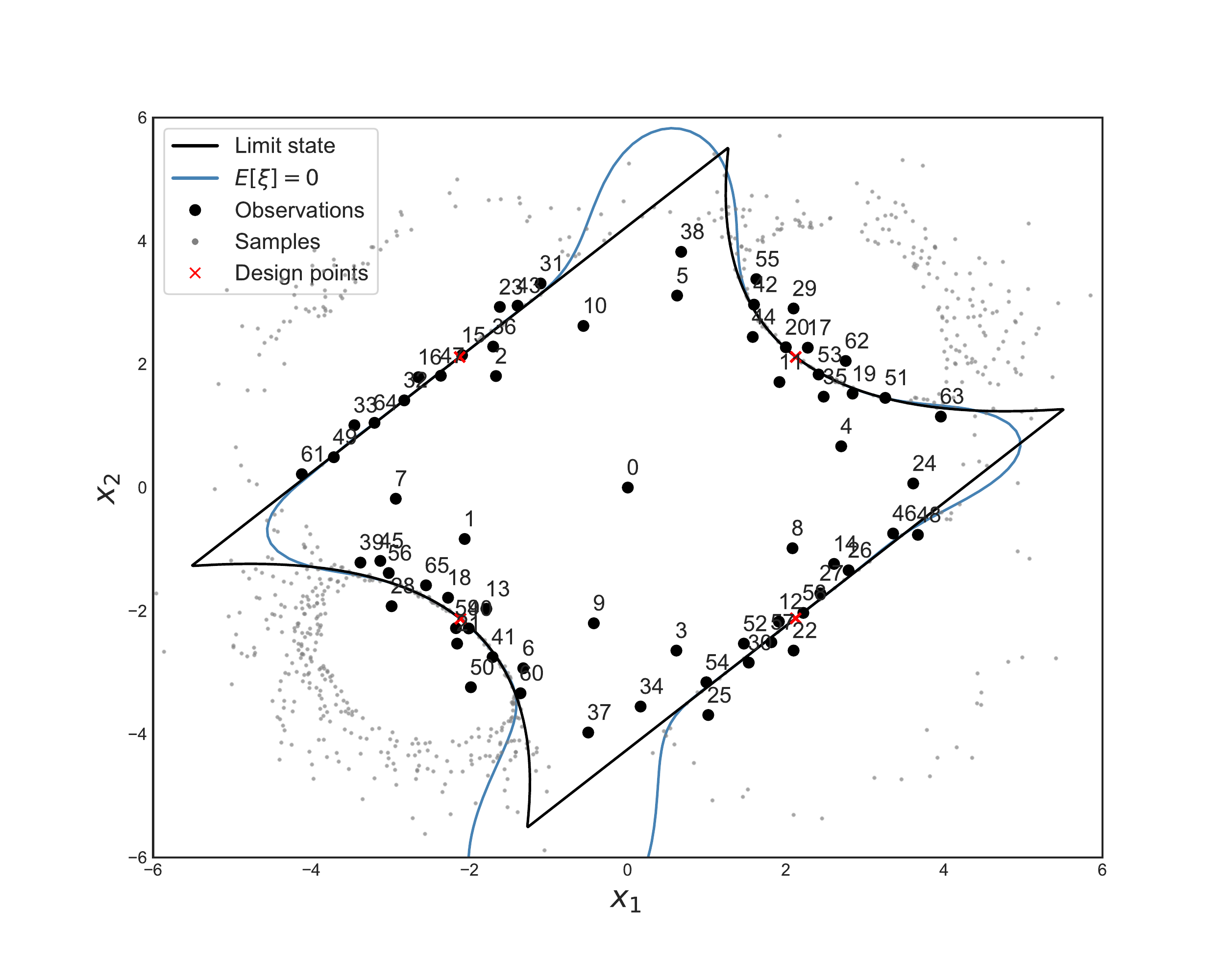}
    \caption{
        (Example 3) The limit state \eqnref{eq:four_branch} together with the expected failure surface $E[\xi_{65}]$ and the $65$ observations
        collected before convergence at $\hat{V}_{65} < 0.025$. 
        The proposal distribution $q_{\X}$ used for importance sampling is a mixture of Gaussian random variables centered 
        at the four design points ({\color{red}{$\times$}}) as described in \appref{app:sampling_dist_q}. 
        The pruned samples shown in the figure are mostly located around $E[\xi_{65}] = 0$ and in other regions where the level set $\xi_{65} = 0$ is uncertain.
    }
    \label{fig:example_3_1}
\end{figure}

\begin{figure}
    \centering
    \includegraphics[width=1.0\linewidth, trim={0.4cm 2.5cm 0.4cm 3.5cm},clip]{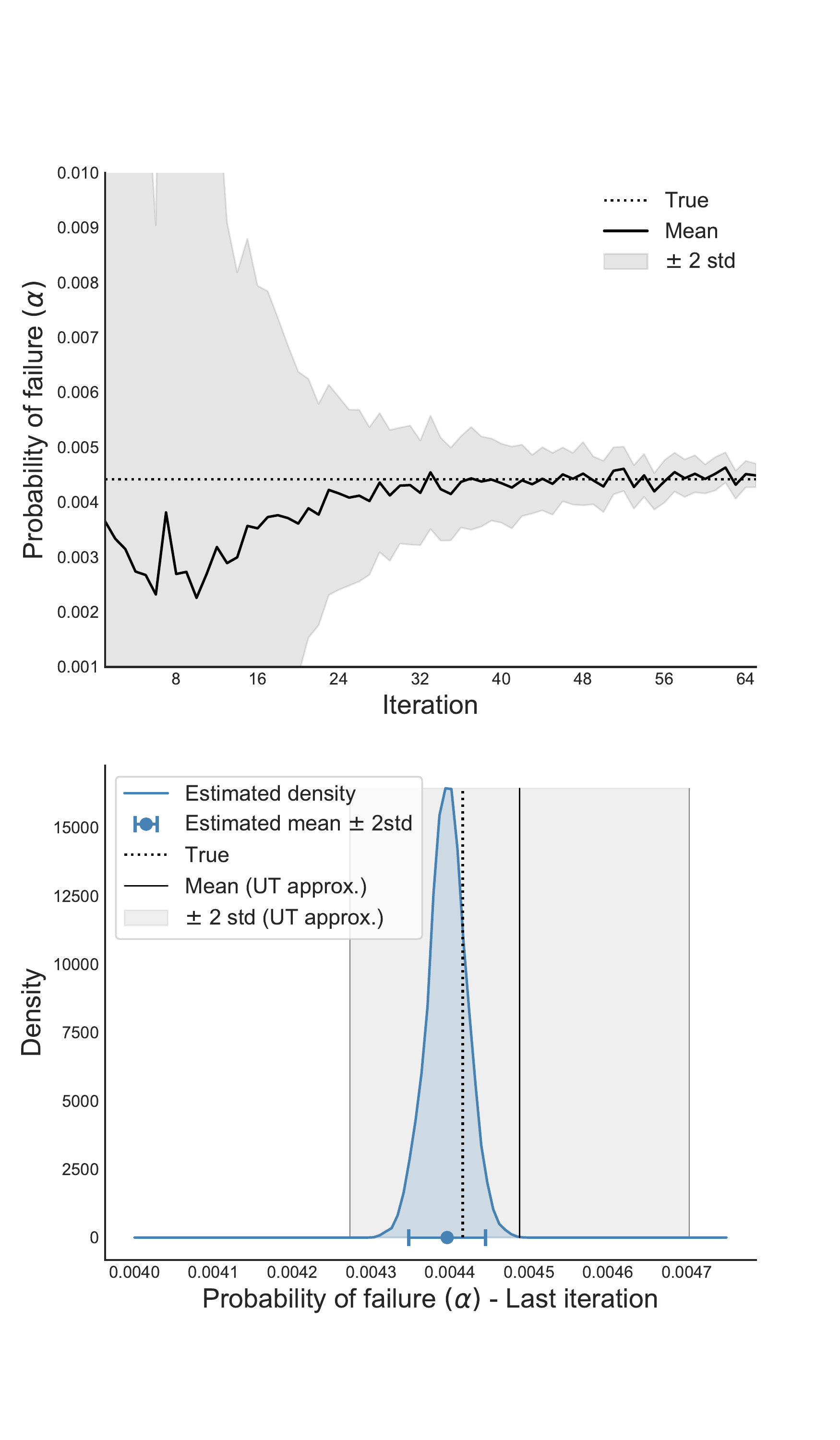}
    \caption{
        (Example 3) Top: Mean $\pm \ 2$ standard deviations of $\alpha_k$ after $k$ iterations, as computed 
        using the approximation described in \secref{sec:approx_H}.
        Bottom: The distribution of $\alpha_k$ at the final iteration $k = 65$, 
        estimated from a double-loop Monte Carlo (i.e. by sampling from $\alpha(\xi_k)$ without any approximation).
    }
    \label{fig:example_3_2}
\end{figure}

\subsection{Example 4: Corroded pipeline example}
\label{sec:example_4}
To give an example of a scenario where there are different \emph{types} of experiments,
we consider a probabilistic model which is recommended for engineering assessment of 
offshore pipelines with corrosion \citep{RP-F101}.
The failure mode under consideration is where a pipeline bursts, 
when the pipeline's ability to withstand the high internal pressure has been reduced as a consequence of corrosion.
\\

\noindent \textbf{The structural reliability model}\\
\noindent \figref{fig:corrpipe_problem} shows a graphical representation of the structural reliability model.
Here, a steel \textbf{pipeline} is characterised by the outer diameter ($D$ [mm]), the wall thickness ($t$ [mm]) 
and the ultimate tensile strength ($s$ [MPa]). In this example we let 
$D = 800$, $t \sim \mathcal{N}(\mu = 20, \text{cov} = 0.03)$, and 
$s \sim \mathcal{N}(\mu = 545, \text{cov} = 0.06)$, where cov is the coefficient of variation 
(standard deviation / mean).

The pipeline contains a rectangular shaped \textbf{defect} with a given depth ($d$ [mm]) and length ($l$ [mm]), where $l \sim \mathcal{N}(\mu = 200, \sigma^2 = 1.49)$ and where $d$ will 
be inferred from observations. 

Given a pipeline $(D, t, s)$ with a defect $(d, l)$, we can determine the pipeline's pressure 
resistance \textbf{capacity} (the maximum differential pressure the pipeline can withstand before bursting). 
We let $p_{\text{FE}}$ [MPa] denote the capacity coming from a Finite Element simulation of the physical phenomenon. 

From the theoretical capacity $p_{\text{FE}}$, we model the true pipeline capacity as 
$p_c = X_{\text{m}} \cdot p_{\text{FE}}$, where $X_{\text{m}}$ is the \textbf{model discrepancy}, $X_{\text{m}} \sim \mathcal{N}(\mu_{\text{m}}, \sigma_{\text{m}}^2)$.
For simplicity we have assumed that $X_{\text{m}}$ does not depend on the type of pipeline and defect, 
and we will also assume that $\sigma_{\text{m}} = 0.1$, where only the mean $\mu_{\text{m}}$ will be inferred from observations 
of the form $p_c / p_{\text{FE}}$.

Finally, the pressure \textbf{load} (in MPa) is modelled as a Gumbel distribution with mean $15.75$ and standard deviation $0.4725$. The limit state representing the transition to failure is then given as $g = p_c - p_d$.\\

\begin{figure}
    \centering
        \begin{tikzpicture}[scale=0.9, every node/.style={scale=0.9}]
            \tikzstyle{roundnode} = [circle, minimum size=0.9cm, text centered, draw=black]
            \tikzstyle{roundnode_fill} = [circle, minimum size=0.9cm, text centered, draw=black, fill = black!15]
            \tikzstyle{arrow} = [thick,->,>=stealth]

            \node (D) [roundnode] {$D$};
            \node (t) [roundnode, right of=D, xshift=0.5cm] {$t$};
            \node (UT) [roundnode, right of=t, xshift=0.5cm] {$s$};
            \node (dt) [roundnode_fill, right of=UT, xshift=0.75cm] {$d$};
            \node (l) [roundnode, right of=dt, xshift=0.5cm] {$l$};
            
            \node (mod_sig) [roundnode, below of=D, yshift=-1.0cm, xshift=-0.0cm] {$\sigma_{\text{m}}$};
            \node (mod_mu) [roundnode_fill, left of=mod_sig, xshift=-0.5cm] {$\mu_{\text{m}}$};
            \node (Xm) [roundnode, below of=mod_sig, yshift=-0.5cm] {$X_{\text{m}}$};
            
            \node (Pcap_FE) [roundnode_fill, below of=UT, yshift=-1.0cm] {$p_{\text{FE}}$};
            \node (Pcap) [roundnode, below of=Pcap_FE, yshift=-0.5cm] {$p_{c}$};
            \node (pd) [roundnode, right of=Pcap, xshift=1.5cm] {$p_{d}$};
            \node (g) [roundnode, below of=Pcap, yshift=-0.5cm, xshift = 1.5cm] {$g$};
            
            \draw [arrow] (D) -- (Pcap_FE);
            \draw [arrow] (t) -- (Pcap_FE);
            \draw [arrow] (UT) -- (Pcap_FE);
            \draw [arrow] (dt) -- (Pcap_FE);
            \draw [arrow] (l) -- (Pcap_FE);
            \draw [arrow] (mod_sig) -- (Xm);
            \draw [arrow] (mod_mu) -- (Xm);
            \draw [arrow] (Xm) -- (Pcap);
            \draw [arrow] (Pcap_FE) -- (Pcap);
            \draw [arrow] (Pcap) -- (g);
            \draw [arrow] (pd) -- (g);
            
            \node [above of = t, yshift = -0.25cm,] {\footnotesize{\textbf{Pipeline}}};
            \node [above of = dt, yshift = -0.25cm, xshift=0.8cm] {\footnotesize{\textbf{Defect}}};
            \node [above of = mod_mu, yshift = -0.25cm, xshift=0.8cm] {\footnotesize{\textbf{Model discrepancy}}};

            \node [above of = pd, yshift = -0.25cm, xshift = 0.0cm] {\footnotesize{\textbf{Load}}};
            \node [above of = Pcap, yshift = -0.25cm, xshift = -0.8cm, text width=1cm, align=center, rotate=90] {\footnotesize{\textbf{Capacity}}};

        \end{tikzpicture}
    \caption{(Example 4) Graphical representation of the corroded pipeline structural reliability model. 
    The shaded nodes $d$, $p_{\text{FE}}$ and $\mu_{\text{m}}$ have associated epistemic uncertainty 
    that can be reduced through experiments.}
    \label{fig:corrpipe_problem}
\end{figure}
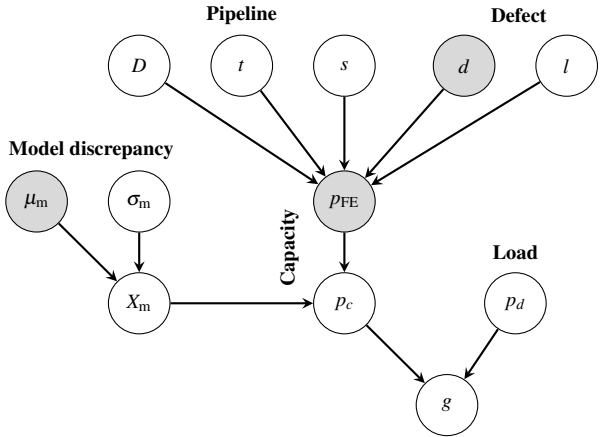

\noindent \textbf{Different types of decisions}\\
We consider the following three types of decisions
\begin{enumerate}
    \item \textbf{Defect measurement:} We assume that unbiased measurements of the relative depth $d/t$ can be obtained.
    The measurements come with additive Gaussian noise, 
    $\epsilon \sim \mathcal{N}(0, \sigma_{d/t}^2)$, and we will assume that three types of inspection 
    are available, corresponding to $\sigma_{d/t} = 0.02, 0.04$ and $0.08$.
    \item \textbf{Computer experiment:} Evaluate $p_{\text{FE}}$ at some deterministic input $(D, t, s, d, l)$.
    \item \textbf{Lab experiment:} Obtain one observation of $X_{\text{m}}$.
\end{enumerate}

In order to generate synthetic data for this experiment, we assume that the true defect depth is $d = 0.3t = 6$ mm
and that $\mu_m = 1.0$. Instead of running a full Finite Element simulation to obtain $p_{\text{FE}}$, we will
make use of the simplified capacity equation in \citep{RP-F101}, in which case
\begin{equation*}
    p_{\text{FE}} = 1.05\frac{2t s}{D - t} \frac{1 - d/t}{1 - \frac{d/t}{Q}} \text{,} \ \ \  Q = \sqrt{1 + 0.31\frac{l^{2}}{Dt}}.
\end{equation*}

\noindent \textbf{Results}\\
To define the initial model $\xi_0$ we need a prior specification over the epistemic quantities $d$, 
$\mu_{\text{m}}$ and $p_{\text{FE}}$.
We let $d$ be a priori normal with mean $0.5$ and standard deviation $0.15$, and $\mu_{\text{m}}$ normal 
with mean $1.0$ and standard deviation $0.1$. Consequently, the posteriors of $d$ and $\mu_{\text{m}}$ 
(and also $X_{\text{m}}$) given any number of observations are all normal. 
The function $p_{\text{FE}}$ is replaced by
a GP surrogate with prior mean $\mu = -10$ and $\sigma_c = 10$, $l = [1, 1, 1, 1]$
Mat\'ern $5/2$ parameters, which we initiate using a single observations at the expected value of the input. 

We assume that the computer experiments are cheap compared to the lab experiments, and that the direct measurements
of $d/t$ is most expensive. To reflect these varying costs, we specify the acquisition function
\begin{equation}
    \label{eq:approx_acq_ex4}
    \hat{J}_{i, k}(d) = c(d) \frac{\widehat{E}_{{k, d}}[\hat{H}_{i, k+1}]}{\hat{H}_{i, k}},
\end{equation}
where $c(d)$ is the cost of a given decision. (Note that in \eqnref{eq:approx_acq_ex4} the variable $d$
refers to a \emph{decision}, but for the remaining part of this example $d$ will only 
refer to the \emph{defect depth}). In \eqnref{eq:approx_acq_ex4} we have normalized the expected future 
measure of residual uncertainty with the current, which gives an estimate of the expected improvement 
given a certain decision. The numerical values representing difference in costs is given by 
$c = 1$ for computer experiments, $c = 1.1$ for lab experiments, and $c = 1.11, 1.12, 1.13$ 
for measurements of $d/t$ with accuracy $\sigma_{d/t} = 0.08, 0.04$ and $0.02$ respectively.

In structural reliability analysis, the objective is not always to obtain an estimate of the failure probability 
that is as accurate as possible. A relevant problem in practice is to determine whether a structure 
satisfies some prescribed target reliability level $\alpha_{target}$. 
In this example, we aim to either confirm that the failure probability is less than the 
target $\alpha_{target} = 10^{-3}$ (in which case we can continue operations as normal), 
or to detect with confidence that the target is exceeded (and we have to intervene). 
For this purpose we intend to stop the iterative procedure if the difference between the expected and target
failure probability is at least $4$ standard deviations. In addition to the standard stopping criterion for 
convergence \eqnref{eq:pof_cov}, we therefore introduce the stopping criterion 
\begin{equation}
    \label{eq:ex_4_stopping}
    \widehat{E}[\hat{\alpha}_k] + 4\sqrt{\hat{H}_{1, k}} < \alpha_{target}, \ \text{ or } \ 
    \widehat{E}[\hat{\alpha}_k] - 4\sqrt{\hat{H}_{1, k}} > \alpha_{target}.
\end{equation}
\figref{fig:example_4_1} shows how the UT-MCIS approximation of the failure probability evolves throughout $100$ iterations. 
We have made use of $i = 3$ in \eqnref{eq:approx_acq_ex4} as we found the corresponding acquisition surface for $p_{\text{FE}}$
smoother than the alternative $i = 1$, and hence easier to minimize numerically.
The stopping criterion \eqnref{eq:ex_4_stopping} is reached after $k = 25$ iterations, and 
\figref{fig:example_4_2} shows the corresponding posteriors of the relative defect depth $d / t$ and 
the model discrepancy $X_{\text{m}}$.

\begin{figure}
    \centering
    \includegraphics[width=1.0\linewidth, trim={0.4cm 2.5cm 0.4cm 3.5cm},clip]{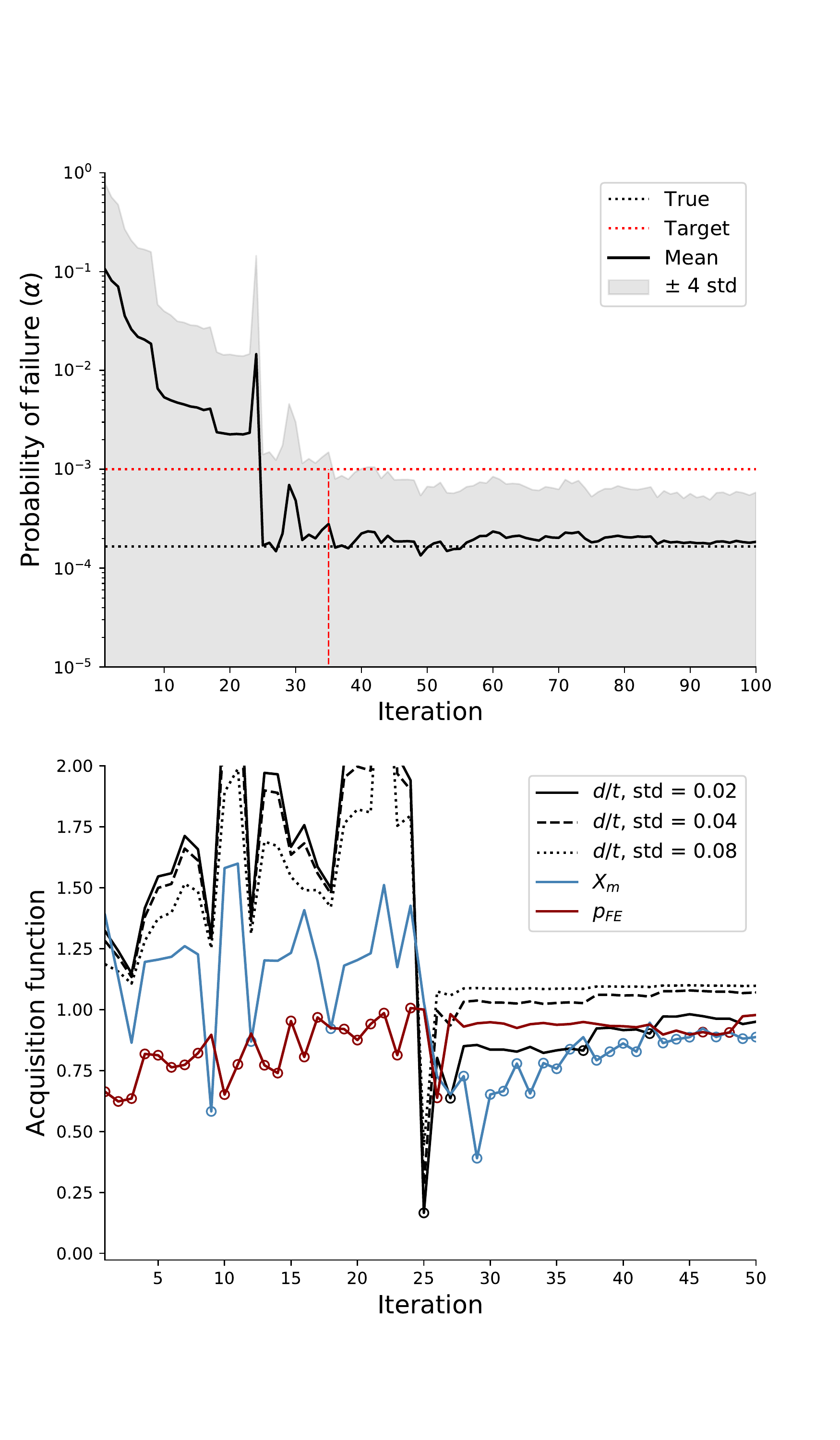}
    \caption{
        (Example 4) Top: Mean $\pm \ 4$ standard deviations of $\alpha_k$ after $k$ iterations, as computed 
        using the approximation described in \secref{sec:approx_H}. The stopping criterion \eqnref{eq:ex_4_stopping}
        is reached after $25$ iterations. 
        Bottom: The acquisition functions \eqnref{eq:approx_acq_ex4} for each type of experiment
        during the first $50$ iterations. 
    }
    \label{fig:example_4_1}
\end{figure}

\begin{figure}
    \centering
    \includegraphics[width=1.0\linewidth, trim={0.4cm 2.5cm 0.4cm 3.5cm},clip]{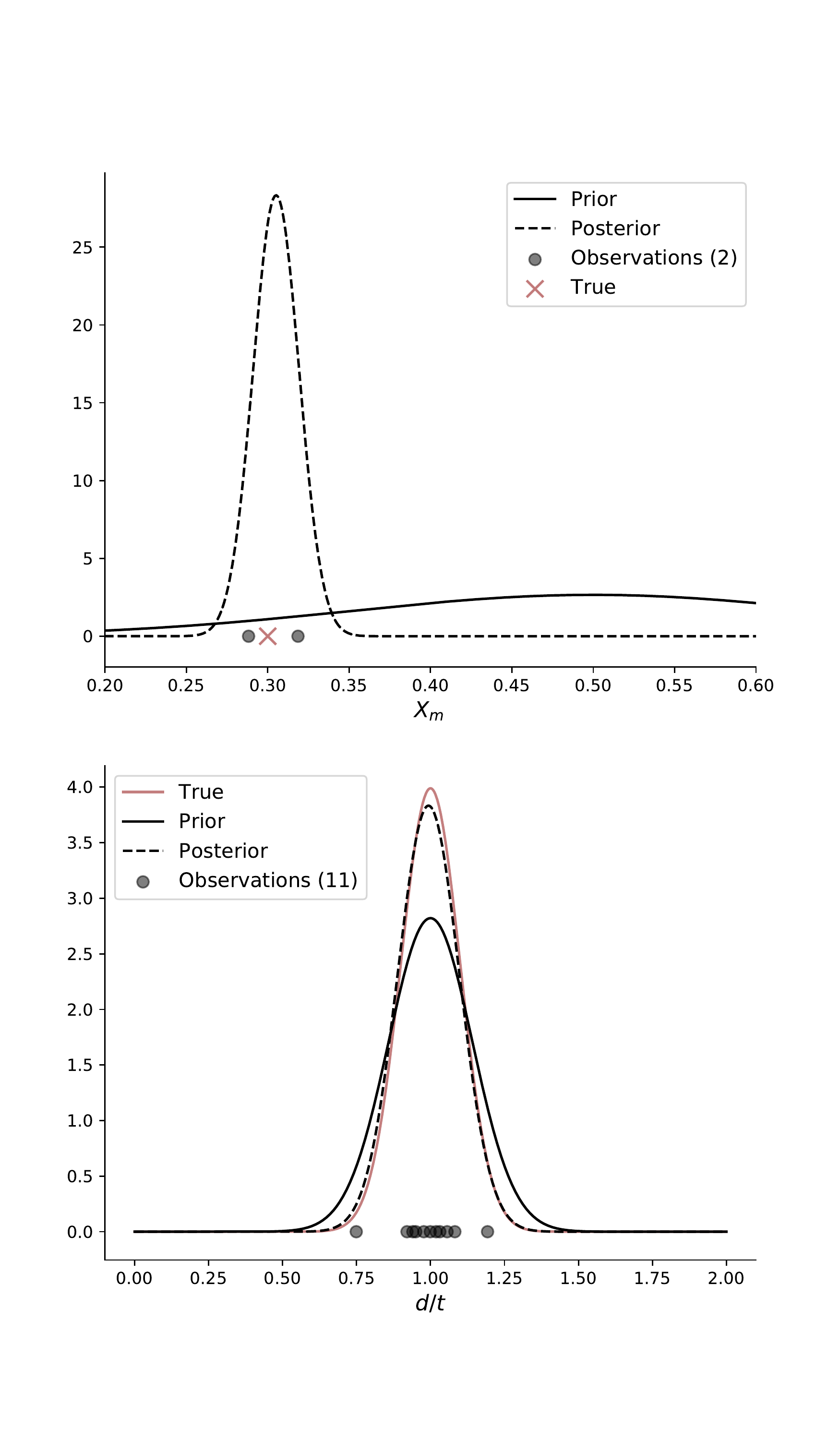}
    \caption{
        (Example 4) The posterior distributions of $d/t$ and $X_{\text{m}}$ when the stopping criterion \eqnref{eq:ex_4_stopping} 
        is reached at $k = 25$.
    }
    \label{fig:example_4_2}
\end{figure}

Throughout the examples in this paper we have initiated GP surrogate models using a single observation 
at the expected input. A different approach that is often found in practical applications is to 
initiate the GP surrogate with a space-filling design. A very common alternative is to 
make use of a Latin Hypercube sample (LHS), of size no more than $10 \ \times$ the input dimension
(although the appropriate number of samples naturally depends on how nonlinear the response is expected to be).  

\tblref{tbl:ex4_summary} shows a summary of the results from running this example with and without an
initial design consisting of $10$ LHS samples. For this example it does not seem to make any significant difference, 
but we see why the stopping criterion \eqnref{eq:ex_4_stopping} is useful, as 
on average we can conclude that the failure probability is below the target value after around $30$-$40$ iterations.
\begin{table}
    \footnotesize
    \centering
    \caption{(Example 4) Averages over $100$ runs, using $1$ versus $10$ initial observations of $p_{\text{FE}}$.} 
    \begin{tabular}{llcccc}
        \toprule
         Initial& Stop at & cov of $\alpha_k$ & \multicolumn{3}{c}{Number of observations}  \\
         design & target & $(\hat{V}_k)$ & $p_{\text{FE}}$ & $X_{\text{m}}$ & $d/t$ \\
        \midrule
        \multirow{2}{*}{$E[\X]$} & Yes & 1.39 & 23 + 1 & 10 & 2 \\
        & No & 0.63 & 46 + 1 & 47 & 7\\
        \midrule
        \multirow{2}{*}{LHS $10$} & Yes & 1.37 & 12 + 10 & 8 & 2\\
        & No & 0.90 & 45 + 10 & 48 & 7\\
        \bottomrule
    \end{tabular}      
    \label{tbl:ex4_summary}
\end{table}

We leave this numerical experiment with an important remark, which is that specifying an appropriate 
cost in \eqnref{eq:approx_acq_ex4} can be difficult. If for instance the cost related to a measurement of $d/t$ is 
set very high, then the decision to measure $d/t$ will never be taken. In this example, it is not possible to reach 
the stopping criterion given in \eqnref{eq:ex_4_stopping} without at least one such measurement, and hence, 
the myopic strategy will keep requesting measurements of $X_{\text{m}}$ and evaluations of $p_{\text{FE}}$
indefinitely, accumulating a potentially infinite cost. This is indeed a drawback of the myopic strategy, 
which could be alleviated by looking multiple steps ahead, and at least through a full dynamic programming 
implementation. 

\section{Concluding remarks}
\label{sec:concluding_remarks}
We have presented a general formulation of the Bayesian optimal experimental design problem for structural reliability 
analysis, based on separation of the aleatory uncertainty or randomness associated with a given structure, 
and the epistemic uncertainty that we wish to reduce through experimentation. 
The effectiveness of a design strategy is evaluated through a \emph{measure of residual uncertainty}, 
and efficient approximation of this quantity is crucial if we want to apply algorithms 
that search for an optimal strategy. 
Our proposed approach makes us of a pruned importance sampling scheme for subsequent estimation of 
(typically small) failure probabilities for a given epistemic realization, combined with 
the unscented transform epistemic uncertainty propagation.
In our numerical experiments, we made use of a rather naive implementation of the unscented transform, 
in the sense that the number of sigma-points is very low, and that these are determined a priori with 
a deterministic procedure. Since the alternative by \cite{Merwe:2003:sigmapoint} produced satisfactory results in all of  
our numerical examples, no further consideration was made with respect to alternative methods for 
sigma-point selection. From applications to Kalman filtering, it has been observed that this version of the 
unscented transform has a tendency to over-estimate the variance, which is something we notice also in our experiments. 

For the application we consider in this paper, we emphasize that the unscented transform is used as a \emph{proxy} for the measure 
of residual uncertainty to be used in optimization, as a numerically efficient alternative 
that should be proportional to the true objective. Hence, we view the unscented transform as a tool to
find the best decision or strategy, where we get the possibility of exploring 
many decisions approximately rather than a few exactly.
Once an optimal strategy is found, we estimate 
the corresponding measure of residual uncertainty using a pure Monte Carlo alternative which is exact in the limit.
We note that for global optimization of acquisition functions, we have used a combination of random sampling and gradient based local optimization. With this procedure, an optimization objective given by $H_{3, k}$ (and also $H_{2, k}$) is generally more suitable than $H_{1, k}$, as it is less susceptible to noise coming from Monte Carlo estimation (see for instance \figref{fig:example_2_1}). 
On the other hand, $H_{1, k}$ has a natural interpretation (the variance of the failure probability), and is therefore 
a better measure for evaluating convergence, or for early stopping as discussed in \secref{sec:example_4}.

Although we focus on the estimation of \emph{a failure probability} in this paper, 
many of the main ideas should also be applicable 
for other estimation objectives using models where a hierarchical structure can be utilized. 
For instance, when $\alpha_k$ is some other quantity of interest depending on the random variable $g(\X)$, 
not necessarily given by an indicator function as in \eqnref{eq:pof_def}. 
For the applications considered in this paper, we have assumed that an isoprobabilistic transformation of $\X$ 
to a standard normal variable is available, which is often the case in structural reliability models. 
We make use of this assumption only to apply some well known techniques for failure probability estimation, 
but note that other alternatives, for instance the one presented in \appref{sec:app:crude_sampling}, 
can be used instead. 

There are several ways to improve the methodology presented in this paper. 
For instance, other alternatives of the unscented transform could be applied, see for instance \cite{Menegaz:2015:Systematization_UT}, 
or the parameters determining the set of sigma-points used in this paper could be optimized 
as in \citep{Turner:2010:learn_sigma_points}.

As seen in Section \ref{sec:example_4}, the myopic, one-step look ahead strategy, can make it impossible to reach the stopping criterion of the algorithm. As mentioned, a way to avoid this problem is by looking at the whole dynamic programming formulation \eqref{eq:dyn_prog}. However, this formulation suffers from the curse of dimensionality. Since the myopic formulation corresponds to truncating the sum in the dynamic programming formulation \eqref{eq:dyn_prog} to only one term, it is of interest to study methods where more terms of the sum are included (multi-step look ahead). How much better do the estimations get by including an extra term, and how much does the computation time increase? Is it possible to determine an optimal choice of truncation where we weigh accuracy and computation time against one another? 
Different ways of finding approximate solutions to the complete dynamic programming problem has been the focus of much research 
within areas such as operations research, optimal control and reinforcement learning, 
and trying out some of these alternatives is certainly interesting avenue for further research.

Another interesting topic worth investigating is how the numerical examples in this paper compare to the case where we estimate the \emph{buffered failure probability} instead of the classical failure probability. Buffered failure probabilities were introduced by Rockfellar and Royset \cite{RockafellarRoyset} as an alternative to classical failure probabilities in order to take into account the tail distribution of the performance function. See Dahl and Huseby \cite{DahlHuseby} for an application of this concept to structural reliability analysis.

One may also discuss whether using heuristic optimization objectives chosen to approximate the variance is reasonable. By essentially focusing on minimizing the variance of the failure probability, we say that all deviations from the true value is equally bad. In reality, overestimating the failure probability can be costly, but is not nearly as problematic as underestimating the failure probability. Because of this, the variance may not be the most appropriate measure of risk. It would be interesting to also derive heuristic optimization objectives based on approximating other risk measures.

These questions are of interest, but beyond the scope of the current paper, and the topics are left for future research.

\begin{acknowledgements}
This work has been supported by grant 276282 from the Norwegian Research Council and 
DNV GL Group Technology and Research (Christian Agrell),
and by project 29989 from the Research Council of Norway as part of the SCROLLER project (Kristina Rognlien Dahl).
\end{acknowledgements}

\bibliographystyle{spbasic_nourl}
\bibliography{references}

\newpage
\begin{appendices}

\section{Gaussian process surrogate models}
\label{app:GP}
Here we briefly review the Gaussian process (GP) surrogate model in its canonical form, 
for Bayesian nonparametric function estimation.
For a broader overview of the relevant theory see e.g. \cite{Rasmussen:2005:GPML}.
For applications related to uncertainty quantification (UQ) dealing with deterministic computer simulations, \cite{Kennedy:2001:Cal} is a classical reference.

Let $f : \mathbb{X} \rightarrow \mathbb{R}$ denote a function that we want to estimate, 
and assume that a set of $k$ observations $(\x_1, \y_1), \dots, (\x_k, \y_k)$ have been made. 
For instance, evaluating $f(\x)$ could correspond to running a deterministic (and time consuming) 
computer simulation, in which case noiseless observations, $\y_i = f(\x_i)$, can be obtained. 
Alternatively, $f(\x_i)$ could correspond to some physical experiment, resulting in a noise perturbed 
observation $\y_i$. A GP surrogate model $\xi$ of $f$ is a tool to make inference 
about the value of $f(\x^*)$ for any new input $\x^* \in \mathbb{X}$, conditioned on the 
set of observations $(\x_1, \y_1), \dots, (\x_k, \y_k)$.

A Gaussian process $\xi$ indexed by some set $\mathbb{X}$ is defined by the property that 
for any finite subset $\{ \x_1, \dots, \x_N \}$ of $\mathbb{X}$, $\left(\xi(\x_1), \dots \xi(\x_N)\right)$ is an $N$-dimensional Gaussian random variable. 
We will view $\xi$ as a Gaussian distribution over real-valued functions defined on $\mathbb{X}$ 
(such as $f(\x)$).
Here $\mathbb{X}$ can be arbitrary but typically $\mathbb{X}$ is a subset of $\mathbb{R}^n$.
The GP $\xi$ is uniquely defined by its mean function $\mu(\x) = E[\xi(\x)]$ and 
covariance function $c(\x, \x') = E[(\xi(\x) - \mu(\x))(\xi(\x') - \mu(\x'))]$.
Hence, any function $\mu : \mathbb{X} \rightarrow \mathbb{R}$ paired with a 
positive semidefinite function $c:\mathbb{X} \times \mathbb{X} \rightarrow \mathbb{R}$ defines a GP, 
which we will denote $\xi \sim \mathcal{GP}(\mu, c)$.

Let $X = (\x_1, \dots, x_k)$, $Y = (\y_1, \dots, \y_k)$ denote the observations and
assume that $\y_i$ comes with additive Gaussian noise, $\y_i = f(\x_i) + \epsilon_i$ 
where $\epsilon_i$ are i.i.d. zero-mean Gaussian with common variance $\sigma^2$.
In this scenario, the conditional process $\xi | X, Y$ is still a Gaussian process. 
In particular, if $X^* = (\x^*_1, \dots, x^*_m)$ contains $m$ new input locations in $\mathbb{X}$, 
then the distribution of $\bm{\xi}^* = \xi(X^*) = (\xi(\x^*_1), \dots, \xi(\x^*_m))$ 
given the observations $X, Y$
is Gaussian with the following mean 
\begin{equation}
    \label{eq:GP_def_1}
    \begin{split}
        &E[\bm{\xi}^* | X, Y] = \mu(X^*)\\ &+ c(X^*, X)[c(X, X) + \sigma^2 I_m]^{-1} (Y - \mu(X)), 
        \end{split}
\end{equation}
and covariance
\begin{equation}
    \label{eq:GP_def_2}
    \begin{split}
        &\text{Cov}(\bm{\xi}^* | X, Y) = c(X^*, X^*)\\ &- c(X^*, X)[c(X, X) + \sigma^2 I_m]^{-1} c(X^*, X)^T.
    \end{split}
\end{equation}
Here $\mu(X^*)$ and $\mu(X)$ are vectors with elements $\mu(\x^*_i)$ and $\mu(\x_i)$ respectively, 
$I_m$ is the $m \times m$ identity matrix, and $c(X^*, X^*)$, $c(X^*, X)$ and $c(X, X)$ 
have elements $c(X^*, X^*)_{i, j} = c(\x^*_i, \x^*_j)$, $c(X^*, X)_{i, j} = c(\x^*_i, \x_j)$ 
and $c(X, X)_{i, j} = c(\x_i, \x_j)$.

For the scenario where observations are noiseless, $\y_i = f(\x_i)$, the distribution 
of $\bm{\xi}^* | X, Y$ is obtained with $\sigma = 0$ in \eqref{eq:GP_def_1}-\eqref{eq:GP_def_2}.

To define a GP prior $\xi \sim \mathcal{GP}(\mu, c)$ over functions $f : \mathbb{X} \rightarrow \mathbb{R}$, we need to specify the mean and covariance function. 
These are generally given as $\mu(\x | \theta)$ and $c(\x, \x' | \theta)$, conditioned on some parameter $\theta$. An appropriate value for $\theta$ is usually found through maximum likelihood 
estimation or cross validation using the set of observations $X, Y$. 
A fully Bayesian approach could also be pursued, where the posterior calculations typically 
involve Markov chain Monte Carlo as the formulation in \eqref{eq:GP_def_1}-\eqref{eq:GP_def_2} is not sufficient. 
In the numerical experiments presented in this paper, we have made use of a constant mean function 
and a Mat\'ern $5/2$ covariance function using plug-in hyperparameters 
$\theta = (\sigma_c, l_1, \dots, l_n)$
determined from maximum likelihood estimation.
The Mat\'ern $5/2$ covariance function for $\x, \x' \in \mathbb{R}^n$ is defined as 
\begin{equation}
\begin{split}
    &c(\x, \x') = \sigma_c^2 (1 + \sqrt{5}r + \frac{5}{3}r^2) e^{-\sqrt{5}r}, \\ 
    &r = \sqrt{\sum_{i = 1}^n \left( \frac{x_i - x'_i}{l_i} \right)^2}.
\end{split}
\end{equation}

\section{The sampling distribution $q_{\X}$}
\label{app:sampling_dist_q}
Here we present some further details on how the set of samples $\{ \x_i, w_i \}$ in \secref{sec:sampling} can be generated. We start by reviewing some classical techniques from 
structural reliability analysis that are based on finding 'important' regions in $\mathbb{X}$. 
The sampling distribution $q_{\X}$ used in this paper is then defined in \secref{sec:app:DP_sampling}.
It is based on the assumption that $\X$ can be transformed to a standard multivariate Gaussian variable $\bm{U}$, and that $q_{\bm{U}}$ can be constructed by solving a set of constrained optimization problems in $\bm{U}$-space. For the scenario where these assumptions do not hold, we present 
an alternative approach in \secref{sec:app:crude_sampling}, which is based on a 
naive exploration of the $\mathbb{X}$-space. Although this will require evaluation of a larger 
set of samples of $\X$, no optimization is required and numerical implementation is straightforward. 

\subsection{Local approximations in SRA}
\label{sec:app:local_SRA}
In \secref{sec:approx_H} we briefly discussed the challenges with estimation of the failure 
probability $\bar{\alpha}(g)$ in \eqnref{eq:pof_def}. A different alternative often used in structural reliability analysis, is to 
approximate the performance function $g(\x)$ with a function $\hat{g}$ where $\bar{\alpha}(\hat{g})$ can be computed analytically. 
In this scenario, it is convenient to transform $\X$ to a standard normal variable $\bm{U}$. 
We will let 
\begin{equation}
    \label{eq:isoprob}
    \X \xrightarrow[]{\mathcal{T}} \bm{U} \sim N(\bm{0}, I)
\end{equation}
denote an isoprobabilistic transformation, 
where $\bm{U} = \mathcal{T}(\X)$ is multivariate standard Gaussian with $\text{dim}(\bm{U}) = \text{dim}(\X)$. 
Note that for any univariate random variable $X$ with CDF $F(X)$, a transformation of this
type available as $\mathcal{T}(X) = \Phi^{-1}(F(X))$. 
The generalization to multivariate $\X$ is the Rosenblatt transformation, where 
$\bm{U}_i = \Phi^{-1}(F_i(\X_i | \X_1, \dots, \X_{i-1}))$. 
In structural reliability problems, it is often natural to define $\X$ in terms of 
the marginal distributions and a copula, in which case the isoprobabilistic transformation 
\eqnref{eq:isoprob} can be simplified. A common alternative is to use a Gaussian copula, 
where \eqnref{eq:isoprob} can be obtained using the Nataf transformation \citep{Lebrun:2009:Nataf}. 

In the following we let $g(\uu)$ denote the function $g(\cdot)$ applied to $\x = \mathcal{T}^{-1} (\uu)$.
Methods such as FORM (First Order Reliability Method) and SORM (Second Order Reliability Method) 
make use of local approximations in the form of a linear or quadratic surface fitted to 
$g(\uu^*)$ at a certain point $\uu^* \in \mathbb{R}^n$. This point $\uu^*$ is often called the 
\emph{design point} or \emph{most probable point} (MPP), and it is defined as 
\begin{equation}
    \label{eq:MPP_opt}
    \uu^* = \argmin_{\uu \in \mathbb{R}^n} \{\norm{\uu} \ | \ g(\uu) \leq 0 \}.
\end{equation}
Observe that if $\hat{g}(\uu)$ is the first-order Taylor approximation of $g(\uu)$ at $\uu^* $, 
i.e. $\hat{g}(\uu) = g(\uu^*) + \nabla_{\uu}g(\uu^*)(\uu - \uu^*)$, then 
$\bar{\alpha}(\hat{g}) = \Phi(-\norm{\uu^*})$, and this is an upper bound on the failure 
probability if the failure set is convex in $\bm{U}$-space.

In \secref{sec:sampling} we discussed the importance sampling estimate of the failure probability
given some proposal distribution $q$. A natural candidate is to let $q$ be a distribution 
centered around the design point, $\uu^*$ in $\bm{U}$-space or $\x^* = \mathcal{T}(\uu^*)$
in $\X$-space. The alternative where the estimation is performed in $\bm{U}$-space 
with $q_{\bm{U}}(\uu) = \phi(\uu + \uu^*)$ is often used in practice. 
For a more detailed discussion around this kind of sampling, the local approximations 
and structural reliability analysis in more general, see for instance \cite{madsen2006methods} or \cite{Huang:2017:SRA}.

The constrained optimization problem \eqnref{eq:MPP_opt} plays an important role in 
structural reliability analysis. Although any general-purpose algorithm can be used,
customized algorithms that take advantage of the special form of the objective function 
are recommended. Various alternatives have been developed for this purpose, 
see for instance \cite{Gong:2011:robust_iter_alg} and the references therein. 
For the applications in this paper we have made use of the 
iHL-RF method from \cite{Zhang:1995:iHLRF}.

\subsection{The design point mixture}
\label{sec:app:DP_sampling}
We observe first that a solution to \eqnref{eq:MPP_opt} is 
not necessarily unique, and also that multiple \emph{local} minima may exists when the performance 
function is nonlinear. Most algorithms designed to solve \eqnref{eq:MPP_opt} numerically 
start with some initial guess $\uu_0$, and take iterative steps until a minimum is obtained. 
To reduce the risk of overestimating $\norm{\uu^*}$, multiple restarts with different 
(possibly randomized) initial guesses $\uu_0$ is often applied. 

Given a finite-dimensional approximation of a performance function $\hat{\xi}(\x, \E)$, 
we want to find a proposal distribution $q$ that is appropriate for a range of different
realizations $\e$ of $\E$. In particular, if $\{ (v_{j}, \e_{j}) \ | \ j = 1, \dots, M \}$
is the set of sigma-points for $\E$ as introduced in \secref{sec:UT_epistemic}, 
we want a set of samples from $q$ to be applicable for estimation of $\alpha(\hat{\xi}(\x, \e_j))$
for any $1 \leq j \leq M$.

For any $\e_j$, we will let $\uu^*_{1, j}, \dots, \uu^*_{N, j}$ denote $N$ design points in 
$\bm{U}$-space corresponding to $\hat{\xi}(\x, \e_j)$, obtained using randomized initialization. 
(Note that for methods such as iHL-RF, it is also reasonable to use $\uu^*_{i, j}$ as an 
initial guess in the search for $\uu^*_{i, j+1}$). 
We then define $\bm{Q}$ as the equal-weighted Gaussian mixture of the $NM$ random variables 
$\bm{Q}_{i, j} = \bm{U}_{i, j} + \uu^*_{i, j}$, where $\bm{U}_{i, j}$ are i.i.d. 
standard multivariate Gaussian. Sampling from $\bm{Q}$ is then straightforward, 
and importance sampling estimates can be obtained in the $\bm{U}$-space 
using $p_{\bm{U}}(\uu) = \phi(\uu)$ and $q_{\bm{U}}(\uu) = \frac{1}{NM} \sum_{i, j} \phi(\uu - 
\uu^*_{i, j})$, where $\phi$ is the multivariate standard normal density.

\subsection{A simple alternative}
\label{sec:app:crude_sampling}
The sampling strategy presented in \secref{sec:sampling} is based on 
1) generating a set of samples that should "cover relevant locations" in the input space $\mathbb{X}$, and
2) \emph{prune} the set of samples using a threshold on the measure of insignificance \eqnref{eq:eta_def}. 

The "relevant locations" in the first step is typically somewhere in the "tail" of the 
distribution of $\X$, where also the (uncertain) performance function $\hat{\xi}_k(\x)$ may 
be close to zero. In \secref{sec:app:DP_sampling} we made use of importance sampling 
around design points, which is a common technique in structural reliability analysis. 
As a simple alternative, we can let $q$ be any distribution from which it is 
easy to generate samples covering the effective support of $p_{\X}$ (i.e. a bounded domain 
where $\X$ lies with probability $\approx 1$).
For instance, assuming $\bm{U}$ is $n$-dimensional standard normal (e.g. $\bm{U} = \mathcal{T}(\X)$ 
if the isoprobabilistic transformation is still applicable), 
we could let $q$ be a uniform density on the hypercube $[-b, b]$
where $b = \Phi^{-1}(1-p_{min})$ for some absolute lower bound on the failure probability $p_{min}$.

Because the initial set of $N$ samples from $q$ will be reduced to a fixed number of $n$ samples 
after the pruning step, this is a viable alternative. 
However, in order to obtain similar importance sampling variances (see \eqnref{eq:split_MC_var}) 
as with the method in \secref{sec:app:DP_sampling}, the initial number 
of samples $N$ (and hence the number of evaluations of the pruning criterion $\eta(\x)$) will 
have to be larger. 

\section{Selecting sigma-points for the unscented transform}
\label{app:sigma_points}
Here we briefly review the method for sigma-point selection by \cite{phd:Merwe} and 
present the sigma-points used for the numerical experiments in \secref{sec:num_exp}.

According to \cite{Labbe:2014:Kalman_python}, research and industry have mostly settled on the 
version published in \citep{phd:Merwe}. 
Here, the sigma-points are given as a function of the mean and covariance matrix of 
the input variable, together with three real-valued parameters $\alpha, \beta$ and $\kappa$. 
In the case where $\bm{U}$ is a standardized $n$-dimensional random variable with $E[\bm{U}] = \bm{0}$ and $E[\bm{U}^2] = I$, 
we obtain $2n+1$ points $\uu_i$ are as follows
\begin{equation*}
    \begin{split}
        &\uu_0 = \bm{0}, \\ 
        &\uu_i = \alpha \sqrt{n + \kappa} \bm{\nu}_i, \\ 
        &\uu_{i+n} = -\uu_i, 
    \end{split}
\end{equation*}
for $i = 1, \dots, n$ where $\bm{\nu}_i = (0, \dots, 1, \dots, 0)$ is the standard unit vector in $\mathbb{R}^n$.
Two different sets of weights are used with this procedure, one for the mean and one 
for the covariance in \eqref{eq:UT_moments}. 
We denote these $v_i^m$ and $v_i^c$ respectively, and they are given as 
\begin{equation*}
    \begin{split}
        &v_0^m = 1 - \frac{n}{\alpha^2 (n + \kappa)}, \ \ v_0^c = v_0^m + 1 - \alpha^2 + \beta,\\
        &v_i^m = v_i^c = \frac{1}{2 \alpha^2 (n + \kappa)} \textit{  for } i = 1, \dots, 2n. 
    \end{split}
\end{equation*}

For Gaussian distributions, it is often recommended to set $\beta = 2$, $\kappa = 3-n$ and let $\alpha \in (0, 1]$.
In the numerical examples presented in this paper we have used this set of parameters with $\alpha = 0.9$.

\end{appendices}

\end{document}